\newcommand{\moncut}[1]{$\mathsf{cut}$}
\newcommand{\mcut}{\mathsf{cut}}
\newcommand{\cut}{\mathsf{cut}}
\newcommand{\lsum}{\vee}
\newcommand{\lprod}{\wedge}
\newcommand{\lfun}{\rightarrow}
\newcommand{\ann}[1]{\mu^{#1}}
\newcommand{\nann}[1]{\nu^{#1}}
\newcommand{\md}{\;\mathsf{mod}\;}
\newcommand{\term}[1]{\mathbf{#1}}
\NewDocumentCommand{\termSum}{O{}}{ \term{\sumSym} {#1} }
\NewDocumentCommand{\termPlus}{m O{}}{ \term{\plusSym}\IfValueT{#2}{\mathrlap{^{\!#2}}}_{\!#1} }
\NewDocumentCommand{\termTimes}{o}{ \term{\prodSym} \IfValueT{#1}{^{#1\kern-1pt}} }
\NewDocumentCommand{\termResid}{o}{ \term{\resSym} \IfValueT{#1}{^{#1}} }
\NewDocumentEnvironment{scriptdisplay}{}
    {\begin{equation*}\scriptstyle}
    {\end{equation*}}
\NewDocumentEnvironment{ssprooftree}{}
    {\scalebox{0.7}{\begin{prooftree}[compact]}
    {\end{prooftree}}}
\NewDocumentEnvironment{ssprooftree*}{}
    {\begin{prooftree*}[compact]}
    {\end{prooftree*}}
\newcommand{\ReductionArrow}{\Rightarrow}
\DeclareDocumentCommand{\reduction}{ O{} m m }{%
    \scalebox{0.7}{\begin{prooftree}[#1] #2 \end{prooftree}}%
    \;\; \ReductionArrow \;\;%
    \scalebox{0.7}{\begin{prooftree}[#1] #3 \end{prooftree}}%
  }
\NewDocumentCommand \ntStyle { m }{ \mathsf{#1} }
\NewDocumentCommand \nt { s m m }{ \IfBooleanTF{#1}{\hat{\ntStyle{N}}}{\ntStyle{N}}^{#2}_{#3} }
\NewDocumentCommand{\setof}{ m o }{\{ \, #1 \IfValueT{#2}{ : #2}\, \} }
\NewDocumentCommand{\Setof}{ m o }{\Bigl\{ #1 \IfValueT{#2}{ : #2} \Bigr\} }
\NewDocumentCommand{\SETOF}{ m o }{\Biggl\{ #1 \IfValueT{#2}{ : #2} \Biggr\} }
\NewDocumentCommand{\Sub}{mo}{[ #1\IfNoValueF{#2}{/#2} ]}
\RenewDocumentCommand{\perp}{ m }{ \neg {#1} }
\newcommand{\plusSym}{{+}}
\newcommand{\prodSym}{{\times}}
\newcommand{\sumSym}{\Sigma}
\NewDocumentCommand{\rul}{ m e{_} o }{%
    \ensuremath{\mathsf{#1}%
    \IfValueT{#2}{_{#2}}%
    \IfValueT{#3}{^{#3}}
    }}
\NewDocumentCommand{\nrul}{ m o }{
    \ensuremath{\perp{\mathsf{#1}}
    \IfValueT{#2}{^{#2}}
    }}
\NewDocumentCommand{\ruleset}{ o }{\operatorname{\mathit{Rules}}\IfValueT{#1}{(#1)}}
\NewDocumentCommand{\var}{o}{\mathit{V}\IfValueT{#1}{\!(#1)}}
\newcommand{\rset}[1]{[\![ #1 ]\!]}
\newcommand{\cns}{\mathcal{O}}
\newcommand{\orda}{a}
\newcommand{\ordc}{c}
\def\enddoc@text{%
  \vspace*{\fill}%
  \noindent
  \hfill\begin{minipage}{.58\textwidth}
  \vbox to 0pt{\vskip12pt%
    \fontsize{6}{7\p@}\normalfont\upshape
    \everypar{}%
    \noindent\fontencoding{T1}%
    \headertextsf{This work is licensed under the Creative Commons
    Attribution License. To view a copy of this license, visit
    \texttt{https://creativecommons.org/licenses/by/4.0/} or send a
    letter to Creative Commons, 171 Second St, Suite 300, San Francisco,
    CA 94105, USA, or Eisenacher Strasse~2, 10777 Berlin, Germany}\vss}
    \par
    \kern\z@
  \end{minipage}%
}
\title{Computation by infinite descent made explicit}
\author[S.~Enqvist]{Sebastian Enqvist}
\begin{document}

\begin{abstract}
We introduce a non-wellfounded proof system for intuitionistic logic extended with inductive and co-inductive definitions, based on a syntax in which fixpoint formulas are annotated with explicit variables for ordinals. We explore the computational content of this system, in particular we introduce a notion of computability and show that every valid proof is computable. As a consequence, we obtain a normalization result for proofs of what we call finitary formulas. A special case of this result is that every proof of a sequent of the appropriate form represents a unique function on natural numbers. Finally, we derive a categorical model from the proof system and show that least and greatest fixpoint formulas correspond to initial algebras and final coalgebras respectively.
\end{abstract}
\maketitle

\section{Introduction}

Non-wellfounded and cyclic proofs originated in the context of modal $\mu$-calculus with the search of a completeness proof for Kozen's axiomatization \cite{kozen1983results,niwinski1996games,walukiewicz2000completeness}. Since then non-wellfounded proof theory has grown into quite a large area of research. Besides modal $\mu$-calculus, non-wellfounded proofs have been implemented for many different logics involving induction or co-induction in one form or another, including provability logic \cite{shamkanov2014circular,savateev2021non}, arithmetic \cite{simpson2017cyclic,berardi2017equivalence,das2020logical,das2023cyclic} or first-order logic with inductively defined predicates \cite{brotherston2011sequent,berardi2019classical}. The core idea of non-wellfounded proof theory is that explicit rules or axioms of induction, which are required in the more traditional setting of wellfounded proofs, can be exchanged for a structural condition on non-wellfounded proof trees ensuring that the proof as a whole implicitly expresses a sound inductive argument. The advantage of working in this way is that complicated induction invariants implicit in the cyclic reasoning need not be made explicit in a formula. For the modal $\mu$-calculus this leads to finitary and analytic proof systems, where wellfounded proof systems either use infinitely branching rules or rely on cuts in an essential way. Indeed, a key part of the Kozen-Walukiewicz completeness proof for modal $\mu$-calculus consists of associating intricate induction invariants with branches in non-wellfounded tableaux. 

Beginning with the work of Santocanale \cite{santocanale2002calculus} and Fortier \cite{fortier2013cuts}, a lot of research has approached non-wellfounded proofs from the perspective of Curry-Howard correspondence, i.e. viewing proofs as programs (or terms) and formulas as specifications (or types) \cite{sorensen2006lectures}. Research in this area has yielded a wealth of interesting results on the computational expressivity of non-wellfounded proofs. In particular, Das has shown that a cyclic version of G\"{o}del's System $\mathbf{T}$ types the same computable functions as the usual wellfounded presentation, i.e. precisely the provably recursive functions of $\mathbf{PA}$ \cite{das2020circular}. The functions that can be captured by cyclic proofs for intuitionistic logic extended by arbitrary least and greatest fixpoints has also been characterized, again yielding equivalence with the wellfounded presentation \cite{curzi2023computational}. Non-wellfounded proofs for least and greatest fixpoints have also been considered in the context of linear logic, extending previous work using wellfounded proofs \cite{baelde2012least}, leading in particular to powerful cut-elimination theorems \cite{baelde2016infinitary,baelde2022bouncing,saurin2023linear}.

To specify a system of non-wellfounded proofs, one generally needs to formulate a criterion of \emph{correctness}, separating valid proofs from invalid derivation trees. If proofs are seen as ``truth certificates'' then correctness is simply meant to entail that the proved formula is true. In the setting of Curry-Howard correspondence, correctness rather means that the proof as a program terminates on all inputs, i.e. a form of normalization or cut elimination holds. These criteria come in a few different guises, but the basic idea is the same and can be traced back to Fermat's method of ``proof by infinite descent'', a form of contrapositive argument. For arithmetic, this means that falsity of the conclusion would entail an infinitely descending chain of natural numbers. For the more general case of inductive definitions and fixpoints, it means more generally that falsity of the conclusion would entail an infinitely descending chain of ordinals, corresponding to the approximants of the least and greatest fixpoints as characterized by the Knaster-Tarski theorem. The most traditional correctness criterion is the \emph{trace condition}, stating that on every infinite branch of a non-wellfounded proof tree, there should be a trace (also called \emph{thread}) of formulas on which some least fixpoint is unfolded infinitely many times on the left-hand side of the sequent. With this kind of correctness criterion, the infinite descent argument is implicit, with no explicit mention of ordinals. A disadvantage with this sort of correctness criterion is that information about the descent of ordinal approximants is tied to individual formula occurrences in the proof tree, and as a consequence the condition interacts poorly with cuts. In other words there is a problem of \emph{compositionality}, in that the correctness criterion puts restraints on how proofs can be composed by cuts. To deal with this issue, a correctness condition based on a more general form of traces called ``bouncing threads'' has recently been proposed \cite{baelde2022bouncing}.  

As an alternative to the trace-based validity condition, non-wellfounded proof systems with explicit variables for ordinals were proposed by Dam and Gurov for the modal $\mu$-calculus  \cite{dam2002mu} and further explored in the context of first-order fixpoint logic in \cite{sprenger2003structure,sprenger2003global} and more recently \cite{afshari2024cyclic}. Similar ideas have appeared in type theory in connection with reasoning about coinductive types and co-patterns, see \cite{abel2013wellfounded}. Recently, cyclic proofs with ordinal variables have been used to provide a complete, finitary proof system for the two-way modal $\mu$-calculus \cite{afshari2023proof}. None of these papers are however concerned with Curry-Howard correspondence for non-wellfounded proofs. 

The advantage of this type of proof system is that the correctness condition gets disentangled from the structure of traces of syntactic formula occurrences in the proof tree.  As a result, one might expect better interaction with the cut rule. The starting observation of this paper is that, in fact, precisely the sort of problematic examples presented in \cite{baelde2022bouncing} as a motivation for the bouncing thread condition can be captured in a rather transparent way using  ordinal variables; the research presented in this paper started out with an attempt to better understand the semantics of the bouncing thread condition.  Besides being arguably more transparent in terms of the intended meaning of the correctness criterion, there is a practical advantage too: while it was observed in \cite{baelde2022bouncing} that the bouncing thread validity condition is undecidable, the correctness criterion we consider here is known to be decidable \cite{sprenger2003global}. 

In this paper we explore Dam-Gurov-Sprenger style non-wellfounded (sequent calculus) proofs with ordinal variables from a Curry-Howard perspective, where our base logic is intuitionistic propositional logic with inductive and co-inductive definitions given by least and greatest fixpoints. Thus our system is a fragment of the $\mu$-calculus over intuitionistic logic. For background on the computational content of this system, see \cite{clairambault2009least}.  We provide a semantics of our proof system in terms of a notion of computability, and show that all valid proofs are computable in this sense. As a corollary we get a normalization theorem for proofs of what we call ``finitary'' formulas, capturing finite data types like natural numbers, lists and finite binary trees. In particular, we obtain the usual result that any proof of a sequent of the form $N,\hdots, N \vdash N$, where $N$ is the formula corresponding to the natural numbers type via Curry-Howard and with $k$ copies of $N$ on the left-hand side, computes a unique $k$-ary function on natural numbers. We also show how the computability semantics gives rise to a categorical model in which least and greatest fixpoints are initial algebras and final coalgebras respectively.    

The aim of this paper is to contribute to the project, initiated by Baelde et al. in \cite{baelde2022bouncing}, of searching  for natural validity criteria for non-wellfounded proofs that extend the class of valid proofs to include computationally meaningful proofs beyond those recognized by the standard progressing thread condition. It can be viewed as a pilot study, proposing the Dam-Gurov-Sprenger framework of explicit ordinal variables as a suitable framework for this purpose. In support of this we observe two key facts: first, like bouncing threads, our validity condition does capture some proofs that are not recognized as valid according to the standard progressing thread condition. Second, by providing a computability semantics, we demonstrate that all valid proofs in our sense are indeed computationally meaningful. That is, valid proofs represent well-defined programs. Our hope is that this motivates further exploration of this type of proof system from the Curry-Howard perspective, with many questions remaining. In particular, we leave for future research a more detailed comparison with the bouncing thread validity condition - can every bouncing thead valid proof be represented using ordinal variables, and vice versa, does every valid proof using ordinal variables represent a bouncing thread valid proof? 

\section{Proof system}

In this section we introduce the proof system that we will study.

\subsection{Formulas}

We fix an infinite supply of proposition variables, written with capital letters $X,Y,Z,\hdots$. 
The language $\mathcal{L}_{\mu}$ of intuitionistic propositional logic extended with least and greatest fixpoints is given by the following grammar:
\[ \mathcal{L}_\mu \ni A :=  \top \mid A \lsum A \mid A \lprod A \mid  A \lfun A \mid \nu X. B \mid \mu X. B\]
In the formation rules for $\mu$ and $\nu$, the variable $X$ is required to have only positive free occurrences in $B$, where positive and negative subformula occurrences are defined as usual. In particular, a positive subformula occurrence in $A \lfun B$ is either a positive subformula occurrence in $B$ or a negative subformula occurrence in $A$, and negative subformula occurrence in $A \lfun B$ is either a negative subformula occurrence in $B$ or a positive subformula occurrence in $A$.  Free and bound variables of a formula are defined as usual, and formulas with no free variables are called sentences. 

We shall consider a restricted version of the syntax, in which we make use only of a single fixpoint variable $X$ and restrict free variable occurrences to be \emph{strictly positive}, meaning no free variable occurrences in the antecedent of any implication. This simplifies some matters, in particular it is straightforward to prove that such strictly positive formulas give rise to functors on the ``category of proofs'' that we shall consider. This language can be called the extension of intuitionistic propositional logic with \emph{inductive and co-inductive definitions}, and will be denoted by $\underline{\mathcal{L}}_{\mathsf{ID}}$. The grammar is as follows:
\[\underline{\mathcal{L}}_{\mathsf{ID}} \ni A:= \top  \mid A \lsum A \mid A \lprod A \mid A \lfun A \mid \mu X.B \mid \nu X. B\]
\[\underline{\mathcal{L}}^+(X) \ni B := A \mid X \mid \top  \mid B \lsum B \mid B \lprod B \mid A \to B\]

From a Curry-Howard perspective, formulas in this language represent data types. Conjunction and disjunction represent product and co-product type constructors, implications give function types and the binders $\mu$ and $\nu$ provide least and greatest fixpoints of type constructions depending on some variable $X$. Some typical examples are:
\begin{itemize}
\item Natural numbers, represented by $\mu X. \top \vee X$. We abbreviate this by $N$.
\item Lists (of natural numbers), represented by $\mu X. \top \vee (N \wedge X)$. We abbreviate this by $L$.
\item Infinite streams (of natural numbers), represented by $\nu X. N \wedge X$. We abbreviate this by~$S$. 
\item Wellfounded $\omega$-branching trees, represented by $\mu X. \top \vee (N \to X)$.
\end{itemize}
Note that these examples all belong to the language $\underline{\mathcal{L}}_{\mathsf{ID}}$ of inductive and co-inductive definitions that we consider here.  

\subsection{Ordinal variables, constraints and sequents}

The Knaster-Tarski theorem \cite{knaster1928theoreme,tarski1955lattice} characterizes the least fixpoint of a monotone function $f$ on a complete lattice as the join of its ordinal approximants ``from below'', starting from the bottom element and transfinitely iterating the function $f$. Similarly the greatest fixpoint is the meet of its ordinal approximants ``from above''. In this paper we will explore a proof system based on an extension of the syntax with explicit variables referring to ordinal approximants, pioneered by Dam, Gurov and Sprenger \cite{dam2002mu,sprenger2003structure,sprenger2003global}. 

We shall make use of two types of terms for reasoning about ordinals: \emph{variables}, which will be denoted by lower-case letters $a,b,c,\hdots$, and a special constant $\infty$. We can think of the latter as standing for an arbitrarily chosen ordinal that is at least as large as any of the closure ordinals of fixpoints that we consider, where the \emph{closure ordinal} of a least or greatest fixpoint is the smallest ordinal at which its series of ordinal approximants stabilizes. Ordinal terms will be denoted by lower-case greek letters $\alpha,\beta,\gamma,\hdots$, where an ordinal term is either an ordinal variable or the constant $\infty$. 

We will make use of minimal means for reasoning about ordinals via inequalities $\alpha < \beta$, where $\alpha,\beta$ are ordinal terms. 
\begin{defi}
An \emph{ordinal constraint} is a finite rooted tree whose vertices are labelled injectively by ordinal terms, the root is labelled by the constant $\infty$ and all other vertices are labelled by ordinal variables. Given an ordinal constraint $\cns$ and ordinal terms $\alpha$,$\beta$ we write $\cns \Vdash \alpha < \beta$ if either $\beta = \infty$ or $\alpha\neq \beta$ and $\alpha$ is a descendant of $\beta$ in $\cns$. Note that, in particular, we have $\cns \Vdash \infty < \infty$ for any constraint $\cns$. The \emph{trivial} constraint is the one consisting only of the root labelled $\infty$.  
\end{defi}

 To make use of ordinal terms to reason about fixpoints, we introduce an extended version of the language presented in the previous section. The extension is given by the following grammar:
\[\mathcal{L}_{\mathsf{ID}} \ni A:= \top \mid A \lsum A \mid A \lprod A \mid A \lfun A \mid \mu^\alpha X.B \mid \nu^\alpha X. B \mid \exists a < \alpha : A \mid \forall a < \alpha: A\]
\[\mathcal{L}^+(X) \ni B := A \mid X \mid \top \mid B \lsum B \mid B \lprod B \mid A \to B\]

Here $\alpha$ ranges over ordinal terms. We impose a side constraint on the formation of formulas $\eta^a X.B$ for $\eta \in \{\mu,\nu\}$ to the effect that the ordinal variable $a$ may not occur free in $B$. Without this constraint we could form a formula like:
\[\nu^a X. (\top \wedge (X \wedge  \mu^a X.\top \vee X)\]
Here, the variable $a$ simultaneously refers to an approximant of two different fixpoints, which is a complication we wish to avoid. 
  Note that we have also included bounded quantification over ordinal variables, following \cite{afshari2023proof,afshari2024cyclic}.  Intuitively, given an ordinal term $\alpha$, $\mu^\alpha X. B$ represents the $\alpha$-th approximant of the least fixpoint, and $\nu^\alpha X. B$ represents the $\alpha$-th approximant of the greatest fixpoint. The actual least fixpoint can be identified with $\mu^\infty X. B$ and the greatest fixpoint with $\nu^\infty X. B$, so the language presented in the previous section is a fragment of $\mathcal{L}_{\mathsf{ID}}$ via the trivial translation that just annotates all fixpoints with $\infty$. A \emph{pure} formula of the language $\mathcal{L}_{\mathsf{ID}}$ is defined to be a formula in which every fixpoint binder is annotated by $\infty$, and in which there are no occurrences of the bounded ordinal quantifiers $\exists a < \alpha$ or $\forall a < \alpha$. A \emph{closed} formula is one in which no occurrences of ordinal variables are free. The definition of a free occurrence of an ordinal variable is as expected; for example in the formula $\forall a < b : (\mu^b X. X \to  \nu^a X. X)$, the variable $a$ only occurs bound while the two occurrences of $b$ are free.

A \emph{sequent} $\cns,\Gamma \vdash A $ consists of a constraint $\cns$, a multiset of formulas $\Gamma$ and a formula $A$, where we require that all ordinal variables occurring in $\Gamma$ or $A$ also occur in the constraint. A \emph{pure sequent} is a sequent in which the constraint is trivial and all formulas appearing in the sequent are pure.  As a convention, pure sequents will be written as $\Gamma \vdash A$. When the constraint is not trivial, we will write it as a list of inequalities $\alpha < \beta$ where the constraint contains vertices $u,v$ such that $u$ is labelled by $\beta$, $v$ is a child of $u$ and is labelled by $\alpha$.

Given a constraint $\cns$ containing a (unique) vertex $u$ labelled $\alpha$, and $c$ an ordinal variable that does not occur in $\cns$, we denote by $\cns+_\alpha c$ the constraint in which a new vertex $v$ is added as a child of $u$ and labelled by $c$. We shall sometimes write $\cns(\alpha < \beta)$ to denote a constraint $\cns$ such that $\cns \Vdash \alpha < \beta$.

\subsection{Proof rules}

We now give the rules of our proof system. First we include two axioms, one for $\top$ and an identity axiom for each formula $A$:
\[
\scalebox{0.7}{\begin{prooftree}
\hypo{}
\infer1[$\mathsf{\top}$]{\cns \vdash \top}
\end{prooftree}}
\quad\quad
\scalebox{0.7}{\begin{prooftree}
\hypo{}
\infer1[$id$]{\cns,A \vdash A}
\end{prooftree}}
\]

Remaining rules of our proof system are displayed in Figures \ref{f:proof-rules} and $\ref{f:struct-rules}$.

\begin{figure}
\begin{center}
\[
\begin{tabular}{c@{\qquad}  c@{\qquad}}
\(
\scalebox{0.7}{\begin{prooftree}
\hypo{\cns, \Gamma \vdash   A_i}
\infer1[$\lsum_R$]{\cns, \Gamma \vdash A_1 \lsum A_2}
\end{prooftree}}
\)
&
\(
\scalebox{0.7}{\begin{prooftree}
\hypo{\cns, \Gamma,  A_1 \vdash B}
\hypo{\cns, \Gamma, A_2 \vdash B}
\infer2[$\lsum_L$]{\cns, \Gamma, A_1 \lsum A_2 \vdash   B}
\end{prooftree}}
\)\\[1.5em]
\(
\scalebox{0.7}{\begin{prooftree}
\hypo{\cns, \Gamma \vdash A}
\hypo{\cns, \Delta \vdash  B}
\infer2[$\lprod_R$]{\cns, \Gamma, \Delta \vdash A \lprod B}
\end{prooftree}}
\)
&
\(
\scalebox{0.7}{\begin{prooftree}
\hypo{\cns, \Gamma,  A, B \vdash C}
\infer1[$\lprod_L$]{\cns, \Gamma, A \lprod B \vdash C}
\end{prooftree}}
\)
\\[1.5em]
\(
\scalebox{0.7}{\begin{prooftree}
\hypo{\cns, \Gamma, A \vdash B}
\infer1[$\lfun_R$]{\cns, \Gamma \vdash A \lfun B}
\end{prooftree}}
\)
&
\(
\scalebox{0.7}{\begin{prooftree}
\hypo{\cns, \Gamma \vdash A}
\hypo{\cns,  \Delta, B \vdash C}
\infer2[$\lfun_L$]{\cns,  \Gamma ,\Delta, A \lfun B  \vdash C}
\end{prooftree}}
\)
\\[1.5em]
\(
\scalebox{0.7}{\begin{prooftree}
\hypo{\cns(\beta < \alpha), \Gamma \vdash  A[\mu^\beta X. A/X]}
\infer1[$\mu_R$]{\cns(\beta < \alpha), \Gamma \vdash \mu^\alpha X. A}
\end{prooftree}}
\)
&
\(
\scalebox{0.7}{\begin{prooftree}
\hypo{\cns +_\alpha b,  \Gamma , A[\mu^b X. A/X] \vdash  B}
\infer1[$\mu_L$]{\cns,  \Gamma ,\mu^\alpha X. A\vdash B}
\end{prooftree}}
\)
\\[1.5em]
\(
\scalebox{0.7}{\begin{prooftree}
\hypo{\cns +_\alpha b, \Gamma \vdash  A[\nu^b X.A/X] }
\infer1[$\nu_R$]{\cns, \Gamma \vdash \nu^\alpha X. A}
\end{prooftree}}
\)
&
\(
\scalebox{0.7}{\begin{prooftree}
\hypo{\cns(\beta < \alpha),  \Gamma ,A[\nu^\beta X. A/X]\vdash B}
\infer1[$\nu_L$]{\cns(\beta < \alpha), \Gamma, \nu^\alpha X. A \vdash  B}
\end{prooftree}}
\)
\\[1.5em]
\(
\scalebox{0.7}{\begin{prooftree}
\hypo{\cns(\beta < \alpha), \Gamma \vdash  A[\beta/c]}
\infer1[$\exists_R$]{\cns(\beta < \alpha), \Gamma \vdash \exists c < \alpha: A}
\end{prooftree}}
\)
&
\(
\scalebox{0.7}{\begin{prooftree}
\hypo{\cns +_\alpha b,  \Gamma , A[b/c] \vdash  B}
\infer1[$\exists_L$]{\cns,  \Gamma ,\exists c < \alpha: A \vdash B}
\end{prooftree}}
\)
\\[1.5em]
\(
\scalebox{0.7}{\begin{prooftree}
\hypo{\cns +_\alpha b, \Gamma \vdash  A[b/c] }
\infer1[$\forall_R$]{\cns, \Gamma \vdash \forall c < \alpha: A}
\end{prooftree}}
\)
&
\(
\scalebox{0.7}{\begin{prooftree}
\hypo{\cns(\beta < \alpha),  \Gamma ,A[\beta/c]\vdash B}
\infer1[$\forall_L$]{\cns(\beta < \alpha), \Gamma, \forall c < \alpha: A \vdash  B}
\end{prooftree}}
\)
\end{tabular}
\]
\caption{
\label{f:proof-rules} Logical rules}
\end{center}
\end{figure}

\begin{figure}
\[
\begin{tabular}{c}
\begin{tabular}{c@{\qquad}  c@{\qquad}}
\(\scalebox{0.7}{\begin{prooftree}
\hypo{\cns, \Gamma \vdash B}
\infer1[$W$]{\cns, A, \Gamma \vdash B}
\end{prooftree}}
\)
&
\(
\scalebox{0.7}{\begin{prooftree}
\hypo{\cns, A, A, \Gamma \vdash B}
\infer1[$C$]{\cns, A, \Gamma \vdash B}
\end{prooftree}}
\)
\end{tabular}
\\[1.5em]
\(
\scalebox{0.7}{\begin{prooftree}
\hypo{\cns, \Gamma_1 \vdash A_1}
\hypo{\hdots}
\hypo{\cns, \Gamma_n \vdash A_n}
\hypo{\cns,  A_1,\hdots, A_n, \Delta \vdash  B}
\infer4[$cut$]{\cns, \Gamma_1,\hdots,\Gamma_n, \Delta  \vdash  B}
\end{prooftree}}
\)
\end{tabular}
\]
\caption{\label{f:struct-rules} Structural rules}
\end{figure}

In the rules ($\mu_L$) and ($\nu_R$), the variable $b$ is subject to the condition that it must be fresh, i.e. it may not occur in the conclusion, and likewise for the rules ($\exists_L$) and ($\forall_R$). The cut rule is formulated in the version that is sometimes called ``multicut'', allowing several cut formulas at once. This is quite common and convenient when working with non-wellfounded proofs. 

A \emph{derivation} is a (possibly infinite) tree in which vertices are appropriately labelled by sequents and proof rules. Note that since there is no weakening rule for constraints, an infinite branch $u_0,u_1,u_2,\hdots$ induces an infinite sequence $\cns_0,\cns_1,\cns_2,\hdots$ of constraints such that $\cns_i$ embeds into $\cns_j$ for $i < j$. We say that the inequality $\alpha < \beta$ \emph{eventually holds} on the branch if there is some $i$ for which $\cns_i \Vdash \alpha < \beta$.
\begin{defi}
A derivation is said to be \emph{valid} if, for every infinite branch $B$, there is an infinite sequence $a_0,a_1,a_2,\hdots$ of ordinal variables such that for each $i$, the inequality $a_{i+1} < a_i$ eventually holds on $B$. We refer to such a sequence as an infinite descending chain of ordinal variables.  Valid derivations will also be called \emph{proofs}. 
\end{defi}
Note the restriction to ordinal variables here; of course, we do not want to count the infinite chain $\infty > \infty > \infty \dotsm$ as an infinite descending chain, since that would trivialize the validity condition. 

A special role will be played by what we call \emph{closed proofs}:
\begin{defi}
A \emph{closed sequent} is a sequent of the form $\vdash A$, where $A$ is a closed formula, the constraint is trivial and the left-hand side is empty. A derivation or proof is called closed if its end sequent is closed. 
\end{defi}

Proofs are generally infinite structures. It is common in non-wellfounded proof theory to introduce a notion of \emph{cyclic proof}, which is a finite structure that represents an infinite but regular proof. The presence of ordinal variables makes the issue of how to define the notion of cyclic proof precisely somewhat more subtle \cite{sprenger2003global,afshari2024cyclic}. As this is not our focus here, we will not give an explicit definition of cyclic proofs and will work with infinite proof trees throughout. In practice, when we present infinite proofs by finite cyclic proof graphs, we trust it will be clear what infinite proof is represented by the finite presentation. For example, consider the following proof graph:
\[
\scalebox{0.7}{\begin{prooftree}
\hypo{}
\infer1[$\mathsf{ax}_\top$]{a< \infty\vdash  \top}
\hypo{a < \infty  \vdash \nu^a X. \top \wedge X\;\dagger}
\infer2[$\wedge_R$]{a < \infty \vdash \top \wedge \nu^a X. \top \wedge X}
\infer1[$\nu_R$]{\vdash \nu^\infty X. \top \wedge X \; \dagger}
\end{prooftree}}
\]

Here, we used the symbol ``$\dagger$'' to mark a cycle in the proof, and the intended meaning of this is that we want to represent the infinite proof obtained by continuing from the marked leaf as from the associated root, treating  the variable $a$ at the leaf as $\infty$ at the root. Explicitly, the infinite proof represented is:
\[
\scalebox{0.7}{\begin{prooftree}
\hypo{}
\infer1[$\mathsf{ax}_\top$]{ a< \infty\vdash  \top}
\hypo{}
\infer1[$\mathsf{ax}_\top$]{b < a< \infty\vdash  \top}
\hypo{}
\ellipsis{}{c < b < a < \infty \vdash \top \wedge \nu^c X. \top \wedge X}
\infer1[$\nu_R$]{b < a < \infty \vdash \nu^b X. \top \wedge X}
\infer2[$\wedge_R$]{b < a < \infty \vdash \top \wedge \nu^b X. \top \wedge X}
\infer1[$\nu_R$]{a < \infty \vdash \nu^a X. \top \wedge X}
\infer2[$\wedge_R$]{a < \infty \vdash \top \wedge \nu^a X. \top \wedge X}
\infer1[$\nu_R$]{\vdash \nu^\infty X. \top \wedge X}
\end{prooftree}}
\]

\subsection{Example: bouncing threads}

In their work on the ``bouncing threads'' validity condition for non-wellfounded proofs \cite{baelde2022bouncing}, Baelde et al. provide the following example proof:
\[
\scalebox{0.7}{\begin{prooftree}
\hypo{}
\infer1[$id$]{N \vdash N}
\hypo{ S \vdash N \lprod S \; \dagger}
\infer1[$\nu_R$]{S \vdash S}
\infer2[$\wedge_R$]{N, S \vdash N \lprod S}
\infer1[$\wedge_L$]{N \lprod S \vdash \underline{N \lprod S}}
\hypo{}
\infer1[$id$]{N \vdash N}
\infer1[$\lsum_R$]{N \vdash \top \lsum N}
\infer1[$\mu_R$]{N \vdash N}
\hypo{}
\infer1[$id$]{S \vdash S}
\infer2[$\wedge_R$]{N, S \vdash N \lprod S}
\infer1[$\wedge_L$]{\underline{N \lprod S} \vdash N \lprod S}
\infer2[$cut$]{N \lprod S \vdash N \lprod S}
\infer1[$\nu_L$]{S \vdash N \lprod S}
\infer1[$W$]{N,S \vdash N \lprod S}
\infer1[$\wedge_L$]{N\lprod S \vdash N \lprod S}
\infer1[$\nu_L$]{ S \vdash N \lprod S \; \dagger}
\infer1[$\nu_R$]{ S \vdash S}
\end{prooftree}}
\]

Here, we recall that $N := \mu X. \top \vee X$ represents the type of natural numbers and $S := \nu X. N \wedge X$ represents the type of infinite streams of natural numbers. Viewed as a program, this proof thus computes a function from streams to streams. What this program does is that it removes every second element of the stream, and also increments every element that it keeps. In other words, given a stream $n_0n_1n_2\dotsm$ as input, the proof computes the stream $(n_1+1)(n_3+1)(n_5+1)\dotsm$ as output. The question is what it is about the proof that guarantees that, for a given stream as input, it always produces an infinite stream as output. The standard condition used to ensure this is that every infinite branch of the proof tree has a progressive thread, i.e. an infinite sequence of formulas connected by the derivation steps on which either a (highest ranking) greatest fixpoint is unfolded infinitely often on the right-hand side of the sequent, or a (highest ranking) least fixpoint is unfolded infinitely often on the left-hand side. But this condition fails for the proof above, essentially due to the cut disconnecting the right-hand trace on the cycle, so a more general validity condition is needed. The response in \cite{baelde2022bouncing} is to say that there is in fact a progressing trace in the proof, of a more general kind: it starts from the lowest fixpoint unfolding on the right, and ``bounces'' against the axioms in the right subtree, back down through the underlined cut formulas until it reaches the  end of the cycle in the left subtree.

Now we consider what this proof looks like when formulated using ordinal variables. Here, we take as a convention that for a given ordinal term $\alpha$,  $N^\alpha$ abbreviates $\mu^\alpha X. \top \vee X$ and similarly $S^\alpha$ abbreviates $\nu^\alpha X. N \wedge X$. The proof becomes:
\[
\scalebox{0.7}{\begin{prooftree}
\hypo{}
\infer1[$id$]{a < \infty,N^\infty \vdash N^\infty}
\hypo{b < a < \infty, S^{\infty} \vdash N^\infty \lprod S^b \; \dagger}
\infer1[$\nu_R$]{a < \infty, S^{\infty} \vdash S^a}
\infer2[$\wedge_R$]{a < \infty, N^\infty, S^{\infty} \vdash N^\infty \lprod S^a}
\infer1[$\wedge_L$]{ a < \infty,N^\infty \lprod S^{\infty} \vdash \underline{N^\infty \lprod S^a}}
\hypo{}
\infer1[$id$]{a < \infty, N^\infty \vdash N^\infty}
\infer1[$\vee_R$]{ a < \infty, N^\infty \vdash \top \lsum N^\infty}
\infer1[$\mu_R$]{ a < \infty, N^\infty \vdash N^\infty}
\hypo{}
\infer1[$id$]{a < \infty, S^a \vdash S^a}
\infer2[$\wedge_R$]{ a < \infty, N^\infty, S^a \vdash N^\infty \lprod S^a}
\infer1[$\wedge_L$]{ a < \infty,\underline{N^\infty \lprod S^a} \vdash N^\infty \lprod S^a}
\infer2[$cut$]{ a < \infty,N \lprod S^{\infty} \vdash N^\infty \lprod S^a}
\infer1[$\nu_L$]{ a < \infty,S^{\infty} \vdash N^\infty \lprod S^a}
\infer1[$W$]{ a < \infty,N^\infty ,S^{\infty} \vdash N^\infty \lprod S^a}
\infer1[$\wedge_L$]{a < \infty, N^\infty \lprod S^{\infty} \vdash N^\infty \lprod S^a \;\dagger}
\infer1[$\nu_L$]{ a < \infty, S^{\infty} \vdash N^\infty \lprod S^a }
\infer1[$\nu_R$]{ S^{\infty} \vdash S^\infty}
\end{prooftree}}
\]
This is a valid proof, and the infinite descent argument is clearly visible; the unique infinite branch will have an infinite descending chain of ordinal variables $a > b > c \dotsm$. The reason that we can capture this proof is that the cut no longer interrupts the progression of the infinite descent argument; the information about descent to some lower ordinal approximant to the greatest fixpoint $S$ is explicitly present in the cut formula.

What the proof does is to construct an infinite stream by producing a natural number as head and, for any given approximant given as value $k$ for the ordinal variable $a$, a description of the tail up to length $k$. We can think of it as a game, where one player (Prover) has to produce a stream $s_1$ in response to a stream $s_0$ provided by the opposing player (Refuter). First, Prover commits to produce a natural number as head together with an approximant to the tail, and asks Refuter for a value $k$ for the variable $a$. Now Prover has committed to construct the tail $s_1'$ of the stream up to the bound $k$.  First, Prover asks for the head of the stream $s_0$, discards it and keeps the tail $s_0'$. This is in turn decomposed as a natural number $n$ and a tail $s_0''$. From this Prover produces an approximant to the output stream in two steps: in the left subtree, the procedure is recursively called on the tail $s_0''$ to produce the tail $s_1'$, and the value $n \cdot s_1'$ is then passed to the right subtree via the cut. This value approximates a stream up to the value $k$ picked by Refuter, and now the right subtree simply increments the value of $n$ to $n+1$. Each recursive call of the procedure forces Refuter to pick a smaller value for the length of the approximants that Prover has to provide, so eventually Prover wins the game. 

An important task for future work is to systematically compare the present proof system with the bouncing threads validity condition. In particular we leave it as an open question whether, for any given proof satisfying the bouncing thread validity condition, we can always find an equivalent (in some sense to be made precise) proof using ordinal variables. What is clear in any case is that we do need to include the bounded ordinal quantifiers $\exists a < \alpha$ and $\forall a < \alpha$ for this purpose, as simple counter-examples could be constructed otherwise. The following example was suggested by an anonymous reviewer:
\[
\scalebox{0.7}{ 
\begin{prooftree}
\hypo{\vdots}
\infer1{\vdash \nu X.X }
\hypo{}
\infer1[$id$]{ \nu X.X \vdash \nu X.X}
\infer1[$\nu_R$]{ \vdash \nu  X.X \vdash \nu X.X}
\infer2[$cut$]{ \vdash \nu  X.X }
\hypo{}
\infer1[$id$]{ \nu X.X \vdash \nu X.X}
\infer1[$\nu_R$]{ \nu X.X \vdash \nu X.X}
\infer2[$cut$]{\vdash \nu X. X }
\end{prooftree}
}
\]

The infinite branch in this proof tree has a progressing bouncing thread, bouncing on the identiy axiom in every right-hand subproof. As pointed out by the anonymous reviewer, there seems to be no way to represent this proof using only the unfolding rules for annotated fixpoints. Using the ordinal quantifiers however, the proof can be reproduced in the following way:
\[
\scalebox{0.7}{ 
\begin{prooftree}
\hypo{\vdots}
\infer1{b_1 < b_0 < \infty \vdash \nu^{b_1} X.X }
\infer1[$\forall_R$]{b_0 < \infty \vdash \forall a < b_0: \nu^a X.X}
\hypo{}
\infer1[$id$]{c < b_0 < \infty, \nu^c X.X \vdash \nu^c X.X}
\infer1[$\forall_L$]{c < b_0 < \infty, \forall a < b_0 : \nu^{a} X.X \vdash \nu^c X.X}
\infer1[$\nu_R$]{b_0 < \infty \vdash \forall a < b_0: \nu^{a}  X.X \vdash \nu^\infty X.X}
\infer2[$cut$]{b_0 < \infty \vdash \nu^{b_0} X.X }
\infer1[$\forall_R$]{\vdash \forall a < \infty: \nu^a X.X}
\hypo{}
\infer1[$id$]{c < \infty, \nu^c X.X \vdash \nu^c X.X}
\infer1[$\forall_L$]{c < \infty, \forall a < \infty : \nu^a X.X \vdash \nu^c X.X}
\infer1[$\nu_R$]{\forall a < \infty: \nu^a  X.X \vdash \nu^\infty X.X}
\infer2[$cut$]{\vdash \nu^\infty X. X }
\end{prooftree}
}
\]
The example illustrates nicely that the question of representability of bouncing thread proofs in our system is a non-trivial one. We cannot expect to capture all bouncing thread valid proofs just by decorating proof trees with ordinal variables in the right way. We have to make explicit whatever reasoning about ordinals is implicit in a bouncing thread proof; the question is just how much syntax for representing information about ordinals we need in order to do that. 

\section{Cut reductions and Curry-Howard}

In this section we will introduce cut reductions and begin to study the computational content of proofs with ordinal variables. It will be convenient for this purpose to introduce a term notation for proofs.

\subsection{Term notation}

We present our term notation by rephrasing our derivation rules as deriving typing judgments of the form:
 \[\cns, x_1 : B_1,\hdots,x_n : B_n  \vdash u : A\] where $\cns$ is a constraint, $x_1 : B_1,\hdots,x_n : B_n$ is a context of variable declarations and $u$ is a proof term. Formally, we view proof terms as infinite trees in which vertices are labelled by variables and derivation rules. We often abbreviate a context $x_1 : B_1,\hdots,x_n : B_n$ as $\vec{x} : \Gamma$ where $\vec{x} = (x_1,\hdots,x_n)$ and $\Gamma = (B_1,\hdots,B_n)$.  The term notation for the $\top$-axiom and identity axiom is as follows:
\[
\scalebox{0.7}{\begin{prooftree}
\hypo{}
\infer1{\cns \vdash \mathsf{ax}_\top : \top}
\end{prooftree}}
\quad\quad
\scalebox{0.7}{\begin{prooftree}
\hypo{}
\infer1{\cns, x : A \vdash \mathsf{id}(x) : A}
\end{prooftree}}\]

The type-theoretic reformulation of our sequent calculus rules are presented in Figures \ref{f:logic-rules-type} and $\ref{f:struct-rules-type}$. 

\begin{figure}
\begin{center}
\[
\begin{tabular}{c@{\qquad}  c@{\qquad}}
\(
\scalebox{0.7}{\begin{prooftree}
\hypo{\cns, \vec{x} : \Gamma \vdash u_i :  A_i}
\infer1{\cns, \vec{x} : \Gamma \vdash \vee^i_R(u_i) : A_1 \lsum A_2}
\end{prooftree}}
\)
&
\(
\scalebox{0.7}{\begin{prooftree}
\hypo{\cns, \vec{x} : \Gamma,  z_1 : A_1 \vdash u_1 : B}
\hypo{\cns, \vec{x} : \Gamma, z_2 : A_2 \vdash u_2 : B}
\infer2{\cns, \vec{x} : \Gamma, y : A_1 \lsum A_2 \vdash \vee_L(y,z_1,z_2,u_1,u_2) :  B}
\end{prooftree}}
\)\\[1.5em]
\(
\scalebox{0.7}{\begin{prooftree}
\hypo{\cns, \vec{x} : \Gamma \vdash u : A}
\hypo{\cns, \vec{y} : \Delta \vdash  v : B}
\infer2{\cns, \vec{xy} : \Gamma, \Delta \vdash \wedge_R(u,v) : A \lprod B}
\end{prooftree}}
\)
&
\(
\scalebox{0.7}{\begin{prooftree}
\hypo{\cns, \vec{x} : \Gamma,  z_1 : A, z_2 : B \vdash u : C}
\infer1{\cns, \vec{x} : \Gamma, y : A \lprod B \vdash \wedge_L(y,z_1,z_2,u) : C}
\end{prooftree}}
\)
\\[1.5em]
\(
\scalebox{0.7}{\begin{prooftree}
\hypo{\cns, \vec{x} : \Gamma, y : A \vdash u : B}
\infer1{\cns, \vec{x} : \Gamma \vdash {\lfun}_R(y,u) : A \lfun B}
\end{prooftree}}
\)
&
\(
\scalebox{0.7}{\begin{prooftree}
\hypo{\cns, \vec{x}_0 : \Gamma \vdash u:A}
\hypo{\cns, y : B, \vec{x}_1 : \Delta \vdash v : C}
\infer2{\cns, \vec{x}_0\vec{x}_1 : \Gamma, \Delta, z : A \lfun B \vdash {\lfun}_L(z,y,u,v) : C}
\end{prooftree}}
\)
\\[1.5em]
\(
\scalebox{0.7}{\begin{prooftree}
\hypo{\cns(\beta < \alpha), \vec{x} : \Gamma \vdash  u : A[\mu^\beta X. A/X]}
\infer1{\cns(\beta < \alpha), \vec{x} : \Gamma \vdash \mu_R(\alpha,\beta,u) : \mu^\alpha X. A}
\end{prooftree}}
\)
& 
\(
\scalebox{0.7}{\begin{prooftree}
\hypo{\cns +_\alpha b,  \vec{x} : \Gamma, z : A[\mu^b X. A/X] \vdash u :   B}
\infer1{\cns,  \vec{x} : \Gamma, y : \mu^\alpha X. A \vdash \mu_L(\alpha,b,y,z,u) : B}
\end{prooftree}}
\)
\\[1.5em]
\(
\scalebox{0.7}{\begin{prooftree}
\hypo{\cns +_\alpha b, \vec{x} : \Gamma \vdash  u : A[\nu^b X.A/X] }
\infer1{\cns, \vec{x} : \Gamma \vdash \nu_R(\alpha, b, u) : \nu^\alpha X. A}
\end{prooftree}}
\)
&
\(
\scalebox{0.7}{\begin{prooftree}
\hypo{\cns(\beta < \alpha),  \vec{x} : \Gamma, z :  A[\nu^\beta X. A/X] \vdash u : B}
\infer1{\cns(\beta < \alpha), \vec{x} : \Gamma, y : \nu^\alpha X. A \vdash \nu_L(\alpha,\beta,y,z,u) : B}
\end{prooftree}}
\)
\\[1.5em]
\(
\scalebox{0.7}{\begin{prooftree}
\hypo{\cns(\beta < \alpha), \vec{x} : \Gamma \vdash  u : A[\beta/c]}
\infer1{\cns(\beta < \alpha), \vec{x} : \Gamma \vdash \exists_R(\alpha,\beta,u) : \exists c < \alpha: A}
\end{prooftree}}
\)
& 
\(
\scalebox{0.7}{\begin{prooftree}
\hypo{\cns +_\alpha b,  \vec{x} : \Gamma, z : A[b/c] \vdash u :   B}
\infer1{\cns,  \vec{x} : \Gamma, y : \exists c < \alpha : A \vdash \exists_L(\alpha,b,y,z,u) : B}
\end{prooftree}}
\)
\\[1.5em]
\(
\scalebox{0.7}{\begin{prooftree}
\hypo{\cns +_\alpha b, \vec{x} : \Gamma \vdash  u : A[b/c] }
\infer1{\cns, \vec{x} : \Gamma \vdash \forall_R(\alpha, b, u) : \forall c < \alpha: A}
\end{prooftree}}
\)
&
\(
\scalebox{0.7}{\begin{prooftree}
\hypo{\cns(\beta < \alpha),  \vec{x} : \Gamma, z :  A[\beta/c] \vdash u : B}
\infer1{\cns(\beta < \alpha), \vec{x} : \Gamma, y : \forall c < \alpha: A \vdash \forall_L(\alpha,\beta,y,z,u) : B}
\end{prooftree}}
\)
\end{tabular}
\]
\caption{
\label{f:logic-rules-type} Logical rules}
\end{center}
\end{figure}

\begin{figure}
\[
\begin{tabular}{c}
\begin{tabular}{c@{\qquad}  c@{\qquad}}
\(\scalebox{0.7}{\begin{prooftree}
\hypo{\cns, \vec{x} : \Gamma \vdash u : B}
\infer1{\cns, \vec{x} : \Gamma,  y : A \vdash W(y,u) : B}
\end{prooftree}}
\)
&
\( 
\scalebox{0.7}{\begin{prooftree}
\hypo{\cns, \vec{x} : \Gamma, z_0 : A, z_1 : A \vdash u : B}
\infer1{\cns, \vec{x} : \Gamma, y : A \vdash C(y,z_0,z_1,u) : B}
\end{prooftree}}
\)
\end{tabular}
\\[1.5em]
\(
\scalebox{0.7}{\begin{prooftree}
\hypo{\cns, \vec{x}_1 : \Gamma_1 \vdash u_1 : A_1}
\hypo{\hdots}
\hypo{\cns, \vec{x}_n : \Gamma_n \vdash u_n : A_n}
\hypo{\cns,  z_1 : A_1,\hdots, z_n : A_n, \vec{y} : \Delta \vdash v:  B}
\infer4{\cns, \vec{x}_1\dotsm\vec{x}_{n}\vec{y} : \Gamma_1,\hdots,\Gamma_n, \Delta  \vdash \mcut(u_1,\hdots,u_n,z_1, \hdots, z_n, v) :  B}
\end{prooftree}}
\)
\end{tabular}
\]
\caption{\label{f:struct-rules-type} Structural rules}
\end{figure}

\subsection{Substitution of ordinal variables}

Since left and right rules for fixpoints behave like quantifier rules, with the right $\nu$-rule and left $\mu$-rule respectively introducing eigenvariables, it is clear that cut reductions for fixpoint rules will involve some form of substitution for ordinal variables. It turns out that for our purposes we will only need a very restricted form of substitution: given a constraint $\cns$ and an ordinal variable $a$, we need to be able to perform the substitution of $\infty$ for $a$ provided that $a$ is a child of $\infty$. That is, the substitution is defined when there is no ordinal variable $b$ with $\cns \vdash a < b$. Assuming that the end sequent of a proof $u$ satisfies this constraint, we define the substitution $u[\infty/a]$ coinductively as in Figure \ref{f:substitution}. We include only the cases for fixpoint rules since the cases for quantifier rules are analogous and, in all other cases, the substitution simply commutes with the last proof rule applied. Note how the condition on the end sequent is maintained as an invariant, and guarantees that the cases listed are the only possible ones.  Any reader who feels uneasy about the soundness of a coinductive definition presented in this style is advised to consult \cite{kozen2017practical}.
\begin{figure}
\[
\begin{tabular}{c c c}
\( \left(
\scalebox{0.7}{\begin{prooftree}
\hypo{u_0}
\infer1{\cns+_a b, \Gamma \vdash A[\nu^b X. A/X]}
\infer1{\cns, \Gamma \vdash \nu^a X. A}
\end{prooftree}}
\right)[\orda := \infty] \) & \(=\) & \(
\scalebox{0.7}{\begin{prooftree}
\hypo{u_0[a := \infty]}
\infer1{\cns[\infty/a]+_\infty b, \Gamma[\infty/a] \vdash A[\infty/a][\nu^b X. A[\infty/a]/X]}
\infer1{\cns[\infty/a], \Gamma[\infty/a] \vdash \nu^\infty X. A[\infty/a]}
\end{prooftree}}
\) 
\\[2.0em]
\( \left(
\scalebox{0.7}{\begin{prooftree}
\hypo{u_0}
\infer1{\cns+_\alpha b, \Gamma \vdash A[\nu^b X. A/X]}
\infer1{\cns, \Gamma \vdash \nu^\alpha X. A}
\end{prooftree}}
\right)[\ordc := \infty] \) & \(=\) & \(
\scalebox{0.7}{\begin{prooftree}
\hypo{u_0[c := \infty]}
\infer1{\cns[\infty/c]+_\alpha b, \Gamma[\infty/c] \vdash A[\infty/c][\nu^b X. A[\infty/c]/X]}
\infer1{\cns[\infty/c], \Gamma[\infty/c] \vdash \nu^\alpha X. A[\infty/c]}
\end{prooftree}}
\) 
\\[2.0em]
\( \left(
\scalebox{0.7}{\begin{prooftree}
\hypo{u_0}
\infer1{\cns, \Gamma, A[\nu^a X.A/X] \vdash B}
\infer1{\cns, \Gamma, \nu^\infty X. A \vdash  B}
\end{prooftree}}
\right)[\orda := \infty] \) & \(=\) & \(
\scalebox{0.7}{\begin{prooftree}
\hypo{u_0[a := \infty]}
\infer1{\cns[\infty/a], \Gamma[\infty/a], A[\infty/a][ \nu^\infty X. A[\infty/a]/X] \vdash B[\infty/a]}
\infer1{\cns[\infty/a], \Gamma[\infty/a], \nu^\infty X. A[\infty/a] \vdash B[\infty/a]}
\end{prooftree}}
\) 
\\[2.0em]
\( \left(
\scalebox{0.7}{\begin{prooftree}
\hypo{u_0}
\infer1{\cns(\alpha < \beta), \Gamma, A[\nu^\alpha X.A/X] \vdash B}
\infer1{\cns(\alpha < \beta), \Gamma, \nu^\beta X. A \vdash  B}
\end{prooftree}}
\right)[\ordc := \infty] \) & \(=\) & \(
\scalebox{0.7}{\begin{prooftree}
\hypo{u_0[c := \infty]}
\infer1{\cns[\infty/c](\alpha < \beta), \Gamma[\infty/c], A[\infty/c][ \nu^\alpha X. A[\infty/c]/X] \vdash B[\infty/c]}
\infer1{\cns[\infty/c](\alpha < \beta), \Gamma[\infty/c], \nu^\beta X. A[\infty/c] \vdash B[\infty/c]}
\end{prooftree}}
\) 
\\[2.0em]
\( \left(
\scalebox{0.7}{\begin{prooftree}
\hypo{u_0}
\infer1{\cns, \Gamma \vdash  A[\mu^a X.A/X] }
\infer1{\cns, \Gamma \vdash \mu^\infty X. A}
\end{prooftree}}
\right)[\orda := \infty] \) & \(=\) & \(
\scalebox{0.7}{\begin{prooftree}
\hypo{u_0[a := \infty]}
\infer1{\cns[\infty/a], \Gamma[\infty/a], A[\infty/a][ \mu^\infty X. A[\infty/a]/X] }
\infer1{\cns[\infty/a], \Gamma[\infty/a] \vdash \mu^\infty X. A[\infty/a]}
\end{prooftree}}
\) 
\\[2.0em]
\( \left(
\scalebox{0.7}{\begin{prooftree}
\hypo{u_0}
\infer1{\cns(\alpha < \beta), \Gamma \vdash A[\mu^\alpha X.A/X] }
\infer1{\cns(\alpha < \beta), \Gamma \vdash \mu^\beta X. A}
\end{prooftree}}
\right)[\ordc := \infty] \) & \(=\) & \(
\scalebox{0.7}{\begin{prooftree}
\hypo{u_0[c := \infty]}
\infer1{\cns[\infty/c](\alpha < \beta), \Gamma[\infty/c] \vdash A[\infty/c][ \mu^\alpha X. A[\infty/c]/X]}
\infer1{\cns[\infty/c](\alpha < \beta), \Gamma[\infty/c] \vdash \mu^\beta X. A[\infty/c]}
\end{prooftree}}
\) 
\\[2.0em]
\( \left(
\scalebox{0.7}{\begin{prooftree}
\hypo{u_0}
\infer1{\cns+_a b, \Gamma, A[\mu^b X. A/X] \vdash B}
\infer1{\cns, \Gamma, \mu^a X. A \vdash B}
\end{prooftree}}
\right)[\orda := \infty] \) & \(=\) & \(
\scalebox{0.7}{\begin{prooftree}
\hypo{u_0[a := \infty]}
\infer1{\cns[\infty/a]+_\infty b, \Gamma[\infty/a], A[\infty/a][\mu^b X. A[\infty/a]/X] \vdash B[\infty/a]}
\infer1{\cns[\infty/a], \Gamma[\infty/a], \mu^\infty X. A[\infty/a] \vdash B[\infty/a]}
\end{prooftree}}
\) 
\\[2.0em]
\( \left(
\scalebox{0.7}{\begin{prooftree}
\hypo{u_0}
\infer1{\cns+_\alpha b, \Gamma, A[\mu^b X. A/X] \vdash B}
\infer1{\cns, \Gamma, \mu^\alpha X. A \vdash B}
\end{prooftree}}
\right)[\ordc := \infty] \) & \(=\) & \(
\scalebox{0.7}{\begin{prooftree}
\hypo{u_0[c := \infty]}
\infer1{\cns[\infty/c]+_\alpha b, \Gamma[\infty/c], A[\infty/c][\mu^b X. A[\infty/c]/X] \vdash B[\infty/c]}
\infer1{\cns[\infty/c], \Gamma[\infty/c], \mu^\alpha X. A[\infty/c] \vdash B[\infty/c]}
\end{prooftree}}
\) 
\end{tabular}
\]
\caption{\label{f:substitution} Substitution for ordinal variables}
\end{figure}

\begin{prop}
If $u$ is a valid proof of the sequent $\cns, \Gamma \vdash A$ and $a$ is a child of $\infty$ in $\cns$, then $u[a:=\infty]$ is a valid proof of $\cns[\infty/a], \Gamma[\infty/a] \vdash A[\infty/a]$.
\end{prop}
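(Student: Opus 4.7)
The plan is to establish three things in sequence: first, that $u[a:=\infty]$ is a well-defined derivation (the coinductive definition in Figure~\ref{f:substitution} applies at every node); second, that each local rule instance it produces is a valid application of the corresponding proof rule; and third, that the global descent condition transfers from $u$ to $u[a:=\infty]$.

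First I would record a small invariant: if $a$ is a child of $\infty$ in $\cns$, then $a$ remains a child of $\infty$ in the constraint of every sequent occurring above the root in the derivation. This is immediate by inspection of the rules, since only $\mu_L$ and $\nu_R$ modify the constraint, and they do so by appending a fresh eigenvariable $b \neq a$ at some existing node, so they neither remove $a$ nor insert any new ancestor above it; the rules $\mu_R$ and $\nu_L$ leave the constraint unchanged. Consequently the cases listed in Figure~\ref{f:substitution} exhaust the possibilities for the last rule at every level of $u$, and the coinductive scheme produces a well-defined infinite tree $u[a:=\infty]$.

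Next I would verify local soundness. The key sublemma is that constraint entailment is preserved: if $\cns \Vdash \alpha < \beta$ then $\cns[\infty/a] \Vdash \alpha[\infty/a] < \beta[\infty/a]$. This follows by cases: if $\beta = \infty$ the conclusion is immediate; otherwise $\alpha$ is a proper descendant of $\beta$ in $\cns$, and then either $\beta = a$ (so $\beta[\infty/a] = \infty$ and we are again in the immediate case), or $\beta \neq a$, in which case $\alpha$ cannot equal $a$ either, since the only proper ancestor of $a$ in $\cns$ is $\infty \neq \beta$. Granted this sublemma, each of the eight commutation equations in Figure~\ref{f:substitution} is checked by direct inspection to yield a valid instance of the same rule: the rewritten constraints of the form $\cns[\infty/a] +_\alpha b$ or $\cns[\infty/a] +_\infty b$ satisfy the eigenvariable condition because $b$ was fresh, and in case~(1) the fixpoint $\nu^a X.A$ of the conclusion becomes $\nu^\infty X.A[\infty/a]$, matched by a valid $\nu_R$ attaching the new child $b$ to $\infty$.

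Finally I would handle validity. The coinductive construction yields a canonical bijection between branches of $u$ and branches of $u[a:=\infty]$ under which the constraint sequence $\cns_0,\cns_1,\cns_2,\ldots$ along a branch of $u$ corresponds to $\cns_0[\infty/a],\cns_1[\infty/a],\ldots$ along the matching branch of $u[a:=\infty]$. Given an infinite descending chain $a_0,a_1,a_2,\ldots$ of ordinal variables witnessing validity of $u$ on such a branch, the crucial observation is that $a$ can appear in this chain only as $a_0$: if $a_i = a$ for some $i \geq 1$, then $a_{i-1}$ would be an ordinal variable with $\cns_j \Vdash a < a_{i-1}$ for some $j$, so $a_{i-1}$ would be a proper ancestor of $a$ in $\cns_j$; but by the invariant above the only such ancestor is $\infty$, which is not a variable, a contradiction. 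Hence $a_1, a_2, \ldots$ are all distinct from $a$, are unaffected by the substitution, and by the entailment-preservation sublemma the inequalities $a_{i+1} < a_i$ still hold eventually in the substituted constraint sequence. The original chain (if $a_0 \neq a$) or its tail $a_1, a_2, \ldots$ (if $a_0 = a$) is therefore an infinite descending chain of ordinal variables on the corresponding branch of $u[a:=\infty]$. The main obstacle is this last step, and in particular the argument that $a$ can appear only at the head of a descending chain; this is precisely where the hypothesis that $a$ is a child of $\infty$ is needed, and is the reason the proposition is stated under exactly this restriction rather than for arbitrary substitutions.
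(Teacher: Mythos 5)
Your argument is correct; the paper itself dismisses this proposition as ``a routine check,'' and your three-step verification (well-definedness of the coinductive substitution, local soundness of each rule instance via the entailment-preservation sublemma, and transfer of the descent condition along corresponding branches) is exactly that check carried out in detail. In particular, your observation that the substituted variable $a$ can occur only at the head of an infinite descending chain --- because its only proper ancestor in any constraint along the branch is $\infty$, which is not a variable --- correctly isolates the one non-trivial point and shows precisely where the hypothesis that $a$ is a child of $\infty$ is used.
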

\begin{proof}
A routine check.
\end{proof}

It is clear that the operation of substituting $\infty$ for children of $\infty$ in the end sequent of a proof $u$ can be repeated until the constraint of the end sequent becomes trivial. We denote the (unique) proof obtained in this way by $u_\infty$.

\subsection{Reduction rules}

We are now ready to provide the rewrite rules - cut reductions and permutation rules -  that will provide computational content to proofs. We will present these rules using both the term notation for proofs and in the sequent calculus style; in the latter presentation we will annotate formulas on the left with variables to clarify the connection with the term presentation. The rewrite rules are not sufficient for a full cut elimination theorem, which is not our goal; we only include those rules that are necessary for the computational interpretation of proofs that we give in the next section. For instance, there are no permutation rules for left logical rules. 

We begin with reduction rules. Below, given a multi-set of formulas $\Gamma$, we abbreviate a sequent $\vec{w}$ consisting of one proof of the sequent $\cns \vdash A$ for each formula $A$ in $\Gamma$ by:
\[
\scalebox{0.7}{\begin{prooftree}
\hypo{\vec{w}}
\infer1{\cns \vdash \Gamma}
\end{prooftree}}
\]

\paragraph{$\vee$-reduction}

\[\mcut(\vec{w},\vee_R^i(v), \vec{z},x,\vee_L(x,y_0,y_1,u_0,u_1)) \longrightarrow \mcut(\vec{w},v,\vec{z},y_i,u_i)\]

\[
\scalebox{0.7}{\begin{prooftree}
\hypo{\vec{w}}
\infer1{\cns \vdash \Gamma}
\hypo{v}
\infer1{\cns \vdash A_i}
\infer1[$\vee_R$]{\cns \vdash A_0 \vee A_1}
\hypo{u_0}
\infer1{\cns, \vec{z} : \Gamma, y_0 : A_0 \vdash B}
\hypo{u_1}
\infer1{\cns,\vec{z} : \Gamma, y_ 1 : A_1 \vdash B}
\infer2[$\vee_L$]{\cns, \vec{z} : \Gamma, x : A_0 \vee A_1 \vdash B}
\infer3[$\mcut$]{\cns \vdash B}
\end{prooftree}}
\]
\[\Downarrow\]
\[
\scalebox{0.7}{\begin{prooftree}
\hypo{\vec{w}}
\infer1{\cns \vdash \Gamma}
\hypo{v}
\infer1{\cns\vdash A_i}
\hypo{u_i}
\infer1{\cns,\vec{z} : \Gamma, y_i : A_i \vdash B}
\infer3[$\mcut$]{\cns \vdash B}
\end{prooftree}}
\]

\paragraph{$\wedge$-reduction}

\[\mcut(\vec{u},\lprod_R(v_0,v_1), \vec{x},y,\lprod_L(y,z_0,z_1,w)) \longrightarrow \mcut(\vec{u},v_0,v_1,\vec{x},z_0,z_1,w)\]

\[
\scalebox{0.7}{\begin{prooftree}
\hypo{\vec{u}}
\infer1{\cns \vdash \Gamma}
\hypo{v_0}
\infer1{\cns \vdash  A_0}
\hypo{v_1}
\infer1{\cns\vdash A_1}
\infer2[$\wedge_R$]{\cns \vdash A_0 \wedge A_1}
\hypo{w}
\infer1{\cns,\vec{x} : \Gamma, z_0 : A, z_1 : B\vdash C}
\infer1[$\wedge_L$]{\cns,\vec{x} : \Gamma, y : A_0 \wedge A_1 \vdash C}
\infer3[$\mcut$]{\cns\vdash C}
\end{prooftree}}
\]
\[\Downarrow\]
\[
\scalebox{0.7}{\begin{prooftree}
\hypo{\vec{u}}
\infer1{\cns \vdash \Gamma}
\hypo{v_0}
\infer1{\cns\vdash  A_0}
\hypo{v_1}
\infer1{\cns \vdash A_1}
\hypo{w}
\infer1{\cns, \vec{x} : \Gamma, z_0 : A, z_1 : B\vdash C}
\infer4[$\mcut$]{\cns \vdash C}
\end{prooftree}}
\]

\paragraph{$\lfun$-reduction}

\begin{align*}
& \mathrel{\phantom{\longrightarrow}} \mcut(\vec{u}_0,\vec{u}_1, \lfun_R\!(x,v)),\vec{y_0},\vec{y}_1, z_0,\lfun_L\!(z_0,z_1,w_0,w_1)) \\
& \longrightarrow \mcut(\vec{u}_1,\mcut( \mcut(\vec{u}_0, \vec{y}_0,w_0),x, v), \vec{y}_1, z_1,  w_1)
\end{align*}

\[
\scalebox{0.7}{\begin{prooftree}
\hypo{\vec{u}_0}
\infer1{\cns \vdash \Gamma_0}
\hypo{\vec{u}_1}
\infer1{\cns \vdash \Gamma_1}
\hypo{v}
\infer1{\cns, x : A \vdash B}
\infer1[$\lfun_R$]{\cns \vdash A \lfun B}
\hypo{w_0}
\infer1{\cns, \vec{y}_0 : \Gamma_0 \vdash A}
\hypo{w_1}
\infer1{\cns, \vec{y}_1 : \Gamma_1, z_1 : B \vdash C}
\infer2[$\lfun_L$]{\cns, \vec{y}_0,\vec{y}_1 : \Gamma_0, \Gamma_1, z_0 : A \to B \vdash C}
\infer4[$\mcut$]{\cns\vdash C}
\end{prooftree}}
\]
\[\Downarrow\]
\[
\scalebox{0.7}{\begin{prooftree}
\hypo{\vec{u}_1}
\infer1{\cns \vdash \Gamma_1}
\hypo{\vec{u}_0}
\infer1{\cns \vdash  \Gamma_0}
\hypo{w_0}
\infer1{\cns, \vec{y}_0 : \Gamma_0 \vdash A}
\infer2[$\mcut$]{\cns \vdash A}
\hypo{v}
\infer1{\cns, x : A \vdash B}
\infer2[$\mcut$]{\cns \vdash B}
\hypo{w_1}
\infer1{\cns, \vec{y}_1 : \Gamma_1, z_1 : B \vdash C}
\infer3[$\mcut$]{\cns \vdash C}
\end{prooftree}}
\]

\paragraph{$\mu$-reduction}
\[\mcut(\vec{u}, \mu_R(\infty,\infty, v),\vec{x},y,\mu_L(\infty,a,y,z,w) ) \longrightarrow \mcut(\vec{u}, v,\vec{x},z,w[a:=\infty])\]

\[
\scalebox{0.7}{\begin{prooftree}
\hypo{\vec{u}}
\infer1{\cns \vdash \Gamma}
\hypo{v}
\infer1{\cns \vdash A[\mu^\infty X.A/X]}
\infer1{\cns \vdash \mu^\infty X. A}
\hypo{w}
\infer1{\cns+_\infty a, \vec{x} : \Gamma, z : A[\mu^a X.A/X] \vdash B}
\infer1{\cns, \vec{x} : \Gamma, y : \mu^\infty X.A \vdash B}
\infer3{\vdash B}
\end{prooftree}}
\]
\[\Downarrow\]
\[
\scalebox{0.7}{\begin{prooftree}
\hypo{\vec{u}}
\infer1{\cns \vdash \Gamma}
\hypo{v}
\infer1{\cns \vdash A[\mu^\infty X.A/X]}
\hypo{w[a := \infty]}
\infer1{\cns, \vec{x} : \Gamma, z : A[\mu^\infty X.A/X] \vdash B}
\infer3{\cns \vdash B}
\end{prooftree}}
\]

\paragraph{$\nu$-reduction}

\[\mcut(\vec{u}, \nu_R(\infty,a, v),\vec{x},y,\nu_L(\infty,\infty,y,z,w) ) \longrightarrow \mcut(\vec{u}, v[a:=\infty],\vec{x},z,w)\]

\[
\scalebox{0.7}{\begin{prooftree}
\hypo{\vec{u}}
\infer1{\cns, \Gamma}
\hypo{v}
\infer1{\cns+_\infty a, \vdash A[\nu^a X.A/X]}
\infer1{\cns \vdash \nu^\infty X. A}
\hypo{w}
\infer1{\cns,\vec{x} : \Gamma, z : A[\nu^\infty X.A/X] \vdash B}
\infer1{\cns, \vec{x} : \Gamma, y : \nu^\infty X.A \vdash B}
\infer3{\cns \vdash B}
\end{prooftree}}
\]
\[\Downarrow\]
\[
\scalebox{0.7}{\begin{prooftree}
\hypo{\vec{u}}
\infer1{\cns \vdash \Gamma}
\hypo{v[a:=\infty]}
\infer1{\cns \vdash A[\nu^\infty X.A/X]}
\hypo{w}
\infer1{\cns, \vec{x} : \Gamma, z : A[\nu^\infty X.A/X] \vdash B}
\infer3{\cns \vdash B}
\end{prooftree}}
\]

\paragraph{$\exists$-reduction}
\[\mcut(\vec{u}, \exists_R(\infty,\infty, v),\vec{x},y,\exists_L(\infty,a,y,z,w) ) \longrightarrow \mcut(\vec{u}, v,\vec{x},z,w[a:=\infty])\]

\[
\scalebox{0.7}{\begin{prooftree}
\hypo{\vec{u}}
\infer1{\cns \vdash \Gamma}
\hypo{v}
\infer1{\cns \vdash A[\infty/c]}
\infer1{\cns \vdash \exists c < \infty. A}
\hypo{w}
\infer1{\cns+_\infty a, \vec{x} : \Gamma, z : A[a/c] \vdash B}
\infer1{\cns, \vec{x} : \Gamma, y : \exists c < \infty : A \vdash B}
\infer3{\vdash B}
\end{prooftree}}
\]
\[\Downarrow\]
\[
\scalebox{0.7}{\begin{prooftree}
\hypo{\vec{u}}
\infer1{\cns \vdash \Gamma}
\hypo{v}
\infer1{\cns \vdash A[\infty/c]}
\hypo{w[a := \infty]}
\infer1{\cns, \vec{x} : \Gamma, z : A[\infty/c] \vdash B}
\infer3{\cns \vdash B}
\end{prooftree}}
\]

\paragraph{$\forall$-reduction}

\[\mcut(\vec{u}, \forall_R(\infty,a, v),\vec{x},y,\forall_L(\infty,\infty,y,z,w) ) \longrightarrow \mcut(\vec{u}, v[a:=\infty],\vec{x},z,w)\]

\[
\scalebox{0.7}{\begin{prooftree}
\hypo{\vec{u}}
\infer1{\cns, \Gamma}
\hypo{v}
\infer1{\cns+_\infty a \vdash A[a/c]}
\infer1{\cns \vdash \forall c < \infty: A}
\hypo{w}
\infer1{\cns,\vec{x} : \Gamma, z : A[\infty/c] \vdash B}
\infer1{\cns, \vec{x} : \Gamma, y : \forall c < \infty : A \vdash B}
\infer3{\cns \vdash B}
\end{prooftree}}
\]
\[\Downarrow\]
\[
\scalebox{0.7}{\begin{prooftree}
\hypo{\vec{u}}
\infer1{\cns \vdash \Gamma}
\hypo{v[a:=\infty]}
\infer1{\cns \vdash A[\infty/c]}
\hypo{w}
\infer1{\cns, \vec{x} : \Gamma, z : A[\infty/c] \vdash B}
\infer3{\cns \vdash B}
\end{prooftree}}
\]

\paragraph{Identity reduction}

\[\mcut(u,x,\mathsf{id}(x)) \longrightarrow u\]

\[
\scalebox{0.7}{\begin{prooftree}
\hypo{u}
\infer1{\cns \vdash A}
\hypo{}
\infer1[$\mathsf{id}$]{\cns, x : A \vdash A}
\infer2[$\mcut$]{\cns \vdash A}
\end{prooftree}}
\quad
\Rightarrow
\quad
\scalebox{0.7}{\begin{prooftree}
\hypo{u}
\infer1{\cns \vdash A}
\end{prooftree}}
\]

\paragraph{Reduction of contraction}

\[\mcut(\vec{u},v,\vec{x},y,C(y,z_1,z_2,w)) \longrightarrow \mcut(\vec{u},v,v,\vec{x},z_1,z_2,w)\]

\[
\scalebox{0.7}{\begin{prooftree}
\hypo{\vec{u}}
\infer1{\cns \vdash \Gamma}
\hypo{v}
\infer1{A}
\hypo{w}
\infer1{\cns, \vec{x} : \Gamma, z_1 : A, z_2 : A \vdash B}
\infer1{\cns, \vec{x} : \Gamma, y : A \vdash B}
\infer3[$\mcut$]{\cns \vdash B}
\end{prooftree}}
\quad
\Rightarrow
\quad
\scalebox{0.7}{\begin{prooftree}
\hypo{\vec{u}}
\infer1{\cns \vdash \Gamma}
\hypo{v}
\infer1{A}
\hypo{v}
\infer1{A}
\hypo{w}
\infer1{\cns, \vec{x} : \Gamma, z_1 : A, z_2 : A \vdash B}
\infer4[$\mcut$]{\cns \vdash B}
\end{prooftree}}
\]

\paragraph{Reduction of weakening}

\[\mcut(\vec{u},v,\vec{x},y,W(y,w)) \longrightarrow \mcut(\vec{u},\vec{x},w)\]

\[
\scalebox{0.7}{\begin{prooftree}
\hypo{\vec{u}}
\infer1{\cns \vdash \Gamma}
\hypo{v}
\infer1{A}
\hypo{w}
\infer1{\cns, \vec{x} : \Gamma \vdash B}
\infer1{\cns, \vec{x} : \Gamma, y : A \vdash B}
\infer3[$\mcut$]{\cns \vdash B}
\end{prooftree}}
\quad
\Rightarrow
\quad
\scalebox{0.7}{\begin{prooftree}
\hypo{\vec{u}}
\infer1{\cns \vdash \Gamma}
\hypo{w}
\infer1{\cns, \vec{x} : \Gamma  \vdash B}
\infer2[$\mcut$]{\cns \vdash B}
\end{prooftree}}
\]

Next we introduce permutation rules:

\paragraph{Right $\vee$-permutation}
\[\mcut(\vec{u},\vec{x},\vee_R^i(v)) \longrightarrow \vee_R^i(\mcut(\vec{u},\vec{x},v)) \]

\[
\scalebox{0.7}{\begin{prooftree}
\hypo{\vec{u}}
\infer1{\cns \vdash \Gamma}
\hypo{v}
\infer1{\cns,\Gamma \vdash A_i}
\infer1{\cns, \Gamma \vdash A_0 \vee A_1}
\infer2[$\mcut$]{\cns \vdash A_0 \vee A_1}
\end{prooftree}}
\quad
\Rightarrow
\quad
\scalebox{0.7}{\begin{prooftree}
\hypo{\vec{u}}
\infer1{\cns \vdash \Gamma}
\hypo{v}
\infer1{\cns,\Gamma \vdash A_i}
\infer2[$\mcut$]{\cns \vdash A_i}
\infer1{\cns \vdash A_0 \vee A_1}
\end{prooftree}}
\]

\paragraph{Right $\lprod$-permutation}
\[\mcut(\vec{u}_1,\vec{u_2},\vec{x}_1,\vec{x}_2,\wedge_R(v,w)) \longrightarrow \wedge_R(\mcut(\vec{u}_1, \vec{x}_1,v), \mcut(\vec{u}_2,\vec{x}_2,w))\]

\[
\scalebox{0.7}{\begin{prooftree}
\hypo{\vec{u}_1}
\infer1{\cns \vdash \Gamma}
\hypo{\vec{u}_2}
\infer1{\cns \vdash \Delta}
\hypo{v}
\infer1{\cns,\Gamma \vdash A}
\hypo{w}
\infer1{\cns, \Delta \vdash B}
\infer2{\cns, \Gamma,\Delta \vdash A \lprod B}
\infer3[$\mcut$]{\cns \vdash A \lprod B}
\end{prooftree}}
\quad
\Rightarrow
\quad
\scalebox{0.7}{\begin{prooftree}
\hypo{\vec{u}_1}
\infer1{\cns \vdash \Gamma}
\hypo{v}
\infer1{\cns,\Gamma \vdash A}
\infer2[$\mcut$]{\cns \vdash A}
\hypo{\vec{u}_2}
\infer1{\cns \vdash \Delta}
\hypo{w}
\infer1{\cns, \Delta \vdash B}
\infer2[$\mcut$]{\cns \vdash B}
\infer2{\cns \vdash A \lprod B}
\end{prooftree}}
\]
\paragraph{Right $\lfun$-permutation}
\[\mcut(\vec{u},\vec{x}, {\lfun_R}(z,v)) \longrightarrow {\lfun_R}(z,\mcut(\vec{x},\vec{u},v))\]
\[
\scalebox{0.7}{\begin{prooftree}
\hypo{\vec{u}}
\infer1{\cns \vdash \Gamma}
\hypo{v}
\infer1{\cns,\Gamma, A \vdash B}
\infer1{\cns, \Gamma \vdash A \to B}
\infer2[$\mcut$]{\cns \vdash A \to B}
\end{prooftree}}
\quad
\Rightarrow
\quad
\scalebox{0.7}{\begin{prooftree}
\hypo{\vec{u}}
\infer1{\cns \vdash \Gamma}
\hypo{v}
\infer1{\cns,\Gamma, A \vdash B}
\infer2[$\mcut$]{\cns, A \vdash B}
\infer1{\cns \vdash A \to B}
\end{prooftree}}
\]

\paragraph{Right $\mu$-permutation}

\[\mcut(\vec{u},\vec{x}, {\mu_R}(\infty,\infty, v)) \longrightarrow {\mu_R}(\infty,\infty,\mcut(\vec{x},\vec{u},v))\]
\[
\scalebox{0.7}{\begin{prooftree}
\hypo{\vec{u}}
\infer1{\cns \vdash \Gamma}
\hypo{v}
\infer1{\cns,\Gamma  \vdash A[\mu^\infty X. A/X]}
\infer1{\cns, \Gamma \vdash \mu^\infty X. A}
\infer2[$\mcut$]{\cns \vdash \mu^\infty X. A}
\end{prooftree}}
\quad
\Rightarrow
\quad
\scalebox{0.7}{\begin{prooftree}
\hypo{\vec{u}}
\infer1{\cns \vdash \Gamma}
\hypo{v}
\infer1{\cns,\Gamma \vdash A[\mu^\infty X.A/X]}
\infer2[$\mcut$]{\cns \vdash A[\mu^\infty X.A/X]}
\infer1{\cns \vdash \mu^\infty X. A}
\end{prooftree}}
\]

\paragraph{Right $\nu$-permutation}
\[\mcut(\vec{u},\vec{x}, {\nu_R}(\infty,a, v)) \longrightarrow {\nu_R}(\infty,a,\mcut(\vec{x},\vec{u}',v))\]

\[
\scalebox{0.7}{\begin{prooftree}
\hypo{\vec{u}}
\infer1{\cns \vdash \Gamma}
\hypo{v}
\infer1{\cns+_\infty a,\Gamma  \vdash A[\nu^a X. A/X]}
\infer1{\cns, \Gamma \vdash \nu^\infty X. A}
\infer2[$\mcut$]{\cns \vdash \nu^\infty X. A}
\end{prooftree}}
\quad
\Rightarrow
\quad
\scalebox{0.7}{\begin{prooftree}
\hypo{\vec{u}'}
\infer1{\cns+_\infty a \vdash \Gamma}
\hypo{v}
\infer1{\cns+_\infty a,\Gamma \vdash A[\nu^a X.A/X]}
\infer2[$\mcut$]{\cns +_\infty a \vdash A[\nu^a X.A/X]}
\infer1{\cns \vdash \nu^\infty X. A}
\end{prooftree}}
\]

\paragraph{Right $\exists$-permutation}
\[\mcut(\vec{u},\vec{x}, {\exists_R}(\infty,\infty, v)) \longrightarrow {\exists_R}(\infty,\infty,\mcut(\vec{x},\vec{u},v))\]

\[
\scalebox{0.7}{\begin{prooftree}
\hypo{\vec{u}}
\infer1{\cns \vdash \Gamma}
\hypo{v}
\infer1{\cns,\Gamma  \vdash A[\infty/c]}
\infer1{\cns, \Gamma \vdash \exists c < \infty: A}
\infer2[$\mcut$]{\cns \vdash \exists c < \infty: A}
\end{prooftree}}
\quad
\Rightarrow
\quad
\scalebox{0.7}{\begin{prooftree}
\hypo{\vec{u}}
\infer1{\cns \vdash \Gamma}
\hypo{v}
\infer1{\cns,\Gamma \vdash A[\infty/c]}
\infer2[$\mcut$]{\cns \vdash A[\infty/c]}
\infer1{\cns \vdash \exists c < \infty: A}
\end{prooftree}}
\]

\paragraph{Right $\forall$-permutation}

\[\mcut(\vec{u},\vec{x}, {\forall_R}(\infty,a, v)) \longrightarrow {\forall_R}(\infty,a,\mcut(\vec{x},\vec{u},v))\]
\[
\scalebox{0.7}{\begin{prooftree}
\hypo{\vec{u}}
\infer1{\cns \vdash \Gamma}
\hypo{v}
\infer1{\cns+_\infty a,\Gamma  \vdash A[a/c]}
\infer1{\cns, \Gamma \vdash \forall c < \infty: A}
\infer2[$\mcut$]{\cns \vdash \forall c < \infty: A}
\end{prooftree}}
\quad
\Rightarrow
\quad
\scalebox{0.7}{\begin{prooftree}
\hypo{\vec{u}'}
\infer1{\cns+_\infty a \vdash \Gamma}
\hypo{v}
\infer1{\cns+_\infty a,\Gamma \vdash A[a/c]}
\infer2[$\mcut$]{\cns +_\infty a \vdash A[a/c]}
\infer1{\cns \vdash \forall c < \infty: A}
\end{prooftree}}
\]

Here, the proofs $\vec{u}'$ are obtained from $\vec{u}$ by carrying along the eigenvariable $a$, where we may assume without loss of generality that $a$ does not occur in any constraint in any proof in $\vec{u}$.

\paragraph{Rewrite relation}

With the reduction and permutation rules in place, we now define the rewrite relation $u \longrightarrow v$ between proof terms. 

\begin{defi}
Given proofs $u,v$ we say that $u$ rewrites to $v$ in one step, written $u \longrightarrow^1 v$, if $v$ results from applying one of the rewrite rules to a subterm of $u$ subject to the restriction that rewrite rules may only be applied to lowest reducible cuts. That is, if $u$ contains a subterm of the form $\cut(\vec{w},\vec{x},u')$ to which one of the rewrite rules applies, then no rewrite rule may be applied to any subterm of $u'$  nor any subterm of any of the terms $\vec{w}$. 

We write $u \longrightarrow v$ for the reflexive transitive closure of the one-step rewrite relation $u \longrightarrow^1 v$.  
\end{defi}

With the restriction to lowest reducible cuts, it is easy to establish confluence:
\begin{prop}[Confluence]
\label{p:confluence}
If $u \longrightarrow v$ and $u \longrightarrow v'$ then there is a proof $w$ such that $v \longrightarrow w$ and $v' \longrightarrow w$.
\end{prop}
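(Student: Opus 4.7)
The plan is to establish the diamond property for the one-step reduction $\longrightarrow^1$ and then lift it to confluence of its reflexive transitive closure $\longrightarrow$ by a standard argument. The key structural observation is that the restriction to \emph{lowest} reducible cuts forces distinct reduction sites in a proof to be disjoint: if $C_1$ and $C_2$ are two lowest reducible cuts in $u$ and $C_1$ were a subterm of $C_2$, then the subproofs of $C_2$ would contain the reducible cut $C_1$, contradicting the definition of $C_2$ as a lowest reducible cut. Hence distinct lowest reducible cuts always sit in parallel positions in the proof tree.

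Given two one-step reductions $u \longrightarrow^1 v$ at site $C_1$ and $u \longrightarrow^1 v'$ at site $C_2$, I argue by cases. If $C_1 = C_2$, then the applicable rewrite rule is uniquely determined by the syntactic shape of the cut: the top rule of the main proof distinguishes contraction, weakening, identity, logical and permutation rules from one another with no overlap, and only the right $\nu$-permutation involves a fresh eigenvariable, where we work up to alpha-equivalence. Hence $v = v'$ up to alpha. If $C_1 \neq C_2$, disjointness implies that the rewrite at $C_1$ preserves $C_2$ together with its ambient context, and symmetrically for $C_1$ in $v'$. So $C_2$ remains a lowest reducible cut in $v$, and applying the opposing rewrite in each reduct produces a common successor $w$.

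The main obstacle is verifying that every rewrite rule really is local to its site $C_1$. The delicate rules are contraction-reduction (which duplicates a subproof $v$), weakening-reduction (which discards a subproof), the $\mu$- and $\nu$-reductions (which substitute $\infty$ for an eigenvariable inside a subproof via the operation of Figure~\ref{f:substitution}), and the right $\nu$-permutation (which introduces a fresh eigenvariable carried along the siblings $\vec{u}'$). In every case the affected subproofs lie strictly below the rewrite site $C_1$, and by the lowest-cut hypothesis these contain no reducible cuts to start with; hence no modification can reach a disjoint site $C_2$, nor can it create a new reducible cut inside $C_2$. From the diamond property for $\longrightarrow^1$, confluence of $\longrightarrow$ then follows by the standard strip-lemma induction on the lengths of the two reduction chains.
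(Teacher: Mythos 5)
Your argument is correct and matches the intended route: the paper's proof is just the remark that the restriction to lowest reducible cuts makes confluence easy, and your observation that this restriction forces distinct redexes into parallel positions (so that $\longrightarrow^1$ has the diamond property, which then lifts to $\longrightarrow$) is precisely the point being gestured at. Your extra checks on determinism at a single site and on the locality of the contraction, weakening, substitution and $\nu$-permutation rules supply exactly the details the paper omits.
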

\begin{proof}
We introduce a relation of parallel reduction $\Longrightarrow$, show that it is confluent and that its reflexive transitive closure is equal to $\longrightarrow$. It follows from this that the rewrite relation $\longrightarrow$ is also confluent, since the reflexive transitive closure of a confluent relation is also confluent. 

The relation $\Longrightarrow$ is defined to be the smallest relation on proof trees closed under the following rules:
\begin{itemize}
\item $u \Longrightarrow u$.
\item If $u \longrightarrow^1 v$ then $u \Longrightarrow v$.
\item If $v_1,\hdots,v_n$ are disjoint subtrees of $u$, $v_i \Longrightarrow v_i'$ for each $i \in \{1,\hdots,n\}$, and $u'$ is the result of replacing each $v_i$ by $v_i'$ in $u$, then $u \Longrightarrow u'$.
\end{itemize}
We write $\Longrightarrow^*$ for the reflexive transitive closure of $\Longrightarrow$. Since $\longrightarrow$ is the reflexive transitive closure of $\longrightarrow^1$, the inclusion $\longrightarrow\; \subseteq \; \Longrightarrow^*$ is immediate. For the reverse inclusion, we show that $u \Longrightarrow v$ implies $u \longrightarrow v$. This is by a straightforward induction on the derivation of $u \Longrightarrow v$ according to the rules given above:
\begin{itemize}
\item If $u = v$ then $u \longrightarrow v$ since $\longrightarrow$ is reflexive by definition. 
\item If $u \longrightarrow^1 v$ then $u \longrightarrow v$ since $\longrightarrow$ is the reflexive transitive closure of $\longrightarrow^1$.
\item If $v_1,\hdots,v_n$ are disjoint subtrees of $u$, $v_i \Longrightarrow v_i'$ for each $i \in \{1,\hdots,n\}$, and $u'$ is the result of replacing each $v_i$ by $v_i'$ in $u$, then we can reason by sub-induction on the number of subtrees $n$:
\begin{itemize}
\item For $n = 1$, $u'$ is the result of replacing a subtree $v$ of $u$ by some $v'$ such that $v \Longrightarrow v'$. By the main induction hypothesis, $v \longrightarrow v'$, and so $u \longrightarrow u'$. 
\item Supposing $n = m + 1$, where the sub-induction hypothesis holds for $m$, let $u''$ be the result of replacing $v_1$ by $v_1'$ in $u$. The main induction hypothesis  gives $v_1 \longrightarrow v_1'$, hence $u \longrightarrow u''$. The sub-induction hypothesis on $m$ yields $u'' \longrightarrow u'$. Since $\longrightarrow$ is transitive, we get $u \longrightarrow u'$ as required. 
\end{itemize}
\end{itemize}

It remains to show that $\Longrightarrow$ is confluent.  First, if  $u \Longrightarrow v$, then a straightforward induction on the derivation of $u \Longrightarrow v$ shows that there must be a finite set $\{u_1,\hdots u_n\}$ of subtrees of $u$, such that:
\begin{itemize}
\item each subtree $u_i$ ends with a multicut,
\item $v$ is the result of replacing each $u_i$ by some $v_i$, where $v_i$ is obtained by applying a reduction or permutation rule to $u_i$, and
\item the subtrees $u_1,\hdots, u_n$ are pairwise disjoint. 
\end{itemize}
With this in mind, suppose that $u \Longrightarrow w$ and $u \Longrightarrow w'$. Let $\{u_1,\hdots,u_n\}$ and $\{u'_1,\hdots,u'_m\}$ be finite sets of subtrees of $u$, each ending with a multicut, such that:
\begin{itemize}
\item $w$ is the result of replacing each $u_i$ by some $v_i$, where $v_i$ is obtained by applying a reduction or permutation rule to $u_i$,
\item $w'$ is the result of replacing each $u'_j$ by some $v'_j$, where $v'_j$ is obtained by applying a reduction or permutation rule to $u'_j$,
\item the subtrees $u_1,\hdots, u_n$ are pairwise disjoint, and
\item the subtrees $u'_1,\hdots,u'_m$ are pairwise disjoint.
\end{itemize}
Now, let $i \in \{1,\hdots, n\}$ and $j \in \{1,\hdots, m\}$ be such that the subtrees $u_i$ and $u'_j$ are \emph{not} disjoint, meaning that one is a subtree of the other. Since reductions and permutations may only be applied to lowest reducible cuts in $u$, this means that in fact $u_i = u'_j$. If the rightmost subtree of $u_i$ ends with a right logical rule, then $v_i$ must result from applying the corresponding right permutation rule to $u_i$, and $v'_j$ must result from appying the same right permutation rule to $u'_j$. Similarly, if the rightmost subtree of $u_i$ ends with a left logical rule, a weakening or a contraction, then $v_i$ results from applying the corresponding reduction rule to $u_i$, and likewise $v'_j$ results from applying the same reduction rule to $u'_j$. So let $w''$ be the result of replacing each subtree $u_i$ in $u$ by $v_i$, and each subtree $u'_j$ of $u$ by $v'_j$; we have just shown that no conflict arises in this definition of $w''$. Furthermore, we have $w \Longrightarrow w''$ and $w' \Longrightarrow w''$ as required. 
\end{proof}
It should be emphasized that Proposition \ref{p:confluence} - which may seem surprising in the current setting of sequent calculus, where cut elimination is usually only confluent up to reordering of inference rules - crucially relies on the restricted set of reduction and permutation rules that we have included here. These rules have been chosen precisely in order for the computability semantics that we will set up in the next section to work as smoothly as possible. If our goal here were to obtain a full cut elimination result for the logic, additional rules would need to be included, in particular left permutation rules. And then, Proposition \ref{p:confluence} would no longer hold. For example, consider the following proof:
\[
\scalebox{0.7}{
\begin{prooftree}
\hypo{}
\infer1[$id$]{A \vdash A}
\hypo{}
\infer1[$id$]{A \vdash A}
\infer2[$\vee_L$]{A \vee A \vdash A}
\hypo{}
\infer1[$id$]{A \vdash A }
\infer1[$\vee_R$]{A \vdash A \vee A}
\infer2[$cut$]{A \vee A \vdash A \vee A}
\end{prooftree}
}
\]
Here, with a  left permutation rule in place, we would have a choice of two different ways of reducing the proof, according to whether a permutation of the right $\vee$-rule or the left $\vee$-rule is performed first. These two reduction strategies would yield the following two normal forms:
\[
\scalebox{0.7}{
\begin{prooftree}
\hypo{}
\infer1[$id$]{A \vdash A}
\hypo{}
\infer1[$id$]{A \vdash A}
\infer2[$\vee_L$]{A \vee A \vdash A}
\infer1[$\vee_R$]{A \vee A \vdash A \vee A}
\end{prooftree}
\qquad
\begin{prooftree}
\hypo{}
\infer1[$id$]{A \vdash A}
\infer1[$\vee_R$]{A \vdash A \vee A}
\hypo{}
\infer1[$id$]{A \vdash A}
\infer1[$\vee_R$]{A \vdash A \vee A}
\infer2[$\vee_L$]{A \vee A \vdash A \vee A}
\end{prooftree}
}
\]

\section{Computability}

In this section we will provide our computational interpretation of our proof system, by defining a notion of \emph{computability} for closed derivations. Our goal is to show that every valid proof is computable in this sense.

\subsection{Definition of computability}

To define the notion of computability, we will make use of an extended, infinitary version of the language in which fixpoints may be annotated by actual ordinals. The syntax is as for $\mathcal{L}_{\mathsf{ID}}$, except that we extend ordinal terms to include actual ordinals. 

Let $D$ be the set of all derivations. Throughout the rest of the paper, we let $\Omega$ denote a fixed ordinal with cardinality greater than that of $D$ (i.e.  greater than $2^{\aleph_0}$).
We now define the syntax of the infinitary language by the following grammar:
\[\mathcal{I}_{\mathsf{ID}} \ni A:= \top \mid A \lsum A \mid A \lprod A \mid A \lfun A \mid \mu^\alpha X.B \mid \nu^\alpha X. B \mid \exists a < \alpha : A \mid \forall a < \alpha: A\]
\[\mathcal{I}^+(X) \ni B := A \mid X \mid \top \mid B \lsum B \mid B \lprod B \mid A \to B\]
The only difference with the grammar for $\mathcal{L}_{\mathsf{ID}}$ is that, here, the metavariable $\alpha$ ranges over ordinal variables, the constant $\infty$, and actual ordinals $ \xi \leq \Omega +1$. Given an $\mathcal{I}_\mathsf{ID}$-formula $A$, its \emph{underlying $\mathcal{L}_{\mathsf{ID}}$-formula} $\underline{A}$ is defined by replacing all  occurrences of ordinals by $\infty$. For example:
\[\underline{\forall a < \omega: (\mu^a X.X \to \mu^{\omega + 1} X. X} ) = \forall a < \infty: (\mu^a X.X \to \mu^\infty X.X)\]
Note that if $A$ is a closed formula, then so is $\underline{A}$.

We can define an ordering $\prec$ over formulas in $\mathcal{I}_{\mathsf{ID}}$ as the smallest relation satisfying the following conditions:
\begin{itemize}
\item $A,B \prec A \circ B$ for $\circ \in \{\wedge, \vee, \lfun\}$,
\item for all ordinals $\xi$, $A[\eta^\xi X. A/X] \prec \eta^\infty X. A$ for $\eta \in \{\mu,\nu\}$,
\item for all ordinals $\rho < \xi$, $A[\eta^\rho X. A/X] \prec \eta^\xi X. A$ for $\eta \in \{\mu,\nu\}$.
\item for all ordinals $\xi$, $A[\xi/a] \prec \exists a < \infty : A$.
\item for all ordinals $\xi$, $A[\xi/a] \prec \forall a < \infty : A$.
\item for all ordinals $\rho < \xi$, $A[\rho/a] \prec \exists a < \xi : A$.
\item for all ordinals $\rho < \xi$, $A[\rho/a] \prec \forall a < \xi : A$.
\end{itemize}

\begin{prop}
The relation $\prec$ is wellfounded.
\end{prop}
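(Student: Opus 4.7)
The plan is to derive a contradiction from the existence of an infinite $\prec$-descending chain $A_0 \succ A_1 \succ A_2 \succ \cdots$ in $\mathcal{I}_{\mathsf{ID}}$. The first thing I would point out is a key structural observation: every $\prec$-step acts at the root of the formula. A step of the first kind picks an operand of a root-level connective; a step of the second or third kind unfolds the root-level fixpoint binder $\eta^\alpha X. B$, replacing $\alpha$ with a strictly smaller ordinal $\beta$ (or any ordinal if $\alpha = \infty$) and grafting copies of $\eta^\beta X. B$ at each free occurrence of $X$ in $B$. So $\prec$ only ever modifies the formula at the root, which will be the organizing principle of the argument.

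Next I would pass to underlying pure formulas $\underline{A_n}$. Each step is either descent into a pure connective operand or unfolding of a root-level $\eta^\infty$-fixpoint; hence $(\underline{A_n})_n$ lies entirely in the set $\mathcal{F}$ of pure formulas reachable from $\underline{A_0}$ by iterated subformula and unfolding operations. A standard Fischer-Ladner argument shows $\mathcal{F}$ is finite, since each unfolding only introduces finitely many new subformulas that are then absorbed into the closure. Pigeonhole then forces some pure formula to recur infinitely often in $(\underline{A_n})_n$, and by a finer refinement one extracts a subsequence $(A_{n_k})_k$ such that the walk of underlying pure formulas from $A_{n_k}$ to $A_{n_{k+1}}$ follows a fixed cyclic pattern of subformula and unfolding operations, with $\underline{A_{n_k}} = C$ for some fixed $C \in \mathcal{F}$ independent of $k$.

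The final step is to track the ordinal annotations along each cycle. A third-kind step strictly decreases an ordinal annotation at some binder position, and a second-kind step converts an $\infty$-annotation at a position into an arbitrary ordinal. Via a trace-style analysis of how annotation positions are transformed under subformula steps (which destroy some positions) and unfolding steps (which create several new positions inside the body of the fixpoint, one per free occurrence of $X$), I would identify a binder position whose annotation persists across each cycle. On each pass through the cycle, this annotation either witnesses a strict ordinal decrease, or is converted once from $\infty$ to an ordinal before decreasing strictly on subsequent cycles. Iterating over infinitely many cycles yields an infinite strictly descending chain of ordinals, contradicting wellfoundedness of $\mathrm{Ord}$.

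The hard part will be the trace analysis in this last step: making precise how a given annotation position evolves across a cycle, and identifying one that both persists and strictly decreases. This is structurally analogous to the trace condition underlying validity of non-wellfounded proofs discussed in Section 2, now applied not to proof branches but to the $\prec$-relation on formulas themselves.
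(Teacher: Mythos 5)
The paper itself dismisses this proposition as ``routine,'' and the routine argument it has in mind is almost certainly a direct accessibility induction, not the infinite-chain analysis you propose. Your route can in principle be made to work---it is essentially the soundness argument for trace conditions on non-wellfounded proofs---but as written it has two genuine gaps, both located exactly where you flag the difficulty. First, the ``finer refinement'' yielding a \emph{fixed cyclic pattern} between consecutive occurrences of the recurring pure formula $C$ is not available by pigeonhole: the segments between recurrences have unbounded length, so there are infinitely many candidate patterns and nothing to pigeonhole on. One can substitute a Ramsey-type argument (colouring pairs of indices by the finite ``trace transition profile'' between them, which is a finitary object because the Fischer--Ladner closure is finite), but that is a substantially heavier tool than the one you invoke. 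Second, and more seriously, the existence of a binder position that \emph{both} persists forever \emph{and} is unfolded (hence strictly decreased) infinitely often is precisely the hard combinatorial content, and it does not follow from persistence alone: a trace can survive forever while dodging every unfolding, by slipping into a closed subformula of the body each time the root fixpoint is unfolded. For instance, in a chain descending from $\mu^\infty X.(N \wedge X)$, the trace that always follows the copy of $N$ sitting inside the freshly substituted fixpoint is never itself unfolded. K\"onig's lemma gives you \emph{an} infinite trace, but not a progressing one; selecting a progressing one is where all the work lies, and your sketch does not indicate how to do it.

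All of this machinery is avoidable, which is presumably why the paper omits the proof. Since $\prec$ only relates a formula to its immediate operands or to a one-step unfolding of its root binder, one shows directly that every formula lies in the accessible (wellfounded) part of $\prec$: argue by structural induction on the underlying $\underline{\mathcal{L}}_{\mathsf{ID}}$-formula, with an inner transfinite induction on the annotation $\xi$ to handle $\eta^\xi X.A$, whose only $\prec$-predecessors are the formulas $A[\eta^\rho X.A/X]$ with $\rho<\xi$; the clause for $\eta^\infty X.A$ then reduces to all the $\eta^\xi X.A$ at once. The only lemma needed is that $A[B/X]$ is accessible whenever $B$ is accessible and every closed subformula of the body $A\in\mathcal{I}^+(X)$ is accessible; this is a structural induction on $A$ that goes through precisely because bodies in $\mathcal{I}^+(X)$ are built from $X$, $\top$, closed formulas and connectives, so accessibility of a compound formula is inherited from its finitely many operands. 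I would recommend replacing your argument with this one: it is a few lines long, uses no infinitary combinatorics, and is what ``routine'' means here.
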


\begin{proof}
For a contradiction, suppose that $A_1 A_2 A_3\hdots$ is an infinite series of  $\mathcal{I}_{\mathsf{ID}}$-formulas such that $A_{i+1} \prec A_i$ for all $i \in \mathbb{N}$. We claim that there must be some $k \in \mathbb{N}$ and some fixed formula $B \in \mathcal{I}_{\mathsf{ID}}^+(X)$ such that, for all $i \geq k$, $A_i$ is of the form:
\[C[\eta^{\xi} X. B/X]\]
for some ordinal $\xi$ and some subformula $C$ of $B$ that has at least one free occurrence of the fixpoint variable $X$. This is simply because the only clauses in the definition of $\prec$ that do not decrease the length of the formula are those of for fixpoint formulas, and we have no dependencies between fixpoint variables.

So fix such an $n$, and write $A_i = C_i[\eta^{\xi_i} X. B/X]$ for each $i \geq n$. We claim that $\xi_{i+1} \leq \xi_i$ whenever $n \leq i$, and $\xi_{i + 1} < \xi_i$ for infinitely many $i \geq n$. This leads to an infinite descending chain of ordinals, and hence a contradiction.  

To prove the claim we reason by a case distinction on the shape of the subformula $C_i$:
\paragraph{Case $C_i =X$:} then $A_i$ is $\eta^{\xi_i} X. B$, and $A_{i+1}$ must be $B[\eta^{\xi_{i+1} X. B}/X]$ where $\xi_{i + 1} < \xi_i$.

\paragraph{Case $C_i \in \mathcal{I}_{\mathsf{ID}}$:} This case cannot occur since $C_i$ must have at least one free occurrence of $X$. 

\paragraph{Case $C_i = D_0 \circ D_1$, $\circ \in \{\wedge, \vee, \lfun\}$:} then $A_i$ is $D_0[\eta^{\xi_i} X. B/] \circ D_1[\eta^{\xi_i} X.B/X]$ and $A_{i+1}$ is either $D_0[\eta^{\xi_i} X. B/]$ or $D_1[\eta^{\xi_i} X.B/X]$. In either case we have $\xi_{i + 1} = \xi_i$.

It is easy to see that the first of these cases must occur infinitely many times, and the contradiction thus follows.
\end{proof}

From now on, when we say that a proof proceeds by ``induction on the complexity'' of formulas in $\mathcal{I}_\mathsf{ID}$, we actually mean induction on the wellfounded relation $\prec$.

\begin{defi}
Given a closed formula $A$ of the language  $\mathcal{I}_\mathsf{ID}$ we define a set  $\rset{A}$ of closed derivations of the closed sequent $\vdash \underline{A}$ by induction:
\begin{itemize}
\item  $u \in \rset{\top}$ iff $u \longrightarrow \mathsf{ax}_\top$
\item  $u \in \rset{A \lprod B}$ iff $ u \longrightarrow \lprod_R(v,w)$ for some $v \in \rset{A}$ and  $w \in \rset{B}$.
\item  $u \in \rset{A_0 \lsum A_1}$ iff $u \longrightarrow \lsum_R^i(v)$ for $i \in \{0,1\}$ and $v \in \rset{A_i}$.
\item $u \in \rset{A \lfun B}$ iff $u \longrightarrow {\lfun_R}(x,v)$ for some $x,v$ such that $\mcut(w,x,v) \in \rset{B}$ for all $w \in \rset{A}$.
\item $u \in \rset{\mu^\infty X. A}$ iff $u \in \rset{\mu^\xi X. A}$ for  some ordinal $\xi \leq \Omega$.
\item $u \in \rset{\nu^\infty X. A}$ iff $u \in \rset{\nu^\xi X. A}$ for every ordinal $\xi \leq \Omega$.
\item $u \in \rset{\ann{\xi} X. A}$ iff  $ u \longrightarrow \mu_R(\infty,\infty, v)$ for  some $\rho < \xi$ and some  $v \in \rset{A[\ann{\rho} X. A/X]}$.
\item $u \in \rset{\nu^\xi X. A}_V$ iff $u \longrightarrow \nu_{R}(\infty,a,v)$ for some ordinal variable $a$ such that  $v[a := \infty] \in \rset{A[\nann{\rho} X. A/X]}$ for every ordinal $\rho < \xi$.
\item $u \in \rset{\exists a < \infty: A}$ iff  $ u \longrightarrow \exists_R(\infty,\infty, v)$ for  some $\rho \leq \Omega$ and some  $v \in \rset{A[\rho/a]}$.
\item $u \in \rset{\forall a < \infty: A}$ iff $u \longrightarrow \forall_{R}(\infty,c,v)$ for some ordinal variable $c$ such that  $v[c := \infty] \in \rset{A[\rho/c]}$ for every ordinal $\rho \leq \Omega$.
\item $u \in \rset{\exists a < \xi: A}$ if  $ u \longrightarrow \exists_R(\infty,\infty, v)$ for  some $\rho < \xi$ and some  $v \in \rset{A[\rho/a]}$.
\item $u \in \rset{\forall a < \xi: A}$ if $u \longrightarrow \forall_{R}(\infty,c,v)$ for some ordinal variable $c$ such that  $v[c := \infty] \in \rset{A[\rho/c]}$ for every ordinal $\rho < \xi$.
\end{itemize}
If $A$ is a formula in $\mathcal{L}_\mathsf{ID}$, we call $\rset{A}$ the set of \emph{computable} (closed) derivations of $A$. 
\end{defi} 

We shall often write $\vec{u} \in \rset{\Gamma}$ if $\vec{u} = (u_1,\hdots,u_n)$, $\Gamma = (A_1,\hdots,A_n)$ and $u_i \in \rset{A_i}$ for each $i$.

The computability predicate introduced by the previous definition is intended to capture the computational meaning of formulas as types, i.e. what a program of that type is supposed to do. For example, the clause for implication says that proof of $A \to B$ should produce a function that expects a value of type $A$ as input, and for any such input should return a value of type $B$. The clauses for the fixpoint formulas say that a proof of a least fixpoint $\mu^\infty X. A$ should produce a value that approximates the least fixpoint up to \emph{some} ordinal value $\xi \leq \Omega$, while a proof of a greatest fixpoint $\nu^\infty X. A$ should produce a value that approximates the fixpoint up to \emph{any} given ordinal value $\xi \leq \Omega$.

We begin by establishing two basic facts about computable derivations:

\begin{prop}
\label{p:closure-of-candidates}
If $v \in \rset{A}$ and $u \longrightarrow v$ then $u \in \rset{A}$. 
\end{prop}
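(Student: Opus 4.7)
The plan is to proceed by case analysis on the outermost shape of the formula $A$, exploiting the fact that $\longrightarrow$ is by definition the reflexive-transitive closure of the one-step rewrite relation, so whenever $u \longrightarrow v$ and $v \longrightarrow t$ we automatically have $u \longrightarrow t$. Every defining clause of $\rset{A}$ has the shape ``$u$ reduces to some canonical term $t$ whose immediate subterms satisfy a condition depending only on $t$ (and on strictly smaller formulas in the order $\prec$)'', with the sole exception of the clauses for $\mu^\infty$ and $\nu^\infty$, which quantify over the $\xi$-annotated versions. So, schematically, from $v \in \rset{A}$ I obtain a canonical witness $t$ together with $v \longrightarrow t$, and I can then reuse the \emph{same} $t$ to witness $u \in \rset{A}$, because transitivity gives $u \longrightarrow t$ and the side condition on $t$ does not depend on $u$ or $v$.

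Concretely, for $A = \top$ and for the binary connectives $\lprod$, $\lsum$, $\lfun$, I will simply unfold the definition of $\rset{A}$, extract the witness $t \in \{\mathsf{ax}_\top, \lprod_R(v_1,v_2), \lsum_R^i(v'), \lfun_R(x,v')\}$ and the accompanying computability hypotheses on its subterms, and note that these are preserved verbatim when $v$ is replaced by $u$ thanks to transitivity of $\longrightarrow$. The same argument works for the $\xi$-annotated clauses: if $v \in \rset{\mu^\xi X. B}$ then $v \longrightarrow \mu_R(\infty,\infty,v')$ with $v' \in \rset{B[\mu^\rho X. B/X]}$ for some $\rho < \xi$, and transitivity yields $u \longrightarrow \mu_R(\infty,\infty,v')$, so $u \in \rset{\mu^\xi X. B}$; the case $\nu^\xi X. B$ is entirely analogous, the witness being $\nu_R(\infty,a,v')$ with its accompanying universal condition on $v'[a:=\infty]$.

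The $\infty$-annotated fixpoint cases reduce immediately to the cases just handled: for $\mu^\infty$, $v \in \rset{\mu^\infty X. B}$ means $v \in \rset{\mu^\xi X. B}$ for some $\xi$, so the $\mu^\xi$ case gives $u \in \rset{\mu^\xi X. B}$ and hence $u \in \rset{\mu^\infty X. B}$; for $\nu^\infty$, the corresponding argument is applied to every $\xi$. No genuine induction is required because in each case the witness extracted from $v$ is reused directly, and the conditions on the witness never mention $v$ itself.

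There is no real obstacle: the statement is essentially a transitivity observation packaged for every clause of the computability definition. The only point deserving attention is to verify, when writing out the $\lfun$ and $\nu^\xi$ clauses, that the conditions ``$\mcut(w,x,v') \in \rset{B}$ for all $w \in \rset{A}$'' and ``$v'[a := \infty] \in \rset{B[\nu^\rho X. B/X]}$ for every $\rho < \xi$'' are genuinely properties of the canonical form $t$ rather than of the reducing term, so that they transfer without change from $v$ to $u$.
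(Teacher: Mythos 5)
Your proof is correct and matches the paper's, which simply records the proposition as a direct consequence of the definition: every clause of $\rset{A}$ asserts the existence of a canonical reduct whose side conditions do not mention the original term, so transitivity of $\longrightarrow$ (the reflexive transitive closure of one-step rewriting) transfers membership from $v$ to $u$. Your case-by-case unfolding is just a more explicit writing of the same observation.
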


\begin{proof}
Direct consequence of the definition.
 \end{proof}

\begin{prop}
\label{p:large-ordinal}
Let $A $ be any formula in $\mathcal{I}_\mathsf{ID}^+(X)$. Then the following conditions hold:
\begin{description}
\item[Monotonicity] If $\rset{B} \subseteq \rset{C}$ then $\rset{A[B/X]}\subseteq \rset{A[C/X]}$. 
\item[Inflation] If $\rho \leq \xi$ then $\rset{\mu^\rho X. A} \subseteq \rset{\mu^\xi X. A}$.
\item[Deflation] If $\rho \leq \xi$ then $\rset{\nu^\rho X. A} \supseteq \rset{\nu^\xi X. A}$.
\item[Closure] $\rset{\eta^{\Omega + 1} X. A} = \rset{\eta^\Omega X. A}$, for $\eta \in \{\mu,\nu\}$.
\end{description}
\end{prop}
\begin{proof}
The conditions of Monotonicity, Inflation and Deflation are straightforward consequences of the definition of computability. For Closure, by the Inflation condition (if $\eta = \mu$) or the  Deflation condition (if $\eta = \nu$), the sets $\rset{\eta^\xi X. A}$ for $\xi < \Omega$ form a linearly ordered family of subsets of the set $D$ of derivations under the subsethood relation. Since a family of subsets of $D$ linearly ordered by subsethood can contain at most $\vert D \vert$ distinct subsets, it follows that there must be ordinals $\rho < \xi < \Omega$ such that $\rset{\eta^\rho X.A} = \rset{\eta^\xi X. A}$ (by the pigeonhole principle, since by our choice of  $\Omega$ there are more than  $\vert D \vert$ many ordinals $< \Omega$). We show that $\rset{\eta^\xi X.A} = \rset{\eta^{\xi'} X. A}$ for every ordinal $\xi' \geq \xi$, from which follows that $\rset{\eta^{\Omega + 1} X. A} = \rset{\eta^\Omega X. A}$ as required since $\xi < \Omega$. 

To prove the claim, consider the case of $\eta = \mu$. We prove by transfinite induction that, for all ordinals $\alpha \leq \Omega + 1$, we have: \[\xi \leq \alpha \Rightarrow \rset{\mu^{\xi} X. A} = \rset{\mu^{\alpha} X. A}\]
Let $\alpha$ be any such ordinal and assume that the statement holds for all $\beta < \alpha$. Suppose also that $\xi \leq \alpha$; we need to prove $\rset{\mu^{\xi} X. A} = \rset{\mu^{\alpha} X. A}$.  The inclusion $\rset{\mu^{\xi} X. A} \subseteq \rset{\mu^{\alpha} X. A}$ follows directly from Inflation since we assumed $\xi \leq \alpha$. For the converse direction, suppose $u \in \rset{\mu^{\alpha} X. A}$. Then  there is an ordinal $\beta < \alpha$ with $u \longrightarrow \mu_R(u_0,\infty,\infty)$ for some $u_0 \in \rset{A[\mu^\beta X. A/X]}$. If $\beta < \xi$ then it follows immediately from the definition of computability that $u \in \rset{\mu^\xi X. A}$. Otherwise $\beta \geq \xi$, and then it follows by the induction hypothesis on $\beta < \alpha$ that $\rset{\mu^\beta X.A} \subseteq \rset{\mu^\xi X. A} = \rset{\mu^\rho X. A}$ where we recall that $\rho < \xi$. Monotonicity gives:
\[\rset{A[\mu^\beta X. A/X]} \subseteq \rset{A[\mu^\rho X. A/X]}\]
so that $u_0 \in \rset{A[\mu^\rho X. A/X]}$. Since $\rho < \xi$ and $u \longrightarrow \mu_R(u_0,\infty,\infty)$ we again get $u \in \rset{\mu^\xi X. A}$ as required.

The proof for the case of $\eta = \nu$ is similar, and is omitted. 
\end{proof}

\subsection{Reflection of uncomputability}

Our proof of computability for valid closed proofs follows the same line of reasoning as in \cite{das2020circular}.
In this section we prove the main lemma used for that argument, which we will use to show that any uncomputable derivation has an infinite ``uncomputable branch''.  First, a technical definition.

\begin{defi}
An \emph{ordinal annotation}, or just annotation, is a map $f$ from some set $V$ of ordinal variables to ordinals. Given an annotation $f$ and a formula $A \in \mathcal{L}_\mathsf{ID}$ with $a \in \mathsf{dom}(f)$ whenever $a$ occurs in $A$, we define the $\mathcal{I}_\mathsf{ID}$-formula $A_f$ recursively as follows:
\begin{itemize}
\item $\top_f = \top$,
\item $(A \circ B)_f = A_f \circ B_f$ for $\circ \in \{\wedge, \vee, \lfun\}$,
\item $(\eta^\infty X. B)_f = \eta^\Omega X. B_f$ for $\eta \in \{\mu,\nu\}$,
\item $(\eta^a X. B)_f = \eta^{f(a)} X. B_f$ for $\eta \in \{\mu,\nu\}$,
\item $(\mathsf{Q} c < \infty: B)_f = \mathsf{Q} c < \infty: B_f$ for $\mathsf{Q} \in \{\exists ,\forall\}$,
\item $(\mathsf{Q} c < a: B)_f = \mathsf{Q} c < {f(a)}: B_f$ for $\mathsf{Q} \in \{\exists,\forall\}$.
\end{itemize}
\end{defi}
Given a multiset of formulas  $\Gamma$ and an annotation $f$ we write $\Gamma_f$ for the result of replacing each $A$ in $\Gamma$ by $A_f$.

\begin{defi}
A \emph{faithful annotation} of a sequent $\cns,\Gamma \vdash A$ is an annotation $f$ whose domain is the set of ordinal variables occurring in the constraint $\cns$, such that:
\begin{itemize}
\item $f(a) \leq \Omega$ for every ordinal variable $a$,
\item if $\cns \Vdash a < b$ then $f(a) < f(b)$.
\end{itemize}
\end{defi}

By convention we will write $f(\infty) = \Omega$, so that $f$ is defined for all ordinal terms (here we recall that ordinal terms are either ordinal variables or the constant $\infty$).
For the next proposition, we recall that $u_\infty$ is the derivation obtained from $u$ by repeatedly substituting $\infty$ for all ordinal variables in the constraint.

\begin{lem}
\label{p:one-step-descent}
For each derivation $u$ of a sequent $\cns, \Gamma \vdash A$, each faithful annotation $f$ of the sequent and tuple $\vec{v} \in \rset{\Gamma_f}$ of closed derivations,  if $\mcut(\vec{v},\vec{x}, u_\infty) \notin  \rset{A_f}$, then there exists a premiss $u^*$ of $u$ labelled $\cns^*,\Gamma^* \vdash A^*$, an extension $f^*$ of $f$ to $\cns^*$ and closed derivations $\vec{v}_*$ such that:
\begin{itemize}
\item $\vec{v}_* \in \rset{\Gamma^*_{f^*}}$,
\item $\mcut(\vec{v}_*,\vec{x}_*,u^*_{\infty}) \notin \rset{A^*_{f^*}}$.
\end{itemize}
\end{lem}
\begin{proof}
We prove Proposition \ref{p:one-step-descent} through a case-by-case analysis on the last rule applied in $u$. We leave the easy cases of the identity axiom and the axiom for $\top$ to the reader. We leave the cases of left and right rules for $\vee$,$\wedge$ to the reader as they are similar to the cases for implication rules. 
\\\\ \textbf{Right arrow rule}

Shape of $u$:
\[
\scalebox{0.7}{\begin{prooftree}
\hypo{\cns, \Gamma, x : A \vdash u_0 : B}
\infer1{\cns, \Gamma \vdash {\lfun_R(x,u_0)} : A \to  B}
\end{prooftree}}
\]
Suppose $\vec{w} \in \rset{\Gamma_f}$ but $\mcut(\vec{w},\vec{z},u_\infty) \notin \rset{(A \to B)_f}$. Since we have 
\[\mcut(\vec{w},\vec{z},u_\infty) \longrightarrow {\lfun_R}(x, \mcut(\vec{w},\vec{z},(u_0)_\infty))\]
it follows that there exists some $v \in \rset{A_f}$ such that $\mcut(v,x,\mcut(\vec{w},\vec{z},(u_0)_\infty)) \notin \rset{B_f}$. But we also have:
\[\mcut(v,x,\mcut(\vec{w},\vec{z},(u_0)_\infty)) \longrightarrow \mcut(v,\vec{w},x,\vec{z},(u_0)_\infty)\]
using that $v$ is an assumption-free proof, so it follows that $\mcut(v,\vec{w},x,\vec{z},(u_0)_\infty) \notin \rset{B_f}$. We set $u^* = u_0$, $f^* = f$. 
\\\\ \textbf{Left arrow rule}

Shape of $u$:
\[
\scalebox{0.7}{\begin{prooftree}
\hypo{\cns, \Gamma \vdash u_0 : A}
\hypo{\cns,  z : B, \Gamma \vdash u_1 : C}
\infer2{\cns, x : A \lfun B, \Gamma \vdash {\lfun_L}(x,z,u_0,u_1) : C}
\end{prooftree}}
\]

Suppose $v \in \rset{(A \to B)_f}$, $\vec{w} \in \rset{\Gamma_f}$ but $\mcut(v,\vec{w},x,\vec{z},u_\infty) \notin \rset{C_f}$. If $\mcut(\vec{w}, \vec{z},(u_0)_\infty) \notin \rset{A}$ then we are done as we can set $f^* = f$ and $u^* = u_0$. So suppose $\mcut(\vec{w}, \vec{z},(u_0)_\infty) \in \rset{A_f}$.  Since $v \in \rset{(A \to B)_f}$, there is some $y, v'$ such that $v \longrightarrow {\lfun_R}(y,v')$ and $\cut(s,y,v'_\infty) \in \rset{B_f}$ for all $s \in \rset{A_f}$. In particular, by our assumption, we have:
\[\cut(\mcut(\vec{w}, \vec{z},(u_0)_\infty),y,v'_\infty) \in \rset{B_f}\] 
Furthermore, we have:
\[
\begin{aligned}
\mcut(v,\vec{w},x,\vec{z},u_\infty) & = \mcut(v,\vec{w},x,\vec{z},{\lfun_L}(x,z,(u_0)_\infty,(u_1)_\infty)) \\
& \longrightarrow \mcut({\lfun_R}(y,v'),\vec{w},x,\vec{z},{\lfun_L}(x,z,(u_0)_\infty,(u_1)_\infty)) \\
& \longrightarrow \mcut(\cut(\mcut(\vec{w}, \vec{z},(u_0)_\infty),y,v'), \vec{w},z,\vec{z},(u_1)_\infty)
\end{aligned}
\]
Hence, by Proposition \ref{p:closure-of-candidates} we get
\[\mcut(\cut(\mcut(\vec{w}, \vec{z},(u_0)_\infty),y,v'), \vec{w},z,\vec{z},(u_1)_\infty) \notin \rset{C_f}\]
with $\cut(\mcut(\vec{w}, \vec{z},u_0),y,v') \in \rset{B_f}$, $\vec{w} \in \rset{\Gamma_f}$ as required. So we can set $f^* = f$ and $u^* = u_1$.   
\\\\ \textbf{Right $\mu$-rule}

Shape of $u$:
\[
\scalebox{0.7}{\begin{prooftree}
\hypo{\cns, \Gamma \vdash u_0 : B[\mu^\beta X. B/X]}
\infer1{\cns, \Gamma  \vdash \mu_R(\alpha,\beta,u_0) : \mu^\alpha X.B}
\end{prooftree}}
\]
Let $\vec{v}  \in \rset{\Gamma_f}$ be proofs  such that $\mcut (\vec{v},\vec{x},u_\infty) \notin \rset{(\mu^\alpha X. B)_f}$. We claim that $\mcut(\vec{v},\vec{x},(u_0)_\infty)\! \notin \rset{A[\mu^\beta X. A/X]_f}$. For suppose $\mcut(\vec{v},\vec{x},(u_0)_\infty) \in \rset{(A[\mu^\beta X. A/X])_f}$. If $\alpha$ is an ordinal variable, so is $\beta$ and $f(\beta) < f(\alpha)$ since $f$ is faithful. Hence:
\[(A[\mu^\beta X. A/X])_f = A_f[\ann{f(\beta)} X. A_f/X]\]
We get:
\[
\begin{aligned}
\mcut (\vec{v},\vec{x},u_\infty) & = \mcut (\vec{v}, \vec{x},\mu_R(\infty,\infty ,(u_0)_\infty))  \\
& \longrightarrow \mu_R(\infty,\infty,\mcut(\vec{v},\vec{x},(u_0)_\infty)) \\ 
&\in \rset{\ann{f(\alpha)} X. A_f} \\
& = \rset{(\mu^\alpha X. A)_f}
\end{aligned}
\]
which is a contradiction. On the other hand, if $\alpha = \infty$ then 
\[\rset{(\mu^\alpha X. A)_f} = \rset{\mu^\Omega X. A_f}\]
 We get:
\[
\begin{aligned}
& \mcut (\vec{v},\vec{x},u_\infty)  \\ & = \mcut (\vec{v}, \vec{x},\mu_R(\infty,\infty ,(u_0)_\infty))  \\
& \longrightarrow \mu_R(\infty,\infty,\mcut(\vec{v},\vec{x},(u_0)_\infty)) \\ 
&\in \rset{\mu^{f(\beta) + 1} X. A_f} & \\
& \subseteq \rset{\ann{\Omega + 1} X. A_f} &  \text{ $f(\beta) \leq \Omega$, Proposition \ref{p:large-ordinal} (Inflation)} \\
& = \rset{\mu^{\Omega} X. A_f} & \text{Proposition \ref{p:large-ordinal} (Closure)} \\
& = \rset{(\mu^\alpha X. A)_f} 
\end{aligned}
\]
which is again a contradiction. In each case we set $u^* = u_0$ and $f^* = f$. 
\\\\ \textbf{Left $\mu$-rule}

Shape of $u$:
\[
\scalebox{0.7}{\begin{prooftree}
\hypo{\cns +_\alpha \ordc,  y :  A[\mu^\ordc X.A/X], \Gamma \vdash u_0 : B}
\infer1{\cns,  x : \mu^\alpha X. A, \Gamma \vdash \mu_L(\alpha,\ordc, y, x,  u_0) : B}
\end{prooftree}}
\]
Let $v$, $\vec{w}$ be closed proofs with $v \in \rset{(\mu^\alpha X. A)_f}$, $ \vec{w} \in \rset{\Gamma_f}$. Then there is some ordinal $\xi$ (possibly $\Omega$) such that $(\mu^\alpha X. A)_f = \mu^\xi X.A_f$, and so there is some $\rho < \xi$ and some $v' \in \rset{A_f[\mu^\rho X. A_f/X]}$ such that $v \longrightarrow \mu_R(\infty, \infty, v')$. Let $u_0'$ denote the premiss of the root sequent in $u_\infty$, so that $u_0'$ has the single ordinal variable $c$ in its constraint. We get:
\[
\begin{aligned}
\mathsf{mcut}(v, \vec{w},x, \vec{z}, u_\infty) & =  \mathsf{mcut}(v, \vec{w}, x, \vec{z}, \mu_L(\infty, \ordc, y, x,  u_0')) \\
& \longrightarrow  \mathsf{mcut}(\mu_R(\infty,\infty, v'), \vec{w}, x, \vec{z}, \mu_L(\infty,\ordc, y, x, u_0')) \\
&  \longrightarrow \mathsf{mcut}(v', \vec{w}, y, \vec{z}, u_0'[c:=\infty]) \\
&  = \mathsf{mcut}(v', \vec{w}, y, \vec{z}, (u_0)_\infty)
\end{aligned}
\]
Set $f^*$ to be the extension of $f$ with $f^*(c) = \rho$, and set $u^* = u_0$. This works because, by the calculation above and by Proposition \ref{p:closure-of-candidates}, if $\mathsf{mcut}(v', \vec{w}, y, \vec{z}, (u_0)_\infty) \in \rset{B_{f^*}}$ then $\mathsf{mcut}(v, \vec{w},x, \vec{z}, u_\infty) \in \rset{B_{f^*}}$. But $\rset{B_{f^*}} = \rset{B_f}$ since the variable $c$ does not occur in $B$, and this contradicts our assumption. Since $c$ does not occur in  $\Gamma$ either we get $\vec{w} \in \rset{\Gamma_f} = \rset{\Gamma_{f^*}}$, and we have $v' \in  \rset{A_f[\mu^\rho X. A_f/X]} = \rset{(A[\mu^c X. A/X])_{f^*}}$ by assumption. 
\\\\ \textbf{Right $\nu$-rule}

Shape of $u$:
\[
\scalebox{0.7}{\begin{prooftree}
\hypo{\cns+_\alpha c, \Gamma \vdash u_0 : B[\nu^c X. B/X]}
\infer1{\cns, \Gamma  \vdash \mu_R(\alpha,\beta,u_0) : \nu^\alpha X.B}
\end{prooftree}}
\]
Let $\vec{v}  \in \rset{\Gamma_f}$ be proofs  such that $\mcut (\vec{v},\vec{x},u_\infty) \notin \rset{(\nu^\alpha X. B)_f}$.  We claim that there exists a faithful extension $f^*$ of $f$ such that  $\mcut(\vec{v},\vec{x},(u_0)_\infty) \notin \rset{A[\nu^c X. A/X]_{f^*}}$. Again, we make a case distinction as to whether $\alpha$ is an ordinal variable or $\infty$. In the former case, suppose $\mcut(\vec{v},\vec{x},(u_0)_\infty) \in \rset{(A[\nu^c X. A/X])_{f^*}}$ for every such extension; this means that  $(u_0)_\infty \in \rset{(A_f[\mu^\rho X. A_f/X])}$ for all $\rho < f(\alpha)$. But we have:
\[
\begin{aligned}
\mcut (\vec{v}, \vec{x},u_\infty) & =  \mcut (\vec{v}, \vec{x},\nu_R(\infty,c ,(u_0)_\infty))  \\
& \longrightarrow \nu_R(\infty,c,\mcut(\vec{v},\vec{x},(u_0)_\infty)) \\ 
\end{aligned}
\]
and we get  $\mcut (\vec{v}, \vec{x},\nu_R(\infty,c ,(u_0)_\infty))  \in \rset{(\nu^\alpha X. A)_f}$, which is a contradiction. 

On the other hand, if $\alpha = \infty$ then 
\[\rset{(\nu^\alpha X. A)_f} = \rset{\nu^\Omega X. A_f)} = \rset{\nu^{\Omega + 1} X.A_f}\]
by the Closure condition of Proposition \ref{p:large-ordinal}. 
We can now reason as in the previous case to find some $\rho < \Omega + 1$ such that $\mcut(\vec{v},\vec{x},(u_0)_\infty) \notin \rset{(A[\nu^c X. A/X])_{f^*}}$ if we set $f^*(c) = \rho$; since $\rho \leq \Omega$ this annotation is faithful. 
\\\\ \textbf{Left $\nu$-rule }

This is similar to the case of the left $\mu$-rule.
\\\\ \textbf{Rules for $\exists$ and $\forall$:}

These are very similar to the rules for $\mu$ and $\nu$. We present only the case for the right $\exists$-rule to illustrate this. In this case the shape of $u$ is:
\[
\scalebox{0.7}{\begin{prooftree}
\hypo{\cns, \Gamma \vdash u_0 : B[\alpha/c]}
\infer1{\cns, \Gamma  \vdash \exists_R(\alpha,\beta,u_0) : \exists c < \alpha :B}
\end{prooftree}}
\]
Let $\vec{v}  \in \rset{\Gamma_f}$ be proofs  such that $\mcut (\vec{v},\vec{x},u_\infty) \notin \rset{(\exists c < \alpha : B)_f}$. We claim that $\mcut(\vec{v},\vec{x},(u_0)_\infty) \notin \rset{A[\beta/c]_f}$. For suppose $\mcut(\vec{v},\vec{x},(u_0)_\infty) \in \rset{A[\beta/c]_f}$. If $\alpha$ is an ordinal variable, so is $\beta$ and $f(\beta) < f(\alpha)$ since $f$ is faithful. We have:
\[A[\beta/c]_f = A_f[f(\beta)/c]\]
and hence:
\[\mcut(\vec{v},\vec{x},(u_0)_\infty) \in \rset{A_f[f(\beta)/c]}\]
From this we get:
\[
\begin{aligned}
\mcut(\vec{v}, \vec{x}, u_\infty) & = \mcut (\vec{v}, \vec{x},\exists_R(\infty,\infty ,(u_0)_\infty))  \\
& \longrightarrow \exists_R(\infty,\infty,\mcut(\vec{v},\vec{x},(u_0)_\infty)) \\ 
&\in \rset{\exists c < f(\alpha). A_f} \\
& = \rset{(\exists c < \alpha. A)_f}
\end{aligned}
\]
which is a contradiction. On the other hand, if $\alpha = \infty$ then 
\[\rset{(\exists c < \alpha: A)_f} = \rset{\exists c < \infty : A_f}\]
As before, we have: \[\mcut(\vec{v},\vec{x},(u_0)_\infty) \in \rset{A_f[f(\beta)/c]}\]
 We get:
\[
\begin{aligned}
\mcut(\vec{v}, \vec{x}, u_\infty) & = \mcut (\vec{v}, \vec{x},\exists_R(\infty,\infty ,(u_0)_\infty))  \\
 & \longrightarrow \exists_R(\infty,\infty,\mcut(\vec{v},\vec{x},(u_0)_\infty)) \\ 
&\in \rset{\exists c < \infty : A_f} &    \text{ $f(\beta) \leq \Omega$ }\\
& = \rset{(\exists c < \alpha : A)_f} 
\end{aligned}
\]
which is again a contradiction. In each case we set $u^* = u_0$ and $f^* = f$. 
\\\\ \textbf{Cut rule}

Shape of $u$:
\[
\scalebox{0.7}{\begin{prooftree}
\hypo{\cns, \Gamma_1 \vdash u_1 : A_1}
\hypo{\hdots}
\hypo{\cns, \Gamma_n \vdash u_n : A_n}
\hypo{\cns, x_1 : A_1,\hdots , x_n : A_n, \Delta \vdash v : B}
\infer4{\cns, \Gamma_1,\hdots,\Gamma_n,\Delta \vdash \mcut(u_1,\hdots, u_n, x_1,\hdots, x_n, v) : B}
\end{prooftree}}
\]
Let $\vec{w}_i \in \rset{\Gamma_i}$ and $\vec{w}_{n +1} \in \rset{\Delta}$.  Suppose that \[\mcut(\vec{w}_1,\hdots, \vec{w}_n,\vec{w}_{n+1}, \vec{z}_1,\hdots,\vec{z}_n,\vec{z}_{n+1},\mcut(u_1,\hdots, u_n, x_1,\hdots, x_n, v)) \notin \rset{B}\]
A cut permutation reduces this term to:
\[\mcut(\mcut(\vec{w}_1,\vec{z}_1, u_1)\hdots, \mcut(\vec{w}_1,\vec{z}_n, u_n), \vec{w}_{n+1},  x_1,\hdots, x_n, \vec{z}_1,\hdots,\vec{z}_n,\vec{z}_{n+1}, v) \]
If $\mathsf{mcut}(\vec{w}_i,\vec{z}_i, u_i) \notin \rset{A_i}$ for some $i$ then the premiss sought after is $u_i$ with arguments $\vec{w}_i$. Otherwise we have $\mathsf{mcut}(\vec{w}_i,\vec{z}_i, u_i) \in \rset{A_i}$ for all $i$ and the sought after premiss is $v$  with arguments $\mathsf{mcut}(\vec{w}_i,\vec{z}_i, u_i)$ for $i \in \{1,\hdots, n\}$ together with $\vec{w}_{n+1}$. 
\\\\ \textbf{Weakening rule}

Shape of $u$:
\[
\scalebox{0.7}{\begin{prooftree}
\hypo{\cns, \Gamma \vdash u_0 : B}
\infer1{\cns, x : A, \Gamma \vdash W(x,u_0) : B}
\end{prooftree}}
\]
Suppose $v \in \rset{A}, \vec{w} \in \rset{\Gamma}$ but $\mcut(v, \vec{w}, x, \vec{z},u) \notin \rset{B}$. We have 
\[\mcut(v, \vec{w}, x, \vec{z},u) \longrightarrow \mcut(\vec{w},\vec{z},u_0)\]
so we immediately get $\mcut(\vec{w},\vec{z},u_0) \notin \rset{B}$.
\\\\
\textbf{Contraction rule }

Shape of $u$:
\[
\scalebox{0.7}{\begin{prooftree}
\hypo{\cns, y_0: A, y_1 : A, \Gamma \vdash u_0 : B}
\infer1{\cns, x : A, \Gamma \vdash C(x,y_0,y_1,u_0) : B}
\end{prooftree}}
\]
Suppose $v \in \rset{A}$, $\vec{w} \in \rset{\Gamma}$ but $\mcut(v, \vec{w}, x, \vec{z},u) \notin \rset{B}$. We have 
\[\mcut(v,\vec{w}, x,\vec{z},u) \longrightarrow \mcut(v,v,\vec{w},y_0,y_1,\vec{z},u_0)\]
so we immediately get $\mcut(v,v,\vec{w},y_0,y_1,\vec{z},u_0) \notin \rset{B}$.
\end{proof}

The next proposition makes use of Proposition \ref{p:one-step-descent} to relate the validity condition of proofs to computability:

\begin{prop}
\label{p:infinite-descent}
Let $u$ be a valid proof of a sequent of the form $\Gamma \vdash A$ with trivial constraint and let $\vec{v} \in \rset{\Gamma}$.  Then $\mcut(\vec{v},\vec{x},u) \in \rset{A}$. 
\end{prop}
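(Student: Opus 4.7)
The plan is to argue by contradiction: suppose $\mcut(\vec{v},\vec{x},u) \notin \rset{A}$, and use Proposition \ref{p:one-step-descent} iteratively to extract an infinite ``bad'' branch of $u$, then derive a contradiction from validity. First I would set up the base case of the iteration. Since the end sequent has trivial constraint, the empty annotation $f_0 = \emptyset$ is vacuously faithful, and $u_\infty = u$. A small preliminary observation is needed: for any pure formula $B \in \mathcal{L}_\mathsf{ID}$ one has $\rset{B_{f_0}} = \rset{B}$. This follows from the choice of $\Omega$ together with monotonicity: the sequence $\rset{\mu^\xi X. C}$ is increasing in $\xi$ (any witness $\rho < \xi$ is also $< \xi'$ when $\xi \leq \xi'$), $\rset{\nu^\xi X. C}$ is decreasing, and both stabilise at $\Omega$, so that $\rset{\eta^\infty X. C} = \rset{\eta^\Omega X. C}$; a routine induction on $\prec$ then identifies $\rset{B}$ with $\rset{B_{f_0}}$ for all pure $B$. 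Hence $\vec{v} \in \rset{\Gamma_{f_0}}$ and $\mcut(\vec{v},\vec{x},u_\infty) \notin \rset{A_{f_0}}$, which is the hypothesis needed to invoke Proposition \ref{p:one-step-descent}.

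Next I would iterate: applying Proposition \ref{p:one-step-descent} to $u_0 := u$ and $f_0$ yields a premiss $u_1$ of sequent $\cns_1, \Gamma_1 \vdash A_1$, a faithful extension $f_1$ of $f_0$ to $\cns_1$, and a tuple $\vec{v}_1 \in \rset{(\Gamma_1)_{f_1}}$ with $\mcut(\vec{v}_1,\vec{x}_1,(u_1)_\infty) \notin \rset{(A_1)_{f_1}}$. Repeating the process ad infinitum produces an infinite branch $u_0, u_1, u_2, \ldots$ of $u$ with associated constraints $\cns_0 \hookrightarrow \cns_1 \hookrightarrow \cns_2 \hookrightarrow \cdots$ and an increasing chain of faithful annotations $f_0 \subseteq f_1 \subseteq f_2 \subseteq \cdots$. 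The key point is that the $f_i$ are compatible: each $f_{i+1}$ extends $f_i$ to the (at most one) new variable introduced by the rule applied at stage $i$, so for any variable $a$ appearing in some $\cns_j$ the value $f_k(a)$ is the same for all $k \geq j$; call this stable value $\xi(a) \leq \Omega$.

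Finally, I would invoke validity of $u$ on this infinite branch: there is an infinite descending chain $a_0, a_1, a_2, \ldots$ of ordinal variables such that each inequality $a_{i+1} < a_i$ eventually holds in $\cns_j$. Pick, for each $i$, an index $j_i$ large enough that $a_i$ and $a_{i+1}$ both appear in $\cns_{j_i}$ and $\cns_{j_i} \Vdash a_{i+1} < a_i$; then faithfulness of $f_{j_i}$ gives $\xi(a_{i+1}) = f_{j_i}(a_{i+1}) < f_{j_i}(a_i) = \xi(a_i)$. This produces an infinite strictly descending sequence of ordinals $\xi(a_0) > \xi(a_1) > \xi(a_2) > \cdots$, contradicting the wellfoundedness of the ordinals. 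Hence the initial assumption $\mcut(\vec{v},\vec{x},u) \notin \rset{A}$ is false.

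The main obstacle is purely bookkeeping: making sure that the ordinal variables threaded through the iterated application of Proposition \ref{p:one-step-descent} are genuinely preserved by the extensions $f_i \subseteq f_{i+1}$ (so that $\xi(a)$ really is well-defined), and that the local view in Proposition \ref{p:one-step-descent}, which speaks about $u_\infty$ and about extensions only to newly introduced eigenvariables, composes correctly along the infinite branch. Everything else—including the reduction $\rset{B} = \rset{B_{f_0}}$ for pure $B$—is a routine consequence of Propositions \ref{p:closure-of-candidates} and \ref{p:large-ordinal}.
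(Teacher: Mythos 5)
Your proposal is correct and follows essentially the same route as the paper: iterate Proposition \ref{p:one-step-descent} from the vacuously faithful empty annotation to build an infinite branch with compatible faithful annotations, then use validity to turn the infinite descending chain of ordinal variables into an infinite descending chain of ordinals, contradicting wellfoundedness. The only addition is that you make explicit the identification $\rset{B_{f_0}} = \rset{B}$ for pure $B$ (via Proposition \ref{p:large-ordinal} and monotonicity), a step the paper's proof leaves implicit but which is indeed needed to start and finish the iteration.
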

\begin{proof}
Suppose $\mcut(\vec{v},\vec{x},u)  \notin \rset{A}$. By repeated use of Proposition \ref{p:one-step-descent} we can inductively construct an infinite branch $s_0s_1s_2\hdots$ in $u$, together with substitutions  $\sigma_0,\sigma_1,\sigma_2\hdots$  of ordinals for ordinal variables such that:
\begin{itemize}
\item for each $i < \omega$ and all ordinal variables $a,b$ such that $\cns(s_i) \Vdash a < b$ we have $\sigma_i(a) < \sigma_i(b)$.
\item $\sigma_j(a) = \sigma_i(a)$ for all $a$ appearing in $\cns(s_i)$ whenever $i < j$. 
\end{itemize}
Here, $\cns(s_i)$ denotes the constraint of the sequent labelling the vertex $s_i$ in the proof tree. The existence of an infinite descending chain of ordinal variables on $s_0s_1s_2\hdots$ then produces an infinite chain of ordinals $\xi_0 > \xi_1 > \xi_2,\hdots$, which contradicts wellfoundedness of the ordinals. 
\end{proof}

What follows is our first main result, showing that every closed valid proof is computable. 

\begin{thm}
\label{t:soundness}
For any closed, valid proof $u$ of $\vdash A$, we have $u \in \rset{A}$. 
\end{thm}

\begin{proof}
Direct corollary to Proposition \ref{p:infinite-descent}. 
\end{proof}

\subsection{Computation with finite data}

A first consequence of our main result, Theorem \ref{t:soundness}, is that proofs representing functions on natural numbers are normalizing. We can generalize this a bit; we define the \emph{finitary formulas} by the following grammar: 
\[\mathcal{F}_{\mathsf{ID}} \ni A:= \top  \mid A \lsum A \mid A \lprod A \mid \mu X.B\]
\[\mathcal{F}^+(X) \ni B := A \mid X \mid \top  \mid B \lsum B \mid B \lprod B\]

\begin{defi}
A proof $u$ is said to be in \emph{normal form} if there is no $v\neq u$ such that $u \longrightarrow v$. A proof $u$ is said to be \emph{normalizing} if $u \longrightarrow v$ for some $v$ in normal form. 
\end{defi}
Given a finitary formula $A \in \mathcal{F}_\mathsf{ID}$, we denote by $\mathsf{N}_A$ the set of normalizing proofs with end sequent $\vdash A$.

\begin{thm}
\label{t:computable-implies-normalizing}
If $A$ is a finitary formula, then $\rset{A} \subseteq \mathsf{N}_A$.
\end{thm}

\begin{proof}
By induction on the complexity of $A$.  The proof is straightforward and so we omit the details. 
\end{proof}

As a consequence of Theorem \ref{t:computable-implies-normalizing},  we get: 

\begin{thm}
Any closed proof of the sequent $N^\infty,\hdots,N^\infty \vdash N^\infty$ (with trivial constraint) represents a unique $k$-ary function on natural numbers.
\end{thm}

Here, we recall that by a proof we mean a \emph{valid} derivation.  For an arbitrary derivation with end sequent $ \vdash N^\infty$, normalization may not terminate. So derivations generally represent partial functions, and valid proofs represent total functions.

To be precise, we first note that we may identify a natural number with a closed proof of $\vdash N^\infty$ in normal form; explicitly we define a map $n \mapsto \underline{n}$ from $\mathbb{N}$ to proofs by:
\[\underline{0} = \begin{prooftree}
\hypo{}
\infer1{\vdash \top}
\infer1{\vdash \top \vee N^\infty}
\infer1{\vdash N^\infty}
\end{prooftree}
\qquad
\underline{n + 1} = \begin{prooftree}
\hypo{\underline{n}}
\infer1{\vdash N^\infty}
\infer1{\vdash \top \vee N^\infty}
\infer1{\vdash N^\infty}
\end{prooftree}
\]
This map is surjective onto the set of normal form proofs of $\vdash N^\infty$ since we require of a closed proof that the constraint is empty, so that the only ordinal term $\alpha$ for which the constraint proves $\alpha < \infty$ is $\infty$ itself, hence there is only one possible instance of the $\mu_R$-rule available with the sequent $\vdash N^\infty$ as conclusion. Now, given a proof $u$ of $N^\infty,\hdots,N^\infty \vdash N^\infty$ and natural numbers $n_1,\hdots n_k$ represented as closed proofs $\underline{n}_1,\hdots,\underline{n}_k$ of the sequent $\vdash N^\infty$, the proof $\mcut(\underline{n}_1,\hdots,\underline{n}_k,x_1,\hdots,x_k,u)$ normalizes to $\underline{m}$ for a unique natural number $m$. Of course, uniqueness follows from  Proposition \ref{p:confluence}.

\section{Categorical semantics}

In this final section we further examine the computability semantics of proofs by deriving a categorical model with closed formulas as objects and proofs as arrows. For an early categorical semantics of non-wellfounded proofs, see \cite{santocanale2002calculus,fortier2013cuts}. For a recent study of categorical models of non-wellfounded proofs in the context of linear logic, see \cite{ehrhard2025denotation}.

\subsection{Construction of the category of proofs}
We begin by introducing an equivalence relation on proofs.

\begin{defi}
Given a formula $A \in \mathcal{I}_\mathsf{ID}$ and closed proofs $u, v$ of the sequent $\vdash \underline{A}$, we define a relation $u \equiv v \md A$ by induction on the structure of $A$ as follows:

\begin{description}
\item[$\top$] $u \equiv v \md \top$ iff $u, v \longrightarrow \mathsf{ax}_\top$.
\item[$\lsum$] $u \equiv v \md A_0 \lsum A_1$ iff $u \longrightarrow \lsum_R^i(u')$ and $v \longrightarrow \lsum_R^i(v')$ for $u' \equiv v' \md A_i$.
\item[$\lprod$] $u \equiv v \md A \lprod B$ iff $u \longrightarrow \lprod_R(u'_0,u'_1)$ and $v \longrightarrow \lsum_R^i(v'_0,v'_1)$ for $u'_i \equiv v'_i \md A_i$.
\item[$\lfun$]  $u \equiv v \md A \lfun B$ iff $u \longrightarrow {\lfun}_R(x,u')$ and $v \longrightarrow {\lfun}_R(y,v')$ for $u',v'$ such that, for all closed proofs $w$ of $\underline{A}$, $\mcut(w,x,u') \equiv \mcut(w,y,v') \md B$.
\item[$\mu^\infty$] $u \equiv v \md \mu^\infty X. A$ iff $u \equiv v \md \mu^\xi X. A$ for some ordinal $\xi \leq \Omega$.
\item[$\nu^\infty$] $u \equiv v \md \nu^\infty X. A$ iff $u \equiv v \md \nu^\xi X. A$ for every ordinal $\xi \leq \Omega$.
\item[$\mu^\xi$] $u \equiv v \md \mu^\xi X. A$ iff $u \longrightarrow \mu_R(\infty, \infty,u')$ and $v \longrightarrow \mu_R(\infty, \infty, v')$ for $u',v'$ such that $u' \equiv v' \md A[\mu^\rho X. A/X]$ for some $\rho < \xi$.
\item[$\nu^\xi$] $u \equiv v \md \nu^\xi X. A$ iff $u \longrightarrow \nu_R(\infty, a,u')$ and $v \longrightarrow \nu_R(\infty, b, v')$ for $u',v'$ such that $u'[a:=\infty] \equiv v'[b:= \infty] \md A[\nu^\rho X. A/X]$ for all $\rho < \xi$.
\item[$\exists a < \infty$] $u \equiv v \md \exists a < \infty: A$ iff$u \longrightarrow \exists_R(\infty, \infty,u')$ and $v \longrightarrow \exists_R(\infty, \infty, v')$ for $u',v'$ such that $u' \equiv v' \md A[\rho/c]$ for some $\rho < \Omega$.
\item[$\forall a < \infty$] $u \equiv v \md \forall a < \infty: A$ iff $u \longrightarrow \forall_R(\infty, a,u')$ and $v \longrightarrow \forall_R(\infty, b, v')$ for $u',v'$ such that $u'[a:=\infty] \equiv v'[b:= \infty] \md A[\rho/c]$ for all $\rho < \Omega$.
\item[$\exists c < \xi$] $u \equiv v \md \exists c < \xi: A$ iff $u \longrightarrow \exists_R(\infty, \infty,u')$ and $v \longrightarrow \exists_R(\infty, \infty, v')$ for $u',v'$ such that $u' \equiv v' \md A[\rho/c]$ for some $\rho < \xi$.
\item[$\forall c < \xi$] $u \equiv v \md \forall c < \xi: A$ iff $u \longrightarrow \forall_R(\infty, a,u')$ and $v \longrightarrow \forall_R(\infty, b, v')$ for $u',v'$ such that $u'[a:=\infty] \equiv v'[b:= \infty] \md A[\rho/c]$ for all $\rho < \xi$.
\end{description}
If $u,v$ are proofs of the sequent $\vdash A$ for $A \in \mathcal{L}_\mathsf{ID}$, we write $u \equiv v$ if $u \equiv v \md A$.
\end{defi}

\begin{prop}
The relation $\equiv$ is an equivalence relation over closed proofs.
\end{prop}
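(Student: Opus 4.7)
The plan is to prove reflexivity, symmetry, and transitivity simultaneously by wellfounded induction on the relation $\prec$ over $\mathcal{I}_{\mathsf{ID}}$-formulas (the same induction principle used throughout this section). Symmetry is immediate at every step, since each defining clause of $u \equiv v \md A$ is manifestly symmetric in $u$ and $v$, modulo an appeal to the symmetric part of the inductive hypothesis at $\prec$-smaller formulas.

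For reflexivity, I would actually prove the sharper statement that $u \in \rset{A}$ implies $u \equiv u \md A$: the defining clauses of $\equiv$ mirror those of $\rset{\,\cdot\,}$ almost exactly, so reflexivity at $A$ reduces immediately to reflexivity at the $\prec$-smaller formulas appearing in the computability clauses for $A$. (This is the only way the statement can be read as ``equivalence relation on closed proofs'' without becoming vacuous, since, e.g., a non-normalising closed proof of $\top$ is not related even to itself modulo $\top$.)

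Transitivity is the real work. Suppose $u \equiv v \md A$ and $v \equiv w \md A$. Each clause forces $v$ to reduce to a term whose root is a constructor ($\vee_R^i$, $\wedge_R$, $\lfun_R$, $\mu_R$, $\nu_R$, or $\mathsf{ax}_\top$) and therefore not a lowest reducible cut. Hence Proposition~\ref{p:confluence} gives the two designated reducts of $v$ a common further reduct whose root is still a constructor of the same shape; this forces the two unfoldings of $v$ to agree on the outermost constructor (e.g.\ the injection index $i$, or, after $\alpha$-renaming, the bound variable in the $\lfun_R$/$\mu_R$/$\nu_R$ cases) and to share subterm reducts. Applying the induction hypothesis at $\prec$-smaller subformulas to these common reducts gives the desired transitive conclusion, after repackaging via the clauses of $\equiv$. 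A small auxiliary stability lemma --- if $u \equiv v \md B$ and $u \longrightarrow u'$ then $u' \equiv v \md B$ --- is needed and is proved inside the same induction using confluence; it is used in the $\lfun$ case to transport equivalence across the common reduct of the two function bodies applied to a common input.

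The step I expect to be the main obstacle is the $\mu^\xi$ and $\mu^\infty$ cases, where the existentially chosen ordinal approximants $\rho_1 < \xi$ and $\rho_2 < \xi$ witnessing the two input equivalences may differ. To paste them together one needs a monotonicity-in-the-approximant observation: for $\rho \leq \rho' \leq \Omega$, the relation $\equiv \md A[\mu^\rho X.A/X]$ refines $\equiv \md A[\mu^{\rho'} X.A/X]$, and symmetrically for $\nu$. This is established by a side induction on $\prec$ using positivity of $X$ in $A$, exactly parallel to the Knaster--Tarski-style argument underlying Proposition~\ref{p:large-ordinal}. Once monotonicity is in hand, one replaces $\rho_1,\rho_2$ by their maximum and closes the case by the main induction hypothesis; the $\nu$ side is in fact easier since the quantification over $\rho$ is universal and no choice has to be reconciled.
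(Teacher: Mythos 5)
Your proposal is correct and follows essentially the same route as the paper, whose entire proof is that symmetry is obvious, transitivity follows from confluence (Proposition~\ref{p:confluence}), and reflexivity follows from Theorem~\ref{t:soundness}. Your elaboration --- in particular the reduction of reflexivity to the claim that $u \in \rset{A}$ implies $u \equiv u \md A$, the stability-under-reduction lemma, and the monotonicity-in-the-approximant observation needed to reconcile the two existential witnesses $\rho_1, \rho_2$ in the $\mu^\xi$ clause --- supplies exactly the details the paper leaves implicit.
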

\begin{proof}
Symmetry is obvious, and transitivity follows from confluence (Proposition \ref{p:confluence}).  Finally, reflexivity follows from Theorem \ref{t:soundness}.
\end{proof}

Arrows will correspond to proofs of sequents of the form $A \vdash B$ with trivial constraint. 
For proofs $u,v$ of the sequent $A \vdash B$ we write $u \equiv v$ if, for all closed proofs $w$ of the sequent $\vdash A$, $\cut(w,u) \equiv \cut(w, v)$. Composition of arrows will by defined using cuts, and we need to show that the equivalence relation $\equiv$ is a congruence with respect to cuts.

\begin{prop}
\label{p:congruence}
Let $w$ be a proof of $A \vdash B$ and $u,v$ closed proofs of $A$.  If $u \equiv v$ then $\cut(u, w) \equiv \cut(v,w)$.
\end{prop}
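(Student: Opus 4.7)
The plan is to adapt the descent argument of Theorem \ref{t:soundness}, replacing the computability predicates $\rset{-}$ with the equivalence relation $\equiv$. Concretely, I would prove an analogue of Proposition \ref{p:one-step-descent}: given a derivation $w^0$ with end sequent $\cns, \Gamma \vdash B'$, a faithful annotation $f$ of the sequent, and tuples $\vec{s}, \vec{s}'$ of closed proofs satisfying $s_i \equiv s'_i \md (\Gamma_i)_f$ for each $i$, if $\mcut(\vec{s}, \vec{x}, w^0_\infty) \not\equiv \mcut(\vec{s}', \vec{x}, w^0_\infty) \md B'_f$, then there is a premiss $w^*$ of $w^0$ with sequent $\cns^*, \Gamma^* \vdash B^*$, an extension $f^*$ of $f$ to $\cns^*$, and tuples $\vec{t}, \vec{t}'$ with $t_j \equiv t'_j \md (\Gamma^*_j)_{f^*}$ witnessing the inequivalence $\mcut(\vec{t}, \vec{x}^*, w^*_\infty) \not\equiv \mcut(\vec{t}', \vec{x}^*, w^*_\infty) \md B^*_{f^*}$.

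Granting this descent lemma, the proof concludes exactly as for Proposition \ref{p:infinite-descent}: iterating produces an infinite branch of $w^0$ together with compatible faithful annotations, and the infinite descending chain of ordinal variables guaranteed by validity of $w^0$ is then realized as an infinite descending chain of ordinals, which is impossible. Specializing to the trivially-annotated case $w^0 = w$ with $\vec{s} = (u)$ and $\vec{s}' = (v)$ then yields the proposition.

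The descent lemma itself is proved by case analysis on the last rule of $w^0$, mirroring the structure of the proof of Proposition \ref{p:one-step-descent}. Each case unfolds the definition of $\equiv$ at the relevant connective or fixpoint, applies the reduction rules from Section 3, and uses confluence (Proposition \ref{p:confluence}) together with Theorem \ref{t:soundness} to identify an appropriate premiss to descend to. For the right $\lfun$-case, both outer cuts reduce to $\lfun_R$-abstractions, and an inequivalence at $(A \to B)_f$ yields a common witnessing argument at $A_f$ that produces an inequivalence at $B_f$ one level down. The right $\nu$- and left $\mu$-cases use the universal/existential clauses in the definition of $\equiv$ to pick an ordinal $\rho$ witnessing the inequivalence and set $f^*(c) = \rho$ for the eigenvariable $c$, exactly as in the corresponding cases of Proposition \ref{p:one-step-descent}. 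Structural rules (cut, weakening, contraction) follow from their reductions together with the closure of $\equiv$ under reduction (an easy consequence of confluence).

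The main obstacle will be the bookkeeping: threading the invariant $t_j \equiv t'_j \md (\Gamma^*_j)_{f^*}$ through each case requires care, particularly in the cut rule, where confluence must be used to align the reductions of the two cuts being compared. Otherwise the argument is essentially the same shape as the proof of Proposition \ref{p:one-step-descent}, with $\equiv$ playing the role of membership in $\rset{-}$.
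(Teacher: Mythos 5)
Your proposal matches the paper's own strategy: the paper proves exactly the descent lemma you describe (Proposition~\ref{p:one-step-descent-cong}, the $\equiv$-analogue of Proposition~\ref{p:one-step-descent}) and then concludes by the same infinite-descent argument as in Theorem~\ref{t:soundness}, omitting the case analysis as a routine repetition. The only cosmetic difference is that the paper additionally carries the hypothesis $\vec{v},\vec{w}\in\rset{\Gamma_f}$ alongside $\vec{v}\equiv\vec{w}$, which your appeal to Theorem~\ref{t:soundness} supplies anyway.
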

To prove Proposition \ref{p:congruence}, we essentially mimic the proof of Theorem \ref{t:soundness}. We first prove:
\begin{prop}
\label{p:one-step-descent-cong}
For each proof $u$ of a sequent $\cns, \Gamma \vdash A$, each faithful annotation $f$ of the sequent and each pair of tuples $\vec{v},\vec{w} \in \rset{\Gamma_f}$, if
\[\vec{v} \equiv \vec{w} \md \Gamma_f \text{ but not } \mcut(\vec{v},\vec{x},u_\infty) \equiv \mcut(\vec{w},\vec{x},u_\infty) \]
then there exists a premiss $u^*$ of $u$ labelled $\cns^*,\Gamma^* \vdash ^*$, an extension $f^*$ of $f$ to $\cns^*$ and tuples of closed proofs $\vec{v}_*,\vec{w}_*$ such that:
\begin{itemize}
\item $\vec{v}_*  \equiv \vec{w}_* \md \Gamma_{f^*}$,
\item not $\mcut(\vec{v}_*,\vec{x}_*,u^*_{\infty}) \equiv \mcut(\vec{w}_*,\vec{x}_*,u^*_{\infty}) \md A^*_{f^*}$.
\end{itemize}
\end{prop}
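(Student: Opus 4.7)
The plan is to essentially mimic the proof of Proposition~\ref{p:one-step-descent}, replacing arguments about (non-)membership in $\rset{A}$ with arguments about (non-)equivalence $\equiv \md A$. The structure of the inductive clauses defining $\equiv \md A$ matches clause-by-clause the definition of $\rset{A}$ (e.g.\ both are closed under reduction via confluence, both existentially quantify over $\rho < \xi$ at $\mu^\xi$ and universally quantify at $\nu^\xi$), so the same case analysis on the last rule applied in $u$ will go through. First I would record two simple facts used throughout: (i) if $u \longrightarrow u'$ and $v \longrightarrow v'$ then $u \equiv v \md A$ iff $u' \equiv v' \md A$ (by confluence and the definition), and (ii) $\equiv$ is compatible with all the cut reductions and permutations listed in the reduction system, in the sense that two mcut-terms that reduce to a common shape are equivalent to the reduced forms.

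Next I would do the case-by-case analysis, mirroring exactly the case structure in the proof of Proposition~\ref{p:one-step-descent}. For the right logical rules, the cut permutations push the cut inside the constructor on both $\vec{v}$ and $\vec{w}$ sides, and if the resulting constructor-headed terms are not $\equiv$ at $A_f$, then unfolding the definition of $\equiv$ at the outer connective yields non-equivalence at a strictly smaller formula, pinpointing the premiss $u^*$ (with $f^* = f$) and appropriate projections $\vec{v}_*,\vec{w}_*$. For the left logical rules and for cut and structural rules, the relevant reduction (or cut-permutation) rewrites $\mcut(\vec{v},\vec{x},u_\infty)$ and $\mcut(\vec{w},\vec{x},u_\infty)$ into common shapes whose non-equivalence, via fact (i), forces non-equivalence at the premiss selected from these common shapes. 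The weakening, contraction, and left $\lfun$ cases follow the exact template of Proposition~\ref{p:one-step-descent}, where one splits on whether the auxiliary mcut-term $\mcut(\vec{w},\vec{z},(u_0)_\infty)$ for the left subproof is already $\equiv$ to its counterpart for $\vec{v}$ or not; in either sub-case the witnessing premiss is $u_0$ or $u_1$ accordingly.

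The main subtlety, as in the original proposition, lies in the fixpoint cases. For the right $\mu$-rule the reduction $\mcut(\vec{v},\vec{x},\mu_R(\alpha,\beta,(u_0)_\infty)) \longrightarrow \mu_R(\alpha,\beta,\mcut(\vec{v},\vec{x},(u_0)_\infty))$ (and analogously for $\vec{w}$) together with the clauses for $\equiv \md \mu^{f(\alpha)} X. A_f$ and $\equiv \md \mu^\Omega X. A_f = \equiv \md \mu^{\Omega+1} X. A_f$ immediately yields non-equivalence at $A[\mu^\beta X. A/X]_f$ with $f^* = f$. For the left $\mu$-rule, from $v \equiv w \md (\mu^\alpha X. A)_f$ we obtain $v \longrightarrow \mu_R(\infty,\infty,v')$, $w \longrightarrow \mu_R(\infty,\infty,w')$ with $v' \equiv w' \md A_f[\mu^\rho X. A_f/X]$ for some common $\rho < f(\alpha)$ (resp.\ $\rho < \Omega+1$); extending $f^* = f[c := \rho]$ and applying the $\mu$-reduction gives the required non-equivalence on the premise $u_0$ with arguments $(v',\vec{w},\vec{z})$ versus $(w',\vec{w}',\vec{z})$. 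The right $\nu$ case dualises this: non-equivalence at $(\nu^\alpha X. A)_f$ guarantees the existence of some $\rho < f(\alpha)$ (or $\rho < \Omega+1$) where the two sides are not $\equiv$ at $A[\nu^c X. A/X]$, and we set $f^*(c) = \rho$; the left $\nu$-rule is the mirror of the left $\mu$-rule.

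I expect the hard part to be bookkeeping rather than mathematical: carefully tracking that at each fixpoint step the chosen $\rho$ actually comes from the existing shared witness in the definition of $\equiv$ and gives a genuinely faithful extension of $f$, and ensuring that the eigenvariable $c$ neither occurs in $\Gamma$ nor in the target formula so that replacing $f$ by $f^*$ does not alter the ambient annotation on the rest of the sequent. Once these points are handled exactly as in Proposition~\ref{p:one-step-descent}, the proof is concluded with the standard remark that this case analysis is routine and the details are omitted.
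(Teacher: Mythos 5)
Your proposal matches the paper's intended argument exactly: the paper itself gives no explicit proof of this proposition, stating only that it is a step-by-step repetition of the proof of Proposition~\ref{p:one-step-descent} with membership in $\rset{A}$ replaced by the relation $\equiv \md A$, which is precisely the case-by-case mimicry you carry out. Your sketch in fact supplies more detail than the paper does, and the points you flag as bookkeeping (the shared witness $\rho$ at fixpoint steps, the eigenvariable condition on $c$) are the right ones to worry about.
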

Proposition \ref{p:congruence} can then be proved from this proposition similarly to how we proved Theorem \ref{t:soundness}. We omit the details since it the full proof would essentially be a step-by-step repetition of the proof of Theorem \ref{t:soundness} in a different guise.

We can now define a category $\mathbb{C}$ whose objects are closed formulas of $\mathcal{L}_{\mathsf{ID}}$, and whose arrows are equivalence classes of proofs $A \vdash B$. Composition $v \circ u$ of arrows $u : A \to B$ and $v  : B \to C$ is given by the equivalence class of $\cut(u, x, v)$ (where $x$ is an arbitrary variable assigned to the left-hand formula of the sequent $B \to C$ in a term representation of $v$). This is well defined by Proposition \ref{p:congruence} and the definition of equality of arrows. Associativity of composition is given by Proposition \ref{p:associativity} below. 

\begin{prop}
\label{p:asslemma}
Let $u_0,u_1$ be proofs of the sequent $A \vdash B$ where $A,B$ are closed formulas in $\mathcal{L}_{\mathsf{ID}}$, and suppose for all closed proofs $v$ of $\vdash A$ there exists a proof $w$ such that $\mcut(v,x,u_0) \longrightarrow w$ and $\mcut(v,x,u_1) \longrightarrow w$. Then $u_0 \equiv u_1$. 
\end{prop}
\begin{proof}
By the definition of $\equiv$, it suffices to prove that $\mcut(v,x,u_0) \equiv \mcut(v,x,u_1)$ provided that there is some proof $w$ such that $\mcut(v,x,u_0) \longrightarrow w$ and $\mcut(v,x,u_1) \longrightarrow w$.

We prove something stronger: if $A$ is any closed formula of $\mathcal{I}_{\mathsf{ID}}$, $u \in \rset{A}$, and $v$ is any closed proof of  $\vdash \underline{A}$ for which there exists a proof $w$ with $u \longrightarrow w$ and $v \longrightarrow w$, then $u \equiv v \md A$. The proof proceeds by a straightforward induction on the wellfounded order $\prec$. We provide only the case where $A$ is of the form $\exists c < \xi : B$ to illustrate how the induction works, leaving the other cases to the reader. In this case, $\underline{\exists c < \xi : B} = \exists c < \infty : \underline{B}$. By assumption $u \in \rset{\exists c < \xi : B}$, so there exists some ordinal $\rho_0 < \xi$ and some $u' \in \rset{B[\rho_0/c]}$ such that $u \longrightarrow \exists R(\infty,\infty,u')$. By Theorem \ref{t:soundness}, there is some ordinal $\rho_1 \leq \Omega$ and some $v' \in \rset{B[\rho_1/c]}$ such that $v \longrightarrow \exists_R(\infty, \infty, v')$. Since $u, v \longrightarrow w$, repeated uses of Proposition \ref{p:confluence} gives some proof $s$ with $ \exists_R(\infty, \infty, u') \longrightarrow s$ and $ \exists_R(\infty, \infty, v') \longrightarrow s$, and it is clear that $s$ must be of the form $ \exists_R(\infty, \infty, s')$ where $u' \longrightarrow s'$ and $v' \longrightarrow s'$. Since $u' \in \rset{B[\rho_0/c]}$ and $v'$ is a closed proof of $\vdash \rset{\underline{B}[\infty/c]}$, i.e. of $\vdash \underline{B[\rho_0/c]}$, the induction hypothesis applied to $B[\rho_0/c] \prec \exists c < \xi : B$ gives $u' \equiv v' \md B[\rho_0/c]$. Since $\rho_0 < \xi$ this yields \[u \equiv v \md \exists c < \xi : B\] as required. 
\end{proof}

\begin{prop}
\label{p:associativity}
Composition of arrows is associative. 
\end{prop}
\begin{proof}
We want to show that for given proofs $v_0,v_1,v_2$ of sequents $C \vdash D$, $B \vdash C$ and $A \vdash B$ respectively, we have  $(v_0 \circ v_1) \circ v_2 \equiv  v_0 \circ(v_1 \circ v_2)$. To prove this, we need to show that for any closed proof $u$ of the sequent $\vdash A$, we have:
\[\cut(u,(v_0 \circ v_1) \circ v_2) \equiv \cut(u,v_0 \circ(v_1 \circ v_2)) \md D\] 
To prove this, by Proposition \ref{p:asslemma} it suffices to show that these two terms reduce to the same proof.  The term  $\cut(u,(v_0 \circ v_1) \circ v_2)$ is the following proof:
\[
\scalebox{0.7}{
\begin{prooftree}
\hypo{u}
\infer1{\vdash A}
\hypo{v_2}
\infer1{A \vdash B}
\hypo{v_1}
\infer1{B \vdash C}
\hypo{v_0}
\infer1{C \vdash D}
\infer2[$cut$]{B \vdash D}
\infer2[$cut$]{A \vdash D}
\infer2[$cut$]{\vdash D}
\end{prooftree}
}
\]
A cut permutation gives:
\[
\scalebox{0.7}{
\begin{prooftree}
\hypo{u}
\infer1{\vdash A}
\hypo{v_2}
\infer1{A \vdash B}
\infer2[$cut$]{\vdash B}
\hypo{v_1}
\infer1{B \vdash C}
\hypo{v_0}
\infer1{C \vdash D}
\infer2[$cut$]{B \vdash D}
\infer2[$cut$]{\vdash D}
\end{prooftree}
}
\]
A second cut permutation gives:
\[
\scalebox{0.7}{
\begin{prooftree}
\hypo{u}
\infer1{\vdash A}
\hypo{v_2}
\infer1{A \vdash B}
\infer2[$cut$]{\vdash B}
\hypo{v_1}
\infer1{B \vdash C}
\infer2[$cut$]{\vdash C}
\hypo{v_0}
\infer1{C \vdash D}
\infer2[$cut$]{\vdash D}
\end{prooftree}
}
\]
On the other hand, the term $\cut(u,v_0 \circ(v_1 \circ v_2)) $ is:
\[
\scalebox{0.7}{
\begin{prooftree}
\hypo{u}
\infer1{\vdash A}
\hypo{v_2}
\infer1{A \vdash B}
\hypo{v_1}
\infer1{B \vdash C}
\infer2[$cut$]{A \vdash C}
\hypo{v_0}
\infer1{C \vdash D}
\infer2[$cut$]{A \vdash D}
\infer2[$cut$]{\vdash D}
\end{prooftree}
}
\]
Cut permutation gives:
\[
\scalebox{0.7}{
\begin{prooftree}
\hypo{u}
\infer1{\vdash A}
\hypo{v_2}
\infer1{A \vdash B}
\hypo{v_1}
\infer1{B \vdash C}
\infer2[$cut$]{A \vdash C}
\infer2[$cut$]{\vdash C}
\hypo{v_0}
\infer1{C \vdash D}
\infer2[$cut$]{\vdash D}
\end{prooftree}
}
\]
A second cut permutation gives, once again:
\[
\scalebox{0.7}{
\begin{prooftree}
\hypo{u}
\infer1{\vdash A}
\hypo{v_2}
\infer1{A \vdash B}
\infer2[$cut$]{\vdash B}
\hypo{v_1}
\infer1{B \vdash C}
\infer2[$cut$]{\vdash C}
\hypo{v_0}
\infer1{C \vdash D}
\infer2[$cut$]{\vdash D}
\end{prooftree}
}
\]
and we are done. 
\end{proof}
The identity arrow for object $A$ is of course just the proof consisting of the identity axiom for $A$. We shall not be pedantic about notationally distinguishing between proofs and their equivalence classes; it is to be understood that everything in this section is ``up to equivalence''.  

Our next aim is to show that least and greatest fixpoint formulas appears as initial algebras and final coalgebras in the category $\mathbb{C}$. For this purpose we need to show that formulas in $\mathcal{L}^+(X)$ correspond to endofunctors on $\mathbb{C}$. The construction essentially follows \cite{curzi2023computational}.

Given a proof $u$ of $B \vdash C$, and a formula $A \in \mathcal{L}^+(X)$, we define the proof $Au$ by induction on the complexity of $A$. If $X$ does not occur free in $A$ we set $Au$ to simply be $\mathsf{id}_A$. Other cases are as follows:
\paragraph{Case $A = X$:}
Set  $Au = u$. 
\paragraph{Case $A = A_0 \lprod A_1$:}
Set $Au$ to be:
\[
\scalebox{0.7}{\begin{prooftree}
\hypo{u}
\infer1{B \vdash C}
\infer1[$A_0$]{A_0[B/X] \vdash A_0[C/X]}
\hypo{u}
\infer1{B \vdash C}
\infer1[$A_1$]{A_1[B/X] \vdash A_1[C/X]}
\infer2{A_0[B/X],A_1[B/X] \vdash A_0[C/X] \lprod A_1[C/X]}
\infer1{A_0[B/X] \lprod A_1[B/X] \vdash A_0[C/X] \lprod A_1[C/X]}
\end{prooftree}}
\]

\paragraph{Case $A = A_0 \lsum A_1$:}
Set $Au$ to be:
\[
\scalebox{0.7}{\begin{prooftree}
\hypo{u}
\infer1{B \vdash C}
\infer1[$A_0$]{A_0[B/X] \vdash A_0[C/X]}
\infer1{A_0[B/X] \vdash A_0[C/X] \lsum A_1[C/X]}
\hypo{u}
\infer1{B \vdash C}
\infer1[$A_1$]{A_1[B/X] \vdash A_1[C/X]}
\infer1{A_1[B/X] \vdash A_0[C/X] \lsum A_1[C/X]}
\infer2{A_0[B/X] \lsum A_1[B/X] \vdash A_0[C/X] \lsum A_1[C/X]}
\end{prooftree}}
\]

\paragraph{Case $A = A_0 \lfun A_1$:}
Set $Au$ to be:
\[
\scalebox{0.7}{\begin{prooftree}
\hypo{u}
\infer1{B \vdash C}
\infer1[$A_0$]{A_0[B/X] \vdash A_0[C/X]}
\hypo{u}
\infer1{B \vdash C}
\infer1[$A_1$]{A_1[B/X] \vdash A_1[C/X]}
\infer2{A_0[B/X] \lfun A_1[B/X], A_0[C/X] \vdash A_1[C/X]}
\infer1{A_0[B/X] \lfun A_1[B/X] \vdash A_0[C/X] \lfun A_1[C/X]}
\end{prooftree}}
\]

\begin{prop}
The map sending (the equivalence class of) $u$ to (the equivalence class of) $Au$ is an endofunctor on $\mathbb{C}$.
\end{prop}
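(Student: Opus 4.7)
The plan is to verify, by structural induction on $A \in \mathcal{L}^+(X)$, the three things needed to conclude functoriality: (i) the assignment $u \mapsto Au$ descends to equivalence classes, i.e.\ $u \equiv v$ implies $Au \equiv Av$; (ii) $A\,\mathsf{id}_C \equiv \mathsf{id}_{A[C/X]}$; and (iii) $A(v \circ u) \equiv Av \circ Au$. The action on objects is the syntactic substitution sending a pure formula $C$ to $A[C/X]$, which requires nothing to check.

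For well-definedness on equivalence classes, the two base cases ($A = X$ and $A$ not containing $X$ free) are immediate from the definition of $Au$. For $A = A_0 \lprod A_1$ and $A = A_0 \lsum A_1$, the proof $Au$ is a combination of $A_0 u$ and $A_1 u$ via a single left rule followed by a right rule, and whenever such a proof is composed with a closed proof of the appropriate form, the cut-reduction steps for $\lprod$ or $\lsum$ isolate cuts of the form $\mcut(w, x, A_i u)$. By the induction hypothesis and Proposition \ref{p:congruence}, replacing $u$ by an equivalent $v$ preserves $\equiv$ for each such component, which propagates through the outer construction. The $\lfun$-case is simplified dramatically by the strict positivity restriction: in $A_0 \lfun A_1$ the antecedent $A_0$ has no free occurrences of $X$, so $A_0 u = \mathsf{id}_{A_0}$ is fixed independently of $u$, and only the covariant side $A_1 u$ varies; the argument thus reduces to that of the $A_1$ case.

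For preservation of identities, I would again induct on $A$. The cases $A = X$ and $A$ $X$-free are trivial. For $A = A_0 \lprod A_1$, the construction of $A\,\mathsf{id}_C$ consists of a $\lprod_L$ followed by $\lprod_R$ whose two premises are $A_0\,\mathsf{id}_C$ and $A_1\,\mathsf{id}_C$, which by the induction hypothesis are equivalent to identities. Testing against an arbitrary closed proof $w$ of $(A_0 \lprod A_1)[C/X]$ and reducing with the $\lprod$-reduction rule, the composite cut reduces to $w$ up to applying identity reductions, so $A\,\mathsf{id}_C \equiv \mathsf{id}_{A[C/X]}$ holds by definition of $\equiv$ at a product formula. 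The $\lsum$ case is dual, and the $\lfun$ case follows from the same kind of reduction using $\lfun_R$/$\lfun_L$ reduction together with the strict-positivity simplification. The argument for preservation of composition $A(v \circ u) \equiv Av \circ Au$ runs in parallel: both sides are cut-structured terms whose cuts can be permuted so that the $A_0$- and $A_1$-components act independently, at which point the induction hypothesis gives equivalence in each component and Proposition \ref{p:congruence} reassembles these into equivalence of the whole.

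The main obstacle is purely bookkeeping: each inductive step requires tracing several sequential cut reductions and cut permutations to bring the two sides of an equivalence into a common form on which confluence (Proposition \ref{p:confluence}) and the congruence property of $\equiv$ can be applied. No deep novelty is required beyond what is already in Theorem \ref{t:soundness} and Proposition \ref{p:congruence}; strict positivity of $X$ in $\mathcal{L}^+(X)$ is essential throughout, as it sidesteps the contravariant action that would otherwise be demanded on the left of $\lfun$ and which the construction deliberately avoids defining.
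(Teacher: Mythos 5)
Your proposal is correct and follows exactly the route the paper intends: the paper's own proof is just ``Routine, using the reduction rules,'' and your structural induction on $A$ establishing well-definedness on $\equiv$-classes, preservation of identities, and preservation of composition via the cut reductions, confluence (Proposition \ref{p:confluence}), and the congruence property (Proposition \ref{p:congruence}) is precisely that routine argument spelled out. Your observation that strict positivity forces $A_0 u = \mathsf{id}_{A_0}$ in the implication case, so that no contravariant action is needed, correctly reflects how the paper's construction of $Au$ for $A = A_0 \lfun A_1$ is set up.
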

\begin{proof}
Routine, using the reduction rules.
\end{proof}

We also need the following somewhat technical result:

\begin{prop}
\label{p:A-lemma}
\label{p:A-lemma-nu}
Let $A$ be a formula in $\mathcal{L}^+(X)$ and let $B,C$ be formulas in $\mathcal{I}_\mathsf{ID}$. Let $u \in \rset{A[B/X]}$ and let $v,w$ be proofs such that $\mcut(u_0,v) \equiv \mcut(u_0,w) \md C$ for all $u_0 \in \rset{B}$. Then:  \[\mcut(u,x,Av) \equiv \mcut(u,x,Aw) \md A[C/X]\]
\end{prop}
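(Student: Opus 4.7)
The plan is to proceed by induction on the complexity of $A \in \mathcal{L}^+(X)$, mirroring the inductive construction of $Au$ itself. In each case I will unfold the hypothesis $u \in \rset{A[B/X]}$ via the corresponding clause in the definition of the computability predicate, use the rewrite rules to reduce both $\mcut(u,x,Av)$ and $\mcut(u,x,Aw)$ to terms of matching outer shape, and then close the case by appealing to the induction hypothesis on the component subterms; closure of $\equiv$ under reduction (which is built into its definition) allows me to pass freely between reduced and unreduced forms.

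The base cases are immediate. If $X$ is not free in $A$ then $A[B/X]=A[C/X]=A$ and $Av=Aw=\mathsf{id}_A$ by the convention used in the construction, so by the identity reduction both cuts rewrite to $u$ and the equivalence follows by reflexivity of $\equiv$. If $A=X$, then $Av=v$ and $Aw=w$, and the claim is exactly the stated hypothesis on $v,w$ with $u$ in place of $u_0$.

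For the product case $A=A_0 \lprod A_1$, we have $u \longrightarrow \lprod_R(u_0,u_1)$ with $u_i \in \rset{A_i[B/X]}$. Unfolding $Av$, a $\lprod$-reduction followed by right $\lprod$-permutation rewrites $\mcut(u,x,Av)$ to $\lprod_R(\mcut(u_0,A_0 v),\mcut(u_1,A_1 v))$, and analogously for $w$. The induction hypothesis applied to each $A_i$ yields $\mcut(u_i,A_i v) \equiv \mcut(u_i,A_i w) \md A_i[C/X]$, and these assemble into the required equivalence modulo $A_0[C/X] \lprod A_1[C/X]$ by the $\lprod$-clause of $\equiv$. The coproduct case $A=A_0 \lsum A_1$ is entirely analogous, starting from $u \longrightarrow \lsum_R^i(u')$ for some $i$ and $u' \in \rset{A_i[B/X]}$.

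The main case, and the principal technical obstacle, is the implication $A=A_0 \lfun A_1$. Here strict positivity of $\mathcal{L}^+(X)$ forces $A_0$ to contain no free $X$, so $A_0[B/X]=A_0[C/X]=A_0$ and $A_0 v = A_0 w = \mathsf{id}_{A_0}$, which is exactly what makes the construction of $Au$ well-behaved in this contravariant position. To establish $\mcut(u,x,Av) \equiv \mcut(u,x,Aw) \md A_0 \lfun A_1[C/X]$ I unfold the $\lfun$-clause of $\equiv$: after driving the right $\lfun$-permutation through both sides, the task reduces to showing that for every closed $s \in \rset{A_0}$, the two resulting abstractions applied to $s$ are equivalent modulo $A_1[C/X]$. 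A $\lfun$-reduction (together with the identity reductions coming from $A_0 v = A_0 w = \mathsf{id}_{A_0}$) rewrites each such application to $\mcut(\mcut(s,u),A_1 v)$ and $\mcut(\mcut(s,u),A_1 w)$ respectively. Since $u \in \rset{A_0 \lfun A_1[B/X]}$ ensures $\mcut(s,u) \in \rset{A_1[B/X]}$, the induction hypothesis applied to $A_1$ with $\mcut(s,u)$ in place of $u$ closes the case. The main nuisance will be the careful bookkeeping of cut-permutations and bound-variable naming required to witness the stated reductions, but this is routine once confluence (Proposition \ref{p:confluence}) is used to coordinate the various reduction sequences.
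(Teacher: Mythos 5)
Your proposal is correct and takes essentially the same approach as the paper, whose entire proof reads ``By induction on the formula $A$. The straightforward but tedious argument is left to the reader.'' Your case analysis --- in particular the observation that strict positivity forces $X$ not to occur in $A_0$ in the implication case, so that $A_0 v = A_0 w = \mathsf{id}_{A_0}$ and the contravariant position causes no trouble --- supplies exactly the details the paper omits.
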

\begin{proof}
By induction on the formula $A$. The straightforward but tedious argument is left to the reader. 
\end{proof}

\paragraph{A remark on the semantics}
The categorical semantics we have provided here is constructed from the syntax, where objects of the category are simply formulas and arrows are certain congruence classes of proofs. This may give the impression that this is just a term model, an initial semantics in which are just terms up to convertibility \cite{jacobs1999categorical}. But this is not the case; first, we have not provided strong enough reduction rules for a full cut elimination result here, so the relation of convertibility is quite weak. But even if we had done so, the congruence relation we have used for the construction of the semantics remains much stronger than convertibility. The idea is that two proofs of $A \vdash B$ should be identified as the same arrow whenever they represent (extensionally) the same function from type $A$ to type $B$. 

For example, consider the following two proofs:
\[
\scalebox{0.7}{
\begin{prooftree}
\hypo{}
\infer1{a < \infty \vdash \top}
\infer1{a < \infty \vdash \top \vee N^\infty}
\infer1{a < \infty  \vdash N^\infty}
\infer1{a < \infty \vdash \top \vee N^\infty}
\infer1{a < \infty  \vdash N^\infty}
\infer1{a < \infty, \top  \vdash N^\infty  }
\hypo{b < a < \infty, \top \vee N^b \vdash N^\infty\; \dagger}
\infer1{a < \infty, N^a \vdash N^\infty}
\infer1{a < \infty, N^a \vdash \top \vee  N^\infty}
\infer1{a < \infty, N^a \vdash N^\infty}
\infer2{a < \infty, \top \vee N^a \vdash N^\infty\; \dagger}
\infer1{N^\infty \vdash N^\infty}
\end{prooftree}
\qquad\qquad
\begin{prooftree}
\hypo{}
\infer1{a < \infty \vdash \top}
\infer1{a < \infty \vdash \top \vee N^\infty}
\infer1{a < \infty  \vdash N^\infty}
\infer1{a < \infty, \top  \vdash N^\infty  }
\hypo{b < a < \infty, \top \vee  N^b \vdash N^\infty\; \dagger}
\infer1{a < \infty,  N^a \vdash N^\infty}
\infer2{a < \infty, \top \vee N^a \vdash N^\infty \; \dagger}
\infer1{N^\infty \vdash N^\infty}
\infer1{N^\infty \vdash \top \vee N^\infty}
\infer1{N^\infty \vdash N^\infty}
\end{prooftree}
}
\]
Both these proofs are cut free, and so are in normal form with respect to the reduction rules we have provided, and with respect to any reasonable set of reduction rules (including permutations of left rules). So in an initial semantics, where terms are identified just when they are convertible to each other, these two proofs would get distinct interpretations. But in our categorical model, they get the same interpretation.  This is because the congruence relation is defined in terms of what happens when we supply arguments to the function that a proof represents, via a cut. In this case, cutting either of the two proofs above with a proof representing a natural number $n$ and normalizing would lead to the same normal form, namely the proof representing the successor $n+1$. For example, cutting either of the two with the proof:
\[
\scalebox{0.7}{
\begin{prooftree}
\hypo{}
\infer1{\vdash \top}
\infer1{\vdash \top \vee N^\infty}
\infer1{\vdash N^\infty}
\infer1{\vdash \top \vee N^\infty}
\infer1{\vdash N^\infty}
\end{prooftree}
}
\]
which represents the number $1$, normalization would yield the normal form:
\[
\scalebox{0.7}{
\begin{prooftree}
\hypo{}
\infer1{\vdash \top}
\infer1{\vdash \top \vee N^\infty}
\infer1{\vdash N^\infty}
\infer1{\vdash \top \vee N^\infty}
\infer1{\vdash N^\infty}
\infer1{\vdash \top \vee N^\infty}
\infer1{\vdash N^\infty}
\end{prooftree}
}
\]
representing the number $2$.
We may say that the interpretation of either proof just \emph{is} the successor function on natural numbers. But the differences between the two proofs are not superficial - they each compute the successor function in two different (both rather roundabout) ways. The left-hand proof computes the successor recursively: given $n$ as argument, return $1$ if $n = 0$, and otherwise recursively call the function on $n-1$ and take the successor of that. The right-hand proof instead does this: given the argument $n$, the subproof beginning at the third line from the bottom returns $n$ by computing the identity function recursively, and then the value $n$ is incremented once in the last two lines of the whole proof.

\subsection{Initial algebras}

Fix a formula $A \in \mathcal{L}^+(X)$. We want to show that $\mu^\infty X. A$ is the initial algebra for the corresponding endofunctor on $\mathbb{C}$. First of all we need to identify an arrow $i : A[\mu^\infty X. A/X] \to \mu^\infty X. A$. We define $i$ to be:
\[
\scalebox{0.7}{\begin{prooftree}
\hypo{}
\infer1{A[\mu^\infty X. A/X] \vdash A[\mu^\infty X. A/X]}
\infer1{A[\mu^\infty X. A/X] \vdash \mu^\infty X. A}
\end{prooftree}}
\]

We first check that $i$ is an isomorphism as expected. The inverse $i^{-1}$ is defined to be:
\[
\scalebox{0.7}{\begin{prooftree}
\hypo{}
\infer1{b < a < \infty, A[\mu^b X.A/X] \vdash A[\mu^b X. A/X]}
\infer1{b < a < \infty, A[\mu^b X.A/X] \vdash \mu^\infty X. A}
\infer1{a < \infty, \mu^a X.A \vdash \mu^\infty X. A}
\infer1[$A$]{a < \infty, A[\mu^a X. A/X] \vdash A[\mu^\infty X. A/X]}
\infer1{\mu^\infty X. A \vdash A[\mu^\infty X. A/X]}
\end{prooftree}}
\]

It is not hard to check that, for each closed proof $w_0$ of $\mu^\infty X. A$ and each closed proof $w_1$ of $A[\mu^\infty X. A/X]$, we have:
\[\mcut(w_0, x, \mcut(i^{-1}, y, i)) \equiv w_0\]
and 
\[\mcut(w_1, x, \mcut(i, y, i^{-1})) \equiv w_1\]
so that $i \circ i^{-1} = \mathsf{id}_{\mu^\infty X.A}$ and $i^{-1} \circ i = \mathsf{id}_{A[\mu^\infty X. A/X]}$.

We now check that any arrow $u : A[B/X] \to B$ lifts to a unique $A$-algebra morphism $\widehat{u} : \mu^\infty X. A \to B$. If $u : A[B/X] \to B$ then we can define the map $\widehat{u} : \mu^\infty X. A \to B$ as follows:
\[
\scalebox{0.7}{\begin{prooftree}
\hypo{b < a, A[\mu^b X. A/X]\vdash B \; \dagger}
\infer1{\mu^a X.A \vdash B}
\infer1[$A$]{A[\mu^a X.A/X] \vdash A[B/X]}
\hypo{u}
\infer1{A[B/X] \vdash B}
\infer2{A[\mu^a  X. A/X] \vdash B \; \dagger}
\infer1{\mu^\infty  X. A \vdash B}
\end{prooftree}}
\]

We now check that this is indeed an algebra morphism, i.e. that $u \circ A \widehat{u} = \widehat{u} \circ i$, as in the following commutative diagram:
\[
\xymatrix{
A[\mu^\infty X. A/X] \ar[rr]^{A\widehat{u}} \ar[d]_{i} & & A[B/X] \ar[d]^{u} \\
\mu^\infty X. A \ar[rr] _{\widehat{u}} & & B  \\
}
\]

The map $ u \circ A\widehat{u} $ is represented by the following proof: 
\[
\scalebox{0.7}{\begin{prooftree}
\hypo{b < a, A[\mu^b X. A/X]\vdash B \; \dagger}
\infer1{\mu^a X.A \vdash B}
\infer1[$A$]{A[\mu^a X.A/X] \vdash A[B/X]}
\hypo{u}
\infer1{A[B/X] \vdash B}
\infer2{A[\mu^a  X. A/X] \vdash B \; \dagger}
\infer1{\mu^\infty  X. A \vdash B}
\infer1[$A$]{A[\mu^\infty X. A/X] \vdash A[B/X]}
\hypo{u}
\infer1{A[B/X] \vdash B}
\infer2{A[\mu^\infty X. A/X] \vdash B}
\end{prooftree}}
\]

Consider a closed proof $w$ of $\vdash A[\mu^\infty X. A/X] $, and consider the result of cutting  $\widehat{u} \circ i$ with $w$. We get:
\[
\scalebox{0.7}{\begin{prooftree}
\hypo{w}
\infer1{\vdash A[\mu^\infty X. A/X]}
\hypo{}
\infer1{A[\mu^\infty X. A/X] \vdash A[\mu^\infty X. A/X]}
\infer1{A[\mu^\infty X. A/X] \vdash \mu^\infty X. A}
\hypo{c < b < a, A[\mu^c X. A/X]\vdash B \; \dagger}
\infer1{b < a, \mu^b X.A \vdash B}
\infer1[$A$]{b < a, A[\mu^b X.A/X] \vdash A[B/X]}
\hypo{u}
\infer1{A[B/X] \vdash B}
\infer2{b < a, A[\mu^b  X. A/X] \vdash B \;\dagger}
\infer1{\mu^a X.A \vdash B}
\infer1[$A$]{A[\mu^a X.A/X] \vdash A[B/X]}
\hypo{u}
\infer1{A[B/X] \vdash B}
\infer2{A[\mu^a  X. A/X] \vdash B }
\infer1{\mu^\infty  X. A \vdash B}
\infer2{A[\mu^\infty X. A/X] \vdash B}
\infer2{\vdash B}
\end{prooftree}}
\]
 Applying a few cut reductions gives:
\[
\scalebox{0.7}{\begin{prooftree}
\hypo{w}
\infer1{\vdash A[\mu^\infty X. A/X]}
\hypo{c < b, A[\mu^c X. A/X]\vdash B \; \dagger}
\infer1{ \mu^b X.A \vdash B}
\infer1[$A$]{ A[\mu^b X.A/X] \vdash A[B/X]}
\hypo{u}
\infer1{A[B/X] \vdash B}
\infer2{ A[\mu^b  X. A/X] \vdash B \;\dagger}
\infer1{\mu^\infty X.A \vdash B}
\infer1[$A$]{A[\mu^\infty X.A/X] \vdash A[B/X]}
\hypo{u}
\infer1{A[B/X] \vdash B}
\infer2{A[\mu^\infty  X. A/X] \vdash B }
\infer2{\vdash B}
\end{prooftree}}
\]

But this is just the result of cutting $ u \circ A\widehat{u} $ with $w$, up to a renaming of ordinal variables, so clearly these proofs represent the same arrow.

We now check that $\widehat{u}$ is the \emph{unique} arrow with this property. Here, we use Theorem \ref{t:soundness}; if $u^*$ is a  proof of $\mu^\infty X. A \vdash B$ with the same property, and $w$ is a closed proof of $\vdash \mu^\infty X. A$, we want to show that $\mcut(w,x,u^*) \equiv \mcut(w,x,\widehat{u})$. Since $w$ is a closed proof, by Theorem \ref{t:soundness} we have $w \in \rset{\mu^\xi X. A}$  for some ordinal $\xi$. So we can proceed by induction on $\xi$; if $w \in \rset{\mu^\xi X. A}$ then $w \longrightarrow \mu_R(\infty,\infty,w')$ for some $w' \in \rset{A[\mu^\rho X. A/X]}$, where the induction hypothesis holds for $\rho < \xi$. We have:
\[
\begin{aligned}
u^* & = u^* \circ \mathsf{id}_{\mu^\infty X. A} \\
& = u^* \circ i \circ i^{-1} \\
& = u \circ A u^* \circ i^{-1}
\end{aligned}
\]
So $\mcut(w,x,u^*) $ is equivalent to:
\[
\scalebox{0.7}{\begin{prooftree}
\hypo{w}
\infer1{\vdash \mu^\infty X. A}
\hypo{}
\infer1{b < a < \infty, A[\mu^b X.A/X] \vdash A[\mu^b X. A/X]}
\infer1{b < a < \infty, A[\mu^b X.A/X] \vdash \mu^\infty X. A}
\infer1{a < \infty, \mu^a X.A \vdash \mu^\infty X. A}
\infer1[$A$]{a < \infty, A[\mu^a X. A/X] \vdash A[\mu^\infty X. A/X]}
\infer1{\mu^\infty X. A \vdash A[\mu^\infty X. A/X]}
\hypo{u^*}
\infer1{\mu^\infty X. A \vdash B}
\infer1[$A$]{A[\mu^\infty X. A/X] \vdash A[B/X]}
\hypo{u}
\infer1{A[B/X] \vdash B}
\infer2{A[\mu^\infty X. A/X] \vdash B}
\infer2{\mu^\infty X. A \vdash B}
\infer2{\vdash B}
\end{prooftree}}
\]
which reduces to:
\[
\scalebox{0.7}{\begin{prooftree}
\hypo{w'}
\infer1{\vdash A[\mu^\infty X. A/X]}
\hypo{}
\infer1{b <\infty, A[\mu^b X.A/X] \vdash A[\mu^b X. A/X]}
\infer1{b <\infty, A[\mu^b X.A/X] \vdash \mu^\infty X. A}
\infer1{ \mu^\infty X.A \vdash \mu^\infty X. A}
\infer1[$A$]{A[\mu^\infty X. A/X] \vdash A[\mu^\infty X. A/X]}
\infer2{ A[\mu^\infty X. A/X]}
\hypo{u^*}
\infer1{\mu^\infty X. A \vdash B}
\infer1[$A$]{A[\mu^\infty X. A/X] \vdash A[B/X]}
\infer2{\vdash A[B/X]}
\hypo{u}
\infer1{A[B/X] \vdash B}
\infer2{\vdash B}
\end{prooftree}}
\]

But here, the subproof rooted at the sequent $A[\mu^\infty X. A/X] \vdash A[\mu^\infty X. A/X]$ is easily seen to be equivalent to the identity proof for $A[\mu^\infty X. A/X]$. So this is equivalent to:
\[
\scalebox{0.7}{\begin{prooftree}
\hypo{w'}
\infer1{\vdash A[\mu^\infty X. A/X]}
\hypo{u^*}
\infer1{\mu^\infty X. A \vdash B}
\infer1[$A$]{A[\mu^\infty X. A/X] \vdash A[B/X]}
\infer2{\vdash A[B/X]}
\hypo{u}
\infer1{A[B/X] \vdash B}
\infer2{\vdash B}
\end{prooftree}}
\]

Using the induction hypothesis on $\rho$, it follows by Proposition \ref{p:A-lemma} that this is equivalent to:
\[
\scalebox{0.7}{\begin{prooftree}
\hypo{w'}
\infer1{\vdash A[\mu^\infty X. A/X]}
\hypo{\widehat{u}}
\infer1{\mu^\infty X. A \vdash B}
\infer1[$A$]{A[\mu^\infty X. A/X] \vdash A[B/X]}
\infer2{\vdash A[B/X]}
\hypo{u}
\infer1{A[B/X] \vdash B}
\infer2{\vdash B}
\end{prooftree}}
\]
But by similar reasoning, $\mcut(w,x,\widehat{u})$ is also equivalent to this proof, hence $u^* \equiv \widehat{u}$ as required.

\subsection{Final coalgebras}

For a greatest fixpoint $\nu^\infty  X. A$, we define the final coalgebra  map $i$ dually as follows:
\[
\scalebox{0.7}{\begin{prooftree}
\hypo{}
\infer1{ A[\nu^\infty X. A/X] \vdash A[\nu^\infty X. A/X]}
\infer1{\nu^\infty X. A \vdash A[\nu^\infty X.A/X] }
\end{prooftree}}
\]

Again, we check that this is an isomorphism. 
The inverse $i^{-1}$ of this map is:
\[
\scalebox{0.7}{\begin{prooftree}
\hypo{}
\infer1{b < a < \infty, A[\nu^b X. A/X] \vdash A[\nu^b X. A/X]}
\infer1{b < a < \infty, \nu^\infty X. A \vdash A[\nu^b X. A/X]}
\infer1{a < \infty, \nu^\infty X. A \vdash \nu^a X. A}
\infer1[$A$]{a < \infty, A[\nu^\infty X. A/X] \vdash A[\nu^a X. A/X]}
\infer1{A[\nu^\infty X. A/X] \vdash \nu^\infty X. A}
\end{prooftree}}
\]

Given a proof $u$ of $B \vdash A[B/X]$, we want to find an $A$-coalgebra morphism $\widetilde{u} : B \to \nu^\infty X. A$. We define $\widetilde{u}$ to be:
\[
\scalebox{0.7}{\begin{prooftree}
\hypo{u}
\infer1{B \vdash A[B/X]}
\hypo{b < a, B \vdash A\nu^b X. A/X] \;\dagger}
\infer1{B \vdash \nu^a X. A}
\infer1[$A$]{A[B/X] \vdash A[\nu^a X. A/X]}
\infer2{B \vdash A[\nu^a X. A/X] \;\dagger}
\infer1{B \vdash \nu^\infty X. A}
\end{prooftree}}
\]

We can show that $i \circ \widetilde{u} = A \widetilde{u} \circ u$ using a similar argument as in the previous section. As a commutative diagram this is:
\[
\xymatrix{
A[B/X] \ar[rr]^{A \widetilde{u}} & & A[\nu^\infty X. A/X] \\
B \ar[u]^{u} \ar[rr]_{\widetilde{u}} & & \nu^\infty X. A \ar[u]_{i} \\
}
\]
 We show that $\widetilde{u}$ is unique up to equivalence with this property. Suppose $i \circ u^* = Au^* \circ u$. Then:
\[
\begin{aligned}
u^* & =  i^{-1} \circ i \circ u^* \\
& = i^{-1} \circ A u^* \circ u
\end{aligned}
\]
Now, suppose $w$ is a closed proof of $B$. We want to show that $\mcut(w,x,u^*) \equiv \mcut(w,x, \widetilde{u})$, i.e. that:
\[\mcut(w,x,u^*) \equiv \mcut(w,x, \widetilde{u}) \md \nu^\xi X. A\]
for any given ordinal $\xi$. We assume inductively that the statement holds for all $\rho < \xi$.  When we cut $i^{-1} \circ Au ^* \circ u$ with $w$, we get:
\[
\scalebox{0.7}{\begin{prooftree}
\hypo{w}
\infer1{\vdash B}
\hypo{u}
\infer1{B \vdash A[B/X]}
\hypo{u^*}
\infer1{B \vdash \nu^\infty X. A}
\infer1[$A$]{A[B/X] \vdash A[\nu^\infty X.A/X]}
\hypo{}
\infer1{b < a < \infty, A[\nu^b X. A/X] \vdash A[\nu^b X. A/X]}
\infer1{b < a < \infty, \nu^\infty X. A \vdash A[\nu^b X. A/X]}
\infer1{a < \infty, \nu^\infty X. A \vdash \nu^a X. A}
\infer1[$A$]{a < \infty, A[\nu^\infty X. A/X] \vdash A[\nu^a X. A/X]}
\infer1{A[\nu^\infty X. A/X] \vdash \nu^\infty X. A}
\infer2{A[B/X] \vdash \nu^\infty X. A}
\infer2{B \vdash \nu^\infty X. A}
\infer2{\vdash \nu^\infty X. A}
\end{prooftree}}
\]
This reduces to:
\[
\scalebox{0.7}{\begin{prooftree}
\hypo{w}
\infer1{\vdash B}
\hypo{u}
\infer1{B \vdash A[B/X]}
\infer2{\vdash A[B/X]}
\hypo{u^*}
\infer1{B \vdash \nu^\infty X. A}
\infer1[$A$]{A[B/X] \vdash A[\nu^\infty X.A/X]}
\infer2{\vdash A[\nu^\infty X. A/X]}
\hypo{}
\infer1{b < a < \infty, A[\nu^b X. A/X] \vdash A[\nu^b X. A/X]}
\infer1{b < a < \infty, \nu^\infty X. A \vdash A[\nu^b X. A/X]}
\infer1{a < \infty, \nu^\infty X. A \vdash \nu^a X. A}
\infer1[$A$]{a < \infty, A[\nu^\infty X. A/X] \vdash A[\nu^a X. A/X]}
\infer1{A[\nu^\infty X. A/X] \vdash \nu^\infty X. A}
\infer2{\vdash \nu^\infty X. A}
\end{prooftree}}
\]

By Proposition \ref{p:A-lemma-nu} together with the induction hypothesis on $\rho$, this is equivalent to:
\[
\scalebox{0.7}{\begin{prooftree}
\hypo{w}
\infer1{\vdash B}
\hypo{u}
\infer1{B \vdash A[B/X]}
\infer2{\vdash A[B/X]}
\hypo{\widetilde{u}}
\infer1{B \vdash \nu^\infty X. A}
\infer1[$A$]{A[B/X] \vdash A[\nu^\infty X.A/X]}
\infer2{\vdash A[\nu^\infty X. A/X]}
\hypo{}
\infer1{b < a < \infty, A[\nu^b X. A/X] \vdash A[\nu^b X. A/X]}
\infer1{b < a < \infty, \nu^\infty X. A \vdash A[\nu^b X. A/X]}
\infer1{a < \infty, \nu^\infty X. A \vdash \nu^a X. A}
\infer1[$A$]{a < \infty, A[\nu^\infty X. A/X] \vdash A[\nu^a X. A/X]}
\infer1{A[\nu^\infty X. A/X] \vdash \nu^\infty X. A}
\infer2{\vdash \nu^\infty X. A}
\end{prooftree}}
\]

But this proof is equivalent to $\mcut(w,  \widetilde{u})$, since we can reach it by performing the same reductions on $\mcut(w,x, i^{-1} \circ A \widetilde{u} \circ u)$.

\section{Conclusion}

This paper was motivated by an attempt to better understand the problem of compositionality in non-wellfounded proof theory in connection with Curry-Howard correspondence, especially in connection with the ``bouncing threads'' validity condition recently suggested in \cite{baelde2022bouncing}. For this purpose we made use of an approach to non-wellfounded proofs in terms of ordinal variables, stemming from the work of Dam, Gurov and Sprenger. We provided a computational interpretation of the proof system in terms of a computability predicate, and showed that the validity condition in terms of infinite descending chains of ordinal variables guarantees computability. As a consequence, we obtained a normalization result for proofs of finitary formulas, and in particular we obtained that proofs of sequents of the appropriate form uniquely represent total functions on natural numbers. Finally, we considered a categorical model based on the notion of computability, and showed that least and greatest fixpoints correspond to initial algebras and final coalgebras respectively. 

There are several tasks for future work, the most important of which is to settle the open question whether each bouncing thread valid proof can be represented by a proof using ordinal variables. Another important problem is cut elimination. In this paper, we did not consider ``infinitary proof theory'' in the sense of a full cut elimination theorem as in \cite{baelde2016infinitary,baelde2022bouncing,saurin2023linear}. The known proofs of these results are quite intricate, and rely on semantic arguments that are rather different from familiar techniques. An interesting direction of future research would be to revisit these results in the setting with ordinal variables, and to see whether this approach could help to clarify, simplify or elucidate the known proofs. 

\section*{Acknowledgement}
I wish to thank the two anonymous reviewers of an earlier draft of this paper for their helpful comments which led to several improvements.

\bibliographystyle{alphaurl}
\bibliography{my_bib}

\newcommand{\etalchar}[1]{$^{#1}$}
\begin{thebibliography}{AEL{\etalchar{+}}23}

\bibitem[AEL{\etalchar{+}}23]{afshari2023proof}
Bahareh Afshari, Sebastian Enqvist, Graham~E Leigh, Johannes Marti, and Yde Venema.
\newblock Proof systems for two-way modal mu-calculus.
\newblock {\em The Journal of Symbolic Logic}, pages 1--50, 2023.

\bibitem[AEL24]{afshari2024cyclic}
Bahareh Afshari, Sebastian Enqvist, and Graham~E Leigh.
\newblock Cyclic proofs for the first-order $\mu$-calculus.
\newblock {\em Logic Journal of the IGPL}, 32(1):1--34, 2024.

\bibitem[AP13]{abel2013wellfounded}
Andreas~M Abel and Brigitte Pientka.
\newblock Wellfounded recursion with copatterns: A unified approach to termination and productivity.
\newblock {\em ACM SIGPLAN Notices}, 48(9):185--196, 2013.

\bibitem[Bae12]{baelde2012least}
David Baelde.
\newblock Least and greatest fixed points in linear logic.
\newblock {\em ACM Transactions on Computational Logic (TOCL)}, 13(1):1--44, 2012.

\bibitem[BDKS22]{baelde2022bouncing}
David Baelde, Amina Doumane, Denis Kuperberg, and Alexis Saurin.
\newblock Bouncing threads for circular and non-wellfounded proofs: Towards compositionality with circular proofs.
\newblock In {\em Proceedings of the 37th Annual ACM/IEEE Symposium on Logic in Computer Science}, LICS '22, New York, NY, USA, 2022. Association for Computing Machinery.
\newblock \href {https://doi.org/10.1145/3531130.3533375} {\path{doi:10.1145/3531130.3533375}}.

\bibitem[BDS16]{baelde2016infinitary}
David Baelde, Amina Doumane, and Alexis Saurin.
\newblock Infinitary proof theory: the multiplicative additive case.
\newblock In {\em 25th EACSL Annual Conference on Computer Science Logic (CSL 2016)}, pages 42--1. Schloss Dagstuhl--Leibniz-Zentrum f{\"u}r Informatik, 2016.

\bibitem[BS11]{brotherston2011sequent}
James Brotherston and Alex Simpson.
\newblock Sequent calculi for induction and infinite descent.
\newblock {\em Journal of Logic and Computation}, 21(6):1177--1216, 2011.

\bibitem[BT17]{berardi2017equivalence}
Stefano Berardi and Makoto Tatsuta.
\newblock Equivalence of inductive definitions and cyclic proofs under arithmetic.
\newblock In {\em 2017 32nd Annual ACM/IEEE Symposium on Logic in Computer Science (LICS)}, pages 1--12. IEEE, 2017.

\bibitem[BT19]{berardi2019classical}
Stefano Berardi and Makoto Tatsuta.
\newblock Classical system of {M}artin-{L}\"{o}f's inductive definitions is not equivalent to cyclic proofs.
\newblock {\em Logical Methods in Computer Science}, 15, 2019.

\bibitem[CD23]{curzi2023computational}
Gianluca Curzi and Anupam Das.
\newblock Computational expressivity of (circular) proofs with fixed points.
\newblock In {\em 2023 38th Annual ACM/IEEE Symposium on Logic in Computer Science (LICS)}, pages 1--13. IEEE, 2023.

\bibitem[Cla09]{clairambault2009least}
Pierre Clairambault.
\newblock Least and greatest fixpoints in game semantics.
\newblock In Luca de~Alfaro, editor, {\em Foundations of Software Science and Computational Structures}, pages 16--31, Berlin, Heidelberg, 2009. Springer Berlin Heidelberg.

\bibitem[Das20a]{das2020circular}
Anupam Das.
\newblock A circular version of {G}\"{o}del's {T} and its abstraction complexity.
\newblock {\em arXiv preprint arXiv:2012.14421}, 2020.

\bibitem[Das20b]{das2020logical}
Anupam Das.
\newblock On the logical complexity of cyclic arithmetic.
\newblock {\em Logical Methods in Computer Science}, 16, 2020.

\bibitem[DG02]{dam2002mu}
Mads Dam and Dilian Gurov.
\newblock $\mu$-calculus with explicit points and approximations.
\newblock {\em Journal of Logic and Computation}, 12(2):255--269, 2002.

\bibitem[DM23]{das2023cyclic}
Anupam Das and Lukas Melgaard.
\newblock Cyclic proofs for arithmetical inductive definitions.
\newblock {\em arXiv preprint arXiv:2306.08535}, 2023.

\bibitem[EJS25]{ehrhard2025denotation}
Thomas Ehrhard, Farzad Jafarrahmani, and Alexis Saurin.
\newblock On the denotation of circular and non-wellfounded proofs in linear logic with fixed points.
\newblock In {\em 2025 40th Annual ACM/IEEE Symposium on Logic in Computer Science (LICS)}, pages 84--97. IEEE, 2025.

\bibitem[FS13]{fortier2013cuts}
J{\'e}r{\^o}me Fortier and Luigi Santocanale.
\newblock Cuts for circular proofs: semantics and cut-elimination.
\newblock In {\em Computer Science Logic 2013 (CSL 2013)}, pages 248--262. Schloss Dagstuhl--Leibniz-Zentrum fuer Informatik, 2013.

\bibitem[Jac99]{jacobs1999categorical}
Bart Jacobs.
\newblock {\em Categorical logic and type theory}, volume 141.
\newblock Elsevier, 1999.

\bibitem[Kna28]{knaster1928theoreme}
Bronis{\l}aw Knaster.
\newblock Un theoreme sur les functions d'ensembles.
\newblock {\em Ann. Soc. Polon. Math.}, 6:133--134, 1928.

\bibitem[Koz83]{kozen1983results}
Dexter Kozen.
\newblock Results on the propositional $\mu$-calculus.
\newblock {\em Theoretical computer science}, 27(3):333--354, 1983.

\bibitem[KS17]{kozen2017practical}
Dexter Kozen and Alexandra Silva.
\newblock Practical coinduction.
\newblock {\em Mathematical Structures in Computer Science}, 27(7):1132--1152, 2017.

\bibitem[NW96]{niwinski1996games}
Damian Niwi{\'n}ski and Igor Walukiewicz.
\newblock Games for the $\mu$-calculus.
\newblock {\em Theoretical Computer Science}, 163(1-2):99--116, 1996.

\bibitem[San02]{santocanale2002calculus}
Luigi Santocanale.
\newblock A calculus of circular proofs and its categorical semantics.
\newblock In {\em International Conference on Foundations of Software Science and Computation Structures}, pages 357--371. Springer, 2002.

\bibitem[Sau23]{saurin2023linear}
Alexis Saurin.
\newblock A linear perspective on cut-elimination for non-wellfounded sequent calculi with least and greatest fixed-points.
\newblock In {\em International Conference on Automated Reasoning with Analytic Tableaux and Related Methods}, pages 203--222. Springer, 2023.

\bibitem[SD03a]{sprenger2003global}
Christoph Sprenger and Mads Dam.
\newblock On global induction mechanisms in a $\mu$-calculus with explicit approximations.
\newblock {\em RAIRO-Theoretical Informatics and Applications}, 37(4):365--391, 2003.

\bibitem[SD03b]{sprenger2003structure}
Christoph Sprenger and Mads Dam.
\newblock On the structure of inductive reasoning: Circular and tree-shaped proofs in the $\mu$calculus.
\newblock In {\em International Conference on Foundations of Software Science and Computation Structures}, pages 425--440. Springer, 2003.

\bibitem[Sha14]{shamkanov2014circular}
Daniyar~Salkarbekovich Shamkanov.
\newblock Circular proofs for the {G}{\"o}del-{L}{\"o}b provability logic.
\newblock {\em Mathematical Notes}, 96:575--585, 2014.

\bibitem[Sim17]{simpson2017cyclic}
Alex Simpson.
\newblock Cyclic arithmetic is equivalent to {P}eano arithmetic.
\newblock In {\em International Conference on Foundations of Software Science and Computation Structures}, pages 283--300. Springer, 2017.

\bibitem[SS21]{savateev2021non}
Yury Savateev and Daniyar Shamkanov.
\newblock Non-well-founded proofs for the {G}rzegorczyk modal logic.
\newblock {\em The Review of Symbolic Logic}, 14(1):22--50, 2021.

\bibitem[SU06]{sorensen2006lectures}
Morten~Heine S{\o}rensen and Pawel Urzyczyn.
\newblock {\em Lectures on the Curry-Howard isomorphism}, volume 149.
\newblock Elsevier, 2006.

\bibitem[Tar55]{tarski1955lattice}
Alfred Tarski.
\newblock A lattice-theoretical fixpoint theorem and its applications.
\newblock {\em Pacific J. Math}, 5:285--309, 1955.

\bibitem[Wal00]{walukiewicz2000completeness}
Igor Walukiewicz.
\newblock Completeness of {K}ozen's axiomatisation of the propositional $\mu$-calculus.
\newblock {\em Information and Computation}, 157(1-2):142--182, 2000.

\end{thebibliography}

\end{document}